\documentclass{article}

\usepackage{amsmath, amsthm, amscd, amsfonts, amssymb, graphicx, color}
\usepackage[bookmarksnumbered, colorlinks, plainpages]{hyperref}
\usepackage{pspicture,pst-all,multido}
\usepackage{slashed}

\usepackage{tikz}
\usetikzlibrary{calc}
\usepackage{float}

\definecolor{surfacecolor}{HTML}{222222}
\definecolor{deltacolor}{HTML}{2B9B52}
\definecolor{deltaprimecolor}{HTML}{D08A16}
\definecolor{causalcolor}{HTML}{B44747}
\definecolor{causallabel}{HTML}{9F3D3D}
\definecolor{backgroundcolor}{HTML}{FBFBFA}
\theoremstyle{definition}

\theoremstyle{remark}

\numberwithin{equation}{section}

\newcommand{\SMALL}{\scriptsize}

\newcommand{\C}{\mathbb{C}}
\newcommand{\R}{\mathbb{R}}
\newcommand{\N}{\mathbb{N}}

\newcommand{\norm}[1]{\parallel\!\!#1\!\!\parallel}
\renewcommand{\d}{\operatorname{d}}
\newcommand{\e}{\operatorname{e}}
\renewcommand{\i}{\operatorname{i}}

\newcounter{envcount}%

{\vspace{\bigskipamount}\refstepcounter{envcount}\textbf{(\theenvcount)\enspace Asymptotic causality.}}%
  {\vspace{\bigskipamount}}

{\vspace{\bigskipamount}\refstepcounter{envcount}\textbf{(\theenvcount)\enspace Alternative proof of local Lipschitzness  of $\sigma$ and $\mathfrak{h}$ and openness of the domain $U$.}}%
  {\vspace{\bigskipamount}}

\newenvironment{Def}%
{\vspace{\bigskipamount}\refstepcounter{envcount}\textbf{(\theenvcount)\enspace Definition.}}%
  {\vspace{\bigskipamount}}

{\vspace{\bigskipamount}\refstepcounter{envcount}\textbf{(\theenvcount)\enspace Example.}}%
  {\vspace{\bigskipamount}}

{\vspace{\bigskipamount}\refstepcounter{envcount}\textbf{(\theenvcount)\enspace Question.}}%
  {\vspace{\bigskipamount}}

{\vspace{\bigskipamount}\refstepcounter{envcount}\textbf{(\theenvcount)\enspace Conjecture.}}%
  {\vspace{\bigskipamount}}

{\vspace{\bigskipamount}\refstepcounter{envcount}\textbf{(\theenvcount)\enspace Remarks.}}%
  {\vspace{\bigskipamount}}

\newenvironment{Rem}%
{\vspace{\bigskipamount}\refstepcounter{envcount}\textbf{(\theenvcount)\enspace Remark.}}%
  {\vspace{\bigskipamount}}

{\vspace{\bigskipamount}\refstepcounter{envcount}\textbf{(\theenvcount)\enspace Estimations.}}%
  {\vspace{\bigskipamount}}

\newenvironment{P7T}%
{\vspace{\bigskipamount}\refstepcounter{envcount}\textbf{(\theenvcount)\enspace Proof of (\ref{MTPVAL}) Theorem.}}%
  {\vspace{\bigskipamount}}

  \newenvironment{P8T}%
{\vspace{\bigskipamount}\refstepcounter{envcount}\textbf{(\theenvcount)\enspace Proof of (\ref{PLFPOL})  Theorem.}}%
  {\vspace{\bigskipamount}}

{\vspace{\bigskipamount}\refstepcounter{envcount}\textbf{(\theenvcount)\enspace Proof of Estimate} }%
  {\vspace{\bigskipamount}}

\newenvironment{The}%
{\vspace{\bigskipamount}\refstepcounter{envcount}\textbf{(\theenvcount)\enspace Theorem.}\itshape}%
  {\vspace{\bigskipamount}\upshape}

{\vspace{\bigskipamount}\refstepcounter{envcount}\textbf{(\theenvcount)\enspace Theorem}\itshape}%
  {\vspace{\bigskipamount}\upshape}
  
\newenvironment{Pro}%
{\vspace{\bigskipamount}\refstepcounter{envcount}\textbf{(\theenvcount)\enspace Proposition.}\itshape}%
  {\vspace{\bigskipamount}\upshape}

\newenvironment{Cor}%
{\vspace{\bigskipamount}\refstepcounter{envcount}\textbf{(\theenvcount)\enspace Corollary.}\itshape}%
  {\vspace{\bigskipamount}\upshape}

\newenvironment{Lem}%
{\vspace{\bigskipamount}\refstepcounter{envcount}\textbf{(\theenvcount)\enspace Lemma.}\itshape}%
  {\vspace{\bigskipamount}\upshape}

{\vspace{\bigskipamount}\refstepcounter{envcount}\textbf{(\theenvcount)\enspace Suffix.}\itshape}%
  {\vspace{\bigskipamount}\upshape}

{\vspace{\bigskipamount}\refstepcounter{envcount}\textbf{(\theenvcount)\enspace Lorentz contraction.}\itshape}%
  {\vspace{\bigskipamount}\upshape}

{\vspace{\bigskipamount}\refstepcounter{envcount}\textbf{(\theenvcount)\enspace Example.}\itshape}%
  {\vspace{\bigskipamount}\upshape}

\theoremstyle{definition}
\swapnumbers

\begin{document}
\setcounter{page}{1}
\pagenumbering{arabic}

\vspace{3cm}

\begin{center}

{\Large  Representation of the causal logic  for the Dirac system and the electron}

\vspace{0.5cm}
Domenico P.L. Castrigiano$^{a1}$, Carmine De Rosa$^{b2}$ (*),  Valter Moretti$^{b3}$\

\vspace{0.2cm}

$^a$Technische Universit\"at M\"unchen, Fakult\"at f\"ur Mathematik, M\"unchen, Germany\\ 
$^b$Dipartimento di Matematica, Universit\`a di Trento and TIFPA-INFN, Trento, Italy\\
(*) = corresponding author
\smallskip

{\it E-mail addresses}: {\tt   $^1$castrig\,\textrm{@}\,ma.tum.de, \: $^2$carmine.derosa@unitn.it\: $^3$valter.moretti@unitn.it}
\end{center}

\begin{abstract}
We construct and analyze a covariant representation of the causal logic for the Dirac system and for the electron. The construction is based on the conserved Dirac probability-density current and on the general current-to-localization procedure for achronal regions. A polynomial decay estimate for the current, obtained by a non-stationary phase argument, verifies the hypotheses needed to define a covariant achronal localization for the full Dirac system. The corresponding representation of the causal logic is then obtained by passing from achronal localization to complete spacetime regions. Restricting this structure to the positive-energy invariant subspace yields the covariant achronal localization and the associated causal-logic representation for the electron.

Several structural properties of these localization observables are established. On Euclidean space the Dirac localization coincides with the canonical projection-valued localization, while its position operator differs from the Newton--Wigner operator by an explicitly determined bounded self-adjoint correction. The full Dirac achronal localization is shown to be projection valued and to represent achronal separateness by orthogonality, hence satisfying the corresponding microscopic causality condition. After compression to the electron subspace, the localization becomes positive-operator valued but still satisfies the causality condition. We also extend known localization properties from spatial regions to general achronal regions, prove separation results and norm-one criteria, and study the high-boost limit, obtaining Lorentz contraction in a precise probabilistic sense. Finally, we discuss the electron--positron decomposition of the Dirac system and the state transformation induced by position measurements.  The result on the positron production caused by a measurement of the position of the electron is considered to be universally valid as it does not depend on peculiarities of the measuring device.  
\end{abstract}

\section{Introduction}
The symmetry group of Minkowski spacetime is the Poincar\'e group\footnote{Dilations are ruled out by fixing the scales for length and time.}.  Their (continuous unitary Hilbert space) representations  determine the relativistic quantum mechanical systems and furnish explicit expressions for the kinematic observables energy, linear momentum, angular momentum and Lorentz booster. The crucial position observable is missing. This gives rise to  a long-standing
problem, the localization of relativistic quantum particles. Mainly there are the requirements of causality with which localization has to comply, and which we like to discuss in the following. Again Minkowski spacetime furnishes the structure which unambiguously states the localization that meets completely  the causality conditions, namely  a Poincar\'e covariant representation of the  so-called causal logic  as discussed below. 
\\
\hspace*{6mm} 
{\bf Newton-Wigner position.}  The {\em Newton--Wigner} construction provides (under some regularity hypotheses) a  distinguished
position observable for  massive and massless helicity zero elementary quantum relativistic systems \cite{NW49}.  Wightman's seminal formulation  of {\it particle  localization}  \cite{W62}   
furnishes  a rigorous mathematical frame for Newton-Wigner's construction.
 The corresponding Newton-Wigner  localization (NWL) assigns to every region (Borel set) $\Delta$ of Euclidean space  a {\bf localization operator} $T(\Delta)$ acting on the Hilbert space $\mathcal{H}$ of states. It is an orthogonal projector, whose  expectation value $\langle \psi, T(\Delta) \psi\rangle$
 is regarded as the probability of localization  in $\Delta$ of the system in the state represented by the unit vector  $\psi \in \mathcal{H}$. Accordingly the localization operators 
 form a Euclidean covariant orthogonal projector valued normalized measure (PVM) on Euclidean space $T$. 
 By  Poincar\'e covariance $T$ extends uniquely  to all flat spacetime regions, i.e. Borel subsets of  spacelike hyperplanes of Minkowski spacetime \cite{CM82}, \cite[sec.\,3.1]{C17}.
\\
\hspace*{6mm}
{\bf NWL's failure.} Unfortunately there is a serious causality  issue with NWL known from the start. NWL localizes {\it frame-dependently}  \cite{NW49}, \cite{WM63}.  This means that for every state of the system there is at most one spacelike hyperplane, where the system is localized in a bounded region. It implies 
as a special case  the phenomenon of {\it instantaneous spreading}  \cite{WS55},  \cite{WM63}. In fact a wave-function, which  vanishes outside a bounded region, any instant later extends up to infinity.
\\
\hspace*{6mm}
{\bf Hegerfeldt's insight.}
As worked out by Hegerfeldt \cite{HR80,H85,H01}, it is 
the positivity of energy  which brings about  that a relativistic system initially  localized in a bounded region would jump instantaneously everywhere.  This acausal behavior rules out any localization based on a Euclidean covariant PVM  $T$ as there is a plenty of states with the system  localized in a bounded region.
\\
\hspace*{6mm}
It should be stressed that Hegerfeldt-type arguments do not have a
single universal interpretation.  In particular, in the context of
{\em Fermi's two-atom problem}, Buchholz and Yngvason showed that the
apparent causality paradox should disappear in a proper local
quantum-field-theoretic description of the experiment \cite{BY94} based on local algebra formalization; see
also \cite{H94}.  This point is important for separating different
issues. 
\\
\hspace*{6mm}
 The present paper is not concerned with causality of local
field observables in a full quantum field theory, but with the causal
behaviour of localization observables for relativistic quantum
mechanical systems, mainly single particles.  It is in this latter sense that the Hegerfeldt
issue  is relevant here.
\\
\hspace*{6mm}
This distinction is also emphasized in the algebraic approach to local
quantum field theory.  As discussed by Schroer \cite{S11}, causal
localization of observables is encoded in local algebras and in
properties such as primitive causality, the time-slice property and,
in local form, the causal shadow property.  These properties do not
impose direct restrictions on individual localization operators of
particles; rather, they concern the way in which the algebra of
observables associated with a spacetime region is determined by its
causal completion.  In this framework the Born--Newton--Wigner
localization of particles and the algebraic causal localization of
observables have different meanings and behave antagonistically at
finite times, while they are expected to coexist harmoniously only in
the timelike asymptotic regime relevant for scattering theory.  The
present paper does not address this field-theoretic reconciliation as instead proposed by one of the authors  in \cite{M26a},\,\cite{M26b}.  It
is concerned instead with the quantum-mechanical side of the problem,
namely with causality properties of localization observables for
relativistic particle systems.
\\
\hspace*{6mm}
{\bf Positive operator valued localization.}
Due to Hegerfeldt's insight one has to renounce  the possibility of  states  with the system  localized in a bounded region. Instead one
 considers a {\it regional probability of location}. It comes down to assume the localization operators $T(\Delta)$ to be more generally nonnegative operators 
 with $0\le T(\Delta)\le I$. Hence $T$ is  a positive operator valued normalized measure (POVM). In the literature  this kind of localization is also called an {\it unsharp localization}, see \cite{BGL97} and a brief review in \cite[sec.\,6]{C17}.
 Again $T$ extends in a Poincar\'e covariant manner uniquely to all
 flat spacetime regions \cite[(9) Theorem]{C17}.  
 \\
\hspace*{6mm}
 By the way, contrary to the NWL,  localizations  for {\it massless} particles based on   POVM  do exist \cite{C81}.
 Of course, instantaneous spreading still occurs  in case of a POV-localization if $T(\Delta)$ has  the eigenvalue $1$  for a bounded region $\Delta$.  
  Otherwise the probability of localization in a bounded region is  less than $1$ and no measurement of position can change that. However, physically relevant is its upper bound, which coincides with $||T(\Delta)||$. If  and only if $||T(\Delta)||=1$  the system can be localized within $\Delta$ by a suitable preparation, not strictly but as accurately as desired. $T$ is said to be {\bf separated} if  $||T(\Delta)||=1$ for all open $\Delta$.
 \\
\hspace*{6mm} 
{\bf Causality.}  It imposes much more than no eigenvalue $1$ of $T(\Delta)$ for a bounded region $\Delta$.
 Einstein causality in the most direct interpretation states  quite generally that the probability of localization in {{\em any region of influence}} $\Delta'$ determined by the limiting velocity of light
  (sec.\,\ref{CCMC} and fig.\,\ref{fig:region-of-influence}) is not less than that in the region $\Delta$ of actual localization
\begin{equation*}
T(\Delta)\le T(\Delta')  \tag{CC}
\end{equation*}
see \cite[sec.\,4]{C24}. As an example let $\Delta$ be a flat spacelike region and $\sigma$ a spacelike hyperplane.  The region of influence  of $\Delta$ in $\sigma$, denoted by $\Delta_\sigma$, is the set of points 
 in $\sigma$, which are not spacelike separated  from 
 $\Delta$.  Then CC imposes $T(\Delta)\le T(\Delta_\sigma)$. For more details see \cite[sec.\,7]{C17}.
\\
\hspace*{6mm}
The causality condition CC guarantees frame-independent localization, cf.\,\cite[(15) and ff.]{C17}. CC implies the  familiar {\it causal time evolution} by the special case that $\Delta$ and $\sigma$ are parallel. Hence
obviously it excludes instantaneous spreading. For more details see sec.\,\ref{CCMC}.
   POVM on flat spacelike regions which  satisfy CC are already known for the Dirac and Weyl fermions \cite{C17} and for the massive scalar boson \cite{C24}. The former localizations are separated \cite[sec.\,18.2, (100)]{C17}. Regarding the massive scalar boson this is not known \cite[sec.\,18]{C23}.
\\
\hspace*{6mm}
{\bf Achronal Localization} (AL). So far only flat spacelike regions are considered for localization. However this restriction is not physically justified. In \cite[sec.\,3]{C24},  on plain physical grounds  it is argued  in great detail  that localization has to concern all achronal regions of spacetime. Particularly the existence of the high boost limit, as a consequence of CC, induces the localization in  {\em not spacelike}  achronal hyperplanes. See \cite[sec.\,26]{C17} regarding the Dirac and Weyl fermions and \cite{C25} dedicated to the massive scalar boson.  Here in  sec.\,\ref{HBLALLA} a rather easy approach to the high boost limit is possible. Its existence implies the phenomenon of Lorentz contraction.
\\
\hspace*{6mm} 
This basic insight on achronal regions is decisive for the further development. An AL is a POVM on every maximal achronal set.  Most important,  AL automatically  satisfy CC \cite[(16) Theorem]{C24}. This is a consequence of the normalization of AL. See sec.\,\ref{NN} and for more details \cite[sec.\,2]{C24}. 
\\
\hspace*{6mm} 
{\bf Representation of the causal logic} (RCL). Minkowski spacetime is provided with the causal logic $\mathcal{C}$  introduced by Cegla and Jadczyk \cite{CJ77},  which  is the lattice  of  spacetime Borel sets generated by an orthogonality relation, the {\it achronal separateness}.  It is the set of all complete Borel sets. For details see sec.\,\ref{NN}. This structure is regarded to constitute the frame which complies most completely with  the principle of causality for massive quantum mechanical systems. Heuristically, a massive particle is imagined to be represented by a timelike straight world line. It  is considered to be localized in a spacetime region in $\mathcal{C}$ if its world line crosses the region. According to this simple heuristic picture, an RCL, which is a normalized $\sigma$-orthoadditive POVM  on $\mathcal{C}$, is a localization that meets all the causality conditions.
\\
\hspace*{6mm} 
In \cite{C24} it is shown that (covariant) AL and (covariant) RCL are in one-to-one
correspondence.  In concrete terms, if \(T\) is an AL and \(F\) is the
corresponding RCL, then
\[
        F(\Delta^\land)=T(\Delta)
\]
for every achronal Borel set \(\Delta\).  Here $\Delta^\land$ denotes the completion of $\Delta$ with respect to achronal separateness. Consequently, the
construction of a (covariant) AL is simultaneously the construction of a
(covariant) RCL.  
\\
\hspace*{6mm}
Let us point out that, in determining $F$ out of $T$, the fact is decisive that the completion of an achronal set coincides with its set of determinacy (\ref{AADAC}). 
\\
\hspace*{6mm}
For details on AL and RCL  see sec.\,\ref{NN} and \cite{C24}.
\\
\hspace*{6mm}
 {\bf Causal current.} The localization  of the massive scalar boson is ultimately extended to an AL  in \cite{CDM25}. Here we will achieve the AL of the (Dirac) electron (\ref{EMTDC}).
 Moreover, covariant AL are constructed for the systems with mass spectrum of positive Lebesgue measure and  every definite spin \cite[(27) Theorem]{C24}, which plainly are not elementary and are not  further considered here.
\\
\hspace*{6mm} 
There are two decisive works which enable the construction of (convariant) AL, \cite{GGP67} and \cite{DM24}. Petzold et al. recognized that the zeroth component of the standard real bilinear covariant conserved current for the massive scalar boson is indefinite and hence inadequate to represent a charged densitiy. So in \cite{GGP67} they reveal a basic one-parameter family of such currents with positive zeroth component called {\bf causal currents} for short. A causal current  describes a conserved probability density of localization.
For more details see \cite{C25}. The particular  merit of \cite{GGP67}  is to move  to the center the {\it fundamental concept} of a covariant conserved probability density current in place of a system of imprimitivity and its traces (compressions) on representation invariant subspaces. 
\\
\hspace*{6mm} 
 It was showed in \cite{DM24}  how to compute the flux of  a $C^1$\,causal current  passing through a $C^1$ surface and construct  the related  localization operator. (Actually that work assumes smoothness, but trivially $C^1$ is sufficient.)  In \cite{DM24} they obtain out of a causal current  a {\it coherent} 
POV-localization for the massive scalar boson  on spacelike $C^1$ Cauchy surfaces, which satisfy CC as a consequence of the normalization. Note that a Cauchy surface, which is a spacelike set, is maximal acronal. 
\\
\hspace*{6mm} 
In addition  a covariant family of spatial POV-localizations satisfying CC  for massive real scalar
Klein--Gordon particles, presented  in \cite{M23} out  of the stress energy tensor and improving a previous idea by Terno \cite{T14}, are extended to a covariant family of coherent POV-localizations  on spacelike $C^1$ Cauchy surfaces.
\\
\hspace*{6mm} 
{The results in \cite{DM24} were achieved by a suitable use of the divergence theorem for volume forms. It was done by  taking advantage of relatively recent advanced  results in Lorentzian geometry  concerning the extension of acausal manifolds with boundary to  spacelike smooth Cauchy surfaces.
\\
\hspace*{6mm} 
Now the construction of an achronal localization goes  still along the lines of the conserved-current  construction of the Cauchy localization in \cite{DM24}. It takes also  advantage of some sophisticated results of geometric measure theory.  In the last work  \cite{CDM25} all these ideas and mathematical techniques converge and a rather general result on the construction of covariant AL is achieved. 
\\
\hspace*{6mm}
 It is worth noting that a causal current generates quite naturally a nonnegative operator valued AL rather than a  projector-valued one. This matches perfectly the meaning of a POV-localization as a description of a regional probability  of location. Following this point of view the question does not arise whether the localization operators are observables, but the  question is whether  the flux of a causal current is observable.
 
  \subsection*{\bf Achievements  of this work}
$\bullet$\; The main objective, a covariant RCL for the electron is reached, see sec.\,\ref{CRCLDSE}. To this the corresponding AL  for the electron is determined applying  \cite[(19)]{CDM25}  to the well-known conserved Dirac probability density current. The polynomial decay of the current assumed in \cite[(19)]{CDM25} is shown to hold in sec.\,\ref{IMRC} by the non-stationary phase method. 
\\
$\bullet$\;  Actually one obtains in sec.\,\ref{CALDSE} a covariant  AL $T^\textsc{d}$ and a covariant RCL $F^\textsc{d}$ of the Dirac system $W^\textsc{d}$. It is the trace  $T^e$ of $T^\textsc{d}$ on the $W^\textsc{d}$-invariant subspace  of positive energy, which is the covariant AL of the electron. Analogously one obtains the covariant RCL $F^e$ of the electron.
\\
\hspace*{6mm}
 The following results concern   several  properties of these localizations treated in sec.\,\ref{PALDS}.
 \\
 $\bullet$\;  On the Euclidean space the localization
$T^\textsc d$ coincides with the canonical  PV-localization.  The
corresponding position operator $X^\textsc d$ is the multiplication operator with the coordinate vector $x\in\R^3$. In momentum 
representation it equals $\i\nabla$. In sec.\,\ref{DNWPO} one finds that the Newton--Wigner position operator $X^\textsc{nw}$, distinguished by commuting with the sign of energy,  differs from $X^\textsc d$
by a bounded selfadjoint operator $\i F$
\[
        X^\textsc{nw}=X^\textsc d+iF 
\]
In momentum representation $\i F$ is the multiplication operator by an explicitly given selfadjoint matrix function. A rough
analysis suggests that  the expectation values of $X^\textsc d$ and $X^\textsc{nw}$  with respect to electron states presumably are  experimentally
indiscernible.
\\
$\bullet$\;  In sec.\,\ref{POVDL} the achronal localization $T^\textsc d$ of the Dirac system is shown to be
projection valued.  This is obtained from a general result: if an AL is
projection valued on flat spacelike Borel sets and vanishes on sets
whose spatial projection has zero Lebesgue measure, then it is
projection valued. 
\\
$\bullet$\;
Achronal separateness and, a fortiori, spacelike separateness   is represented by orthogonality of the
localization projectors.  As noted  in sec.\,\ref{CCMC} the AL $T^\textsc{d}$ and the associated RCL $F^\textsc d$ therefore
satisfies local orthogonality and, in particular, the corresponding
microscopic causality.  
\\
\hspace*{6mm}
 These properties get lost at the compression of $T^\textsc{d}$ on the electron subspace. In
particular $ T^e$ of  is no longer projector valued. Of course,  the AL  $T^e$ still satisfies  CC.
\\
$\bullet$\;  The AL $T^e$ of the electron
is well studied  on Euclidean space, whereas 
$T^e$ on  achronal not flat regions is  uninvestigated.  A  spacetime region $\Delta$ is said to be {\em determining} if  there is  $\mathfrak{x} \not\in \Delta$ such that every timelike  straight line through $\mathfrak{x}$  meets $\Delta$. As shown a spacelike region is determining if and only if it has an interior point.
In sec.\,\ref{LPTE} two main results regarding flat spacelike regions 
are generalized to general achronal regions.
\\
\hspace*{6mm}
First, no electron state is
localized in a proper closed spatial region.  The generalization reads: No electron is localized in an achronal region for which there is an achronally separated  one, which is determining. In particular, non electron is localized in a bounded achronal region.
\\
\hspace*{6mm}
Second,  if a spatial 
region  has nonempty interior, then the  norm of the
localization operator is one.  The generalization reads: If an achronal region is
determining, then the norm of the
localization operator is one. In particular this holds true for a spacelike region with an interior point.
\\
\hspace*{6mm}
These result are rather satisfactory from a physical point of view as $||T^e(\Delta)||=1$ implies that there are  electron 
states with the localization probability in $\Delta$ arbitrarily close to one,  even if 
a localization in $\Delta$ is not possible, as e.g. for bounded $\Delta$.
\\
\hspace*{6mm}
Related to these results we just mention that $T^e$ is shown to be separated on every {\em causal base} -- namely a spacelike Cauchy surface 
(see  sec.\,\ref{NN}). --   and that for every point of the base an explicit 
point localized sequence of states is available.
\\
\hspace*{6mm}
Finally the determining subsets of  a light hyperplane are characterized as those containing an open achronal paraboloid.
\\
$\bullet$\; We study in sec.\,\ref{HBLALLA} the high boost limit and Lorentz contraction.  Spacelike
hyperplanes tend, under boosts of increasing rapidity, to light
hyperplanes.  We prove the corresponding convergence of localization
probabilities for the Dirac AL, in the spirit of the scalar-boson
analysis on achronal hyperplanes in \cite{C25}.  As a consequence,
both the full Dirac system and the electron exhibit Lorentz contraction
in the precise probabilistic sense that, for sufficiently large
boosts, the probability of finding the particle in an arbitrarily
narrow strip orthogonal to the boost direction tends to one. This phenomenon is discussed along the usual questions regarding classical Lorentz contraction.
\\
$\bullet$\; Finally we elaborate in sec.\,\ref{EPR} the idea that the Dirac system is constituted by the electron and the  positron. The positron subrep of the Dirac system is an electron rep transformed by the antiunitary time inversion operator. Hence the energy is positive. A state of the Dirac system, which is a nontrivial  linear combination of an electron and positron state, is regarded to be  a {\it non-observable virtual superposition} of these states. The attempt to realize this state by preparation creates the  respective {\em mixed state}. This is the mechanism by which a proton is created at the attempt to localize an electron in a bounded spacetime region.
\\
\hspace*{6mm}
The probabilities for the existence of an electron and  of a positron due to a position measurement are computed, and the density operator after the measurement  is determined. The results concern a {\em non-selective measurement}. They are {\em universally valid} in the sense that they hold true as long as no information gained from the actual measuring equipment is taken into account.
\\
$\bullet$\; The appendix contains the
technical results on unions of Lipschitz graphs, which are interesting in their own right,
pyramidal
approximations and the general criterion ensuring projection
operator-valuedness.

\section{Notations and notions}\label{NN}

{\bf Schwartz spaces}.
The {\bf support} $\overline{\{f\ne 0\}}$ of a continuous  function  $f$ is  denoted by  $\operatorname{supp}(f)$.
The symbol $\mathcal{S}(\R^n,\C^m)$ indicates the $\C^n$-{\bf Schwartz space}, i.e.   the space of $\C^n$-valued functions of rapid decrease on $\R^n$ and $C_c^\infty(\R^n,\C^m)$ the subspace of smooth function with compact support. For $m=1$ one simply writes $\mathcal{S}(\R^n)$, $C_c^\infty(\R^n)$. 
\\

{\bf Fourier transformation.}  It is given   by the formula $$(\mathcal{F}\psi)(p)=(2\pi)^{-n/2}\int_{\R^n}\e^{-i\,px}\psi(x)\d^nx\quad \mbox{where $p\in\R^n$}$$ if  $\psi: \R^n\to \C$ is (Lebesgue) integrable. Here and elsewhere $a\,b$ denotes the scalar product of $a,b\in\R^n$.
\\

 {\bf Lebesgue measure}. Occasionally the Lebesgue measure of the Borel subsets of $\R^n$ is denoted by $\mathcal{L}^n$.
\\

{\bf Minkowski spacetime}.
Minkowski spacetime is represented by $\R^4$ using the notation   $$\mathfrak{x}=(x_0,x)=(x_0,x_1,x_2,x_3)\in\R^4$$ 
Points
 in Minkowski spacetime are said {\bf events}.
When viewing the points in Minkowski spacetime as vectors defining translations, the {\bf Minkowski product} reads 
$$\mathfrak{x}\cdot \mathfrak{y}:= x_0y_0-xy= x_0y_0-x_1y_1-x_2y_2-x_3y_3\:, \quad \mbox{for $\mathfrak{x}, \mathfrak{y}\in \R^4$ }$$ 
Put $\mathfrak{x}^{\cdot 2} :=\mathfrak{x}\cdot \mathfrak{x}$. The {\bf canonical spatial projection} is $\varpi:\R^4\to \R^3$, $\varpi(\mathfrak{x}):=x$. 
 \\
 
{\bf Poincar\'e group.} $\tilde{\mathcal{P}}:=ISL(2,\C)$ denotes the {\em universal covering} of the {\em proper orthochronous Poincar\'e group} $\mathcal{P}^\uparrow_+$. 
$\tilde{\mathcal{P}}$ acts on
$\R^4$ 
 by $g\cdot \mathfrak{x}:= \mathfrak{a}+ A\cdot \mathfrak{x}=\mathfrak{a}+ \Lambda(A)\mathfrak{x}$, where $g=(\mathfrak{a},A)$ and $\Lambda$ is the covering homomorphism from  $SL(2,\C)$ onto 
the {\em proper orthochronous Lorentz group}   $\mathcal{L}^\uparrow_+$. In particular $A_\rho:=\e^{\,\rho\,\sigma_3/2}=\operatorname{diag}(\e^{\rho/2},\e^{-\rho/2})$ acts on $\R^4$ by $$\Lambda\big( A_\rho\big)=\left(\begin{array}{cccc}\cosh(\rho) & 0&0&\sinh(\rho)\\ 0&1&0&0\\0&0&1&0\\ \sinh(\rho)&0&0&\cosh(\rho)\end{array}\right)\quad  \mbox{for $\rho \in \R$.} $$
 $\tilde{\mathcal{E}}:=ISU(2)$ denotes the universal covering of the {\it Euclidean group}.
A {\bf representation} (rep) $W$ of  $\tilde{\mathcal{P}}$ and generally of a topological group is assumed to be unitary and strongly continuous. 
\\

 {\bf Achronal sets.} Two events  $\mathfrak{x},\mathfrak{y}\in\R^4$, $\mathfrak{x}\ne\mathfrak{y}$ are {\bf achronally separated} if $|x_0-y_0|\le |x-y|$. Similarly  sets $\Delta,\Gamma\subset\R^4$  are {\bf achronally separated} if 
(a) they are disjoint and (b)  $|x_0-y_0|\le |x-y|$ for  $\mathfrak{x}\in\Delta$, $\mathfrak{y}\in\Gamma$.   A set $\Delta\subset\R^4$ is called {\bf achronal} if 
$|x_0-y_0|\le |x-y|$ for $\mathfrak{x},\mathfrak{y}\in\Delta$. More specifically  it is called {\bf $L$-achronal} with $0\le L\le 1$ if $|x_0-y_0|\le L |x-y|$ for $\mathfrak{x},\mathfrak{y}\in\Delta$.
Obviously $\Delta$ is $L$-acronal if and only if it is the {\it graph} of an $L$-Lipschitz function.  
Note that every ($L$-)achronal set is contained in a {\bf maximal ($L$)-achronal} set, i.e., an ($L$)-achronal set being not a proper subset of any ($L$)-achronal set.
Note that a maximal ($L$)-achronal set is closed. Let $\mathcal{B}^{ach}$ denote the set of all Borel achronal subsets of $\R^4$.
For achronal $\Delta$
$$ \varpi|_\Delta: \Delta \to \varpi(\Delta)$$
is a homeomorphism. 
Clearly   an achronal  set $\Lambda$  is maximal if and only if $\varpi(\Lambda)=\R^3$. 
\\

{\bf Spacelike sets.}  Two distinct events  $\mathfrak{x},\mathfrak{y}\in\R^4$, $\mathfrak{x}\ne\mathfrak{y}$ are {\bf spacelike separated} if $|x_0-y_0| < |x-y|$.
  A straight line in $\R^4$ is  {\bf causal} if distinct points on it are not spacelike separated.
   A set $\Delta\subset\R^4$  is called {\bf spacelike} if $|x_0-y_0|<|x-y|$ for $\mathfrak{x},\mathfrak{y}\in\Delta$, $\mathfrak{x}\ne\mathfrak{y}$. Clearly every spacelike set is achronal.
       \\
\hspace*{6mm}    
 A spacelike set $\Sigma$ is called a  \textbf{causal base} if $\Sigma$ intersects all {\it causal}   straight lines.  By a result of Moretti in \cite[(9)]{C24}  the causal bases are exactly the {\em Cauchy surfaces} of Minkowski spacetime  which are a spacelike set.  
 \\
\hspace*{6mm}      
    Causal bases are maximal achronal. The set of causal bases is Poincar\'e invariant. The familiar spacelike hyperplanes are causal bases. Roughly speaking,  the causal bases are  those sets of independent events,   which  determine all events of Minkowski spacetime. For more details see \cite[sec.\,2.2.2]{C24}.
\\

{\bf Achronal localization} (AL). Let $T(\Delta)$ for $\Delta\in\mathcal{B}^{ach}$ be a nonnegative bounded operator on a complex separable Hilbert space $\mathcal{H}$. Suppose  $T(\emptyset)=0$ and  $\sum_nT(\Delta_n)=I$   (in the strong operator topology)  for every sequence $(\Delta_n)$ of mutually disjoint sets in $\mathcal{B}^{ach}$ such that $\bigcup_n\Delta_n$ is maximal achronal. Then the map $T$ is  called an  AL. 
If $W$ is a  representation of $\tilde{\mathcal{P}}$ on $\mathcal{H}$, then $T$ is a (W-){\bf covariant} AL when
 $$W(g)T(\Delta)W(g)^{-1}=T(g\cdot \Delta)\quad \mbox{for every $g \in \tilde{\mathcal{P}}$ and $\Delta \in\mathcal{B}^{ach}$}\:.$$
Note that $T$ is a {\bf positive operator valued measure} (POVM) on every maximal achronal set $\Lambda$. It is {\bf normalized} as $T(\Lambda)=I$.
\\

{\bf Causal logic.}    
The orthogonality  relation  in $\R^4$ 
 \begin{equation}\label{AOR}
\mathfrak{x}\perp\mathfrak{y}\quad  \Leftrightarrow\quad  \mathfrak{x}\ne\mathfrak{y} \textnormal{ and } (\mathfrak{x}-\mathfrak{y})^{\cdot 2}\le 0 \:.
\end{equation}
is called  \textbf{achronal separateness}.
Let $M\subset \R^4$. Then $M^\perp :=\{\mathfrak{x} \in \R^4 \: :\: \mathfrak{x} \perp \mathfrak{y}\: \text{ for all }\mathfrak{y}\in M  \}$ denotes  the {\bf complement}  and $ M^\land := (M^\perp)^\perp$ the  {\bf completion} of $M$ with respect to $\perp$.
$M$ is said to be  {\bf complete} if  $M =  M^\land$.
 According to \cite[Corollary 1]{CJ77}, $(\R^4,\perp)$ is a complete $D$-space, which means that every $M\subset \R^4$ which is 
  complete 
  is the completion of any maximal achronal set contained in $M$ (recall  \cite[(1)(c)]{C24}).
The set of \textbf{determinacy} of $M\subset\R^4$ with respect to $\perp$  is defined as
\begin{equation}\label{RDAR} 
M^{\sim}:=\{\mathfrak{x}:\forall\;\mathfrak{z} \textnormal{ with }\mathfrak{z}^{\cdot 2}>0\;\exists\, s\in\R \textnormal{ with } \mathfrak{x}+s\mathfrak{z}\in M\}
\end{equation}
It consists of  all points $\mathfrak{x}$ such that every timelike  straight line through $\mathfrak{x}$  meets $M$. $M\subset\R^4$ is said to be {\bf determining} if $M^\sim\setminus M \ne \emptyset$. 
  Now the decisive property is
\begin{equation}\label{AADAC}
  A \text{\,  achronal set } \Rightarrow  A^\sim=A^\land
 \end{equation}
 see   \cite[(35)]{C24}.  Finally, if $M$ is complete and $A$ maximal achronal in $M$, then $M$ is Borel if and only if $A$ is Borel  \cite[Lemma 4.1]{CJ77}.
For more details see \cite{CJ77} and   \cite[sec.\,24]{C17}.
The \textbf{causal logic} $\mathcal{C}$ is the bounded lattice of complete  {\em Borel} subsets of $\R^4$ which is partially ordered by  set inclusion $\subset$  and which admits achronal separateness 
 (\ref{AOR}) as orthocomplement operation. It is $\sigma$-complete, irreducible, orthomodular, and atomic.

{\bf Representation of the causal logic} (RCL).  Let $F(M)$ for  $M\in \mathcal{C}$  be a nonnegative bounded operator on $\mathcal{H}$. Suppose $F(\emptyset)=0$, 
and $\sum_nF(M_n)=I$  (in the strong operator topology) for every sequence $(M_n)$ of mutually orthogonal sets in $\mathcal{C}$ 
such that 
$\bigvee_nM_n=\R^4$. Then the map $F$ is  called an RCL. If $W$ is a representation of $\tilde{\mathcal{P}}$ on $\mathcal{H}$, then $T$ is a (W-){\bf covariant} RCL when
\begin{equation*}\label{PCCLL}
W(g)F(M)W(g)^{-1} = F(g\cdot M) \quad \mbox{for every $g \in \tilde{\mathcal{P}}$ and $M \in\mathcal{C}$}
\end{equation*}  

 For some properties of RCL see  appendix (\ref{PRCL}).

\section{Conserved probability density current for the Dirac system}

The following recalls briefly more or less well-known facts about the Dirac system.  For more details see \cite[sec.\,13-19]{C17}.

\textbf{Position space representation.} The Dirac system $W^\textsc{d}$ is  the direct sum  of  two irreducible  rep of $\tilde{\mathcal{P}}$ with mass $m>0$,  spin $\frac{1}{2}$,  and respectively positive and negative energy. It is a  particular representative

$$W^\textsc{d}\in[m,\text{\SMALL{$\frac{1}{2}$}},+]\oplus [m,\text{\SMALL{$\frac{1}{2}$}},-]$$

of the unitary equivalence class $[m,\text{\SMALL{$\frac{1}{2}$}},+]\oplus [m,\text{\SMALL{$\frac{1}{2}$}},-]$.
 In position space  $L^2(\R^3,\C^4)$  it reads 
 \begin{equation}\label{DSRPS}
 \big(W^\textsc{d}(g)\psi\big)(x)= s(A) \big(\operatorname{e}^{-\operatorname{i}y_0 H}\psi\big)(y)
\end{equation}
where $g=(\mathfrak{a},A)\in\tilde{\mathcal{P}}$, $(y_0,y):=g^{-1}\cdot (0,x)=A^{-1}\cdot (-a_0,x-a)$, and $H=\sum_{k=1}^3\alpha_k\,\frac{1}{\operatorname{i}} \partial_k+ \beta m$. We use the Weyl representation  
 $$\beta =\left( \begin{array}{cc} 0 & I_2\\ I_2 & 0 \end{array}\right),\quad \alpha_k =\left( \begin{array}{cc}  \sigma_k & 0\\ 0 & -\sigma_k \end{array}\right),\quad k=1,2,3 $$ 
with $\sigma_k$   the Pauli matrices, and $s(A)=\operatorname{diag}(A,A^{*-1})$. Moreover, $\sigma_0:=I_2$ and $\alpha_0:=I_4$.\\
\hspace*{6mm}
$H$ is selfadjoint. To be precise  $H=\mathcal{F}^{-1}H^{mom}\mathcal{F}$, where in the momentum representation $H^{mom}$ is the maximal multiplication operator by $$h:\R^3\to \C^4,\quad  h(p):=\sum_{k=1}^3\alpha_k\ p_k+ \beta m$$ 
The two-fold degenerate eigenvalues of $h(p)$ are $\pm\epsilon(p)$,  $\epsilon(p):=\sqrt{|p|^2+m^2}$.

Obviously $ \mathcal{S}(\R^3,\C^4)\subset  \operatorname{dom}H \cap  \operatorname{dom}H^{mom}$. We will use the 
 $W$-invariant dense subspace of the position space $$\mathcal{D}:=\mathcal{F}^{-1}\big(C_c^\infty(\R^3,\C^4)\big) \subset \mathcal{S}(\R^3,\C^4)$$

\textbf{Time-dependent wave functions.}
 The time-dependent Dirac wave-function $\Psi$ is the solution of  the  initial-value problem for  the \textbf{Dirac equation} 
\begin{equation} 
i\partial_0\Psi = H\Psi, \quad
\Psi(0,\cdot)=\psi \text{ on } \R^3 \text{ with } \psi\in \operatorname{dom} H
\end{equation}
Hence $\psi$ and $\Psi$ are related by $\Psi(\mathfrak{x})=\big(\operatorname{e}^{-\operatorname{i} x_0 H}\psi\big)(x)= \big(W^\textsc{d}(-x_0)\psi\big)(x)$.

From this relation and  (\ref{DSRPS}) one  gets 
the transformation law under Poincar\'e transformations  for $\Psi$.  Indeed, let $\Psi_g$ be the wave-function with initial value $W^\textsc{d}(g)\psi$. Then 
\begin{equation}\label{TTDWF}
 \Psi_g(\mathfrak{x})=s(A)\, \Psi(g^{-1}\cdot\mathfrak{x})
\end{equation}
 since
$W^\textsc{d}(-x_0)W^\textsc{d}(g)= W^\textsc{d}(g')$ for $g':=(-x_0,0;I_2)g$ satisfying $g'^{-1}\cdot(0,x)=g^{-1}\cdot \mathfrak{x}$.
\\

\textbf{Conserved Current.}
The  Dirac probability density current $\mathfrak{J}=(J_i)$ is related to the time-dependent wave-function in the usual manner
 $J_i:=\Psi^*\alpha_i\Psi=\sum_{l=1}^4\overline{\Psi}_l  (\alpha_i\Psi)_l$, $i=0,1,2,3$. We use also  the notation adapted to our needs
\begin{equation}\label{DSPC}
 J_i(\psi,\mathfrak{x})=\langle \big(\e^{-\i x_0H}\psi\big)(x),\alpha_i\, \big(\e^{-\i x_0H}\psi\big)(x)\rangle_{\C^4}
\end{equation}
where $\psi\in L^2(\R^3,\C^4)$, $\mathfrak{x}=(x_0,x)\in\R^4$, $i=0,1,2,3$.

\begin{Lem}\label{PDPC}
The following properties 
$\emph{(a)-(f)}$ of the Dirac probability density current $\mathfrak{J}$ hold. Let $\psi\in L^2(\R^3,\C^4)$.
\begin{itemize}
\item[\emph{(a)}]   $\mathfrak{J}$ is real
\item[\emph{(b)}]  $J_0(\psi;0,x)=|\psi(x)|^2\ge 0$
for every $x\in\R^3$ and  $\int J_0(\psi;0,x)\d^3 x=||\psi||^2$
\item[\emph{(c)}]   $\mathfrak{J}$  is  Poincar\'e covariant, i.e.,  \,$\mathfrak{J}(W^\textsc{d}(g)\psi,\mathfrak{x}\big)=A\cdot \mathfrak{J}(\psi,g^{-1}\cdot\mathfrak{x})$ for $g\in \tilde{\mathcal{P}}$
\item[\emph{(d)}]  $J_0(\psi,\mathfrak{x})\ge |J(\psi,\mathfrak{x})|$ for every $\mathfrak{x}\in\R^4$
\item[\emph{(e)}]  $\mathfrak{J}$ is  $C^\infty$ and satisfies the continuity equation $\sum_{i=0}^4\partial_iJ_i(\psi,\mathfrak{x})=0$ for  $\mathfrak{x}\in\R^4$,  $\psi\in  \mathcal{S}(\R^3,\C^4)$
\item[\emph{(f)}]  $\mathfrak{J}(\psi,\cdot)$ is bounded for $\psi\in\mathcal{D}$
\end{itemize}
\end{Lem}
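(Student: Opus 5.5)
Each item will be checked separately, writing $\Psi(\mathfrak{x}):=(\e^{-\i x_0H}\psi)(x)$, so that $J_i=\langle\Psi,\alpha_i\Psi\rangle_{\C^4}$.

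For (a), we use that every $\alpha_i$ is Hermitian: $\sigma_k$ is Hermitian, so $\alpha_k=\operatorname{diag}(\sigma_k,-\sigma_k)$ is Hermitian, and $\alpha_0=I_4$. Then $\overline{\langle v,\alpha_i v\rangle}=\langle \alpha_i v,v\rangle=\langle v,\alpha_iv\rangle$.

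For (b), at $x_0=0$ we have $\Psi=\psi$ and $\alpha_0=I_4$, so $J_0(\psi;0,x)=|\psi(x)|^2$. Integrating gives $\|\psi\|^2$; for general $L^2$ functions this holds for a.e.\ representative.

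For (d), we split $v=(u,w)\in\C^2\oplus\C^2$. Then $J_0=|u|^2+|w|^2$ and $J=\langle u,\vec\sigma u\rangle-\langle w,\vec\sigma w\rangle$. The spin identity $|\langle u,\vec\sigma u\rangle|=|u|^2$ holds because $\langle u,\vec\sigma u\rangle$ is the Bloch vector of $u$, of length $|u|^2$. The triangle inequality then gives $|J|\le|u|^2+|w|^2=J_0$.

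For (c), we use (\ref{TTDWF}): $\Psi_g(\mathfrak{x})=s(A)\Psi(g^{-1}\cdot\mathfrak{x})$. It therefore suffices to prove the algebraic identity $s(A)^*\alpha_i s(A)=\sum_j\Lambda(A)_{ij}\alpha_j$. Because $s(A)$ is block diagonal, this splits into two $2\times2$ identities. The first is $A^*\sigma_\mu A=\Lambda(A)_{\mu}{}^{\nu}\sigma_\nu$ on the upper block, which is the defining property of the covering map $\Lambda$ in the form $\tilde{\mathfrak{x}}=\sum x_\mu\sigma_\mu\mapsto A\tilde{\mathfrak{x}}A^*$. The second is the analogous identity for $A^{-1}$ with $\tilde\sigma=(I,-\vec\sigma)$ on the lower block. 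The main bookkeeping risk is the index placement (upper versus lower indices, $A$ versus $A^*$). I would verify the convention on the boost $A_\rho$ and on a rotation, and then conclude using that these generate $SL(2,\C)$ and that both sides are multiplicative in $A$.

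For (e) and (f), we work in momentum space. For $\psi\in\mathcal{S}$ we have $\hat\psi\in\mathcal{S}$ and $\e^{-\i x_0h(p)}=\cos(x_0\epsilon)-\i\sin(x_0\epsilon)h/\epsilon$. Every $p$-derivative of this multiplier is polynomially bounded, uniformly for $x_0$ in compact sets, and it is smooth in $x_0$. Hence $\Psi(\mathfrak{x})=(2\pi)^{-3/2}\int \e^{\i px}\e^{-\i x_0h(p)}\hat\psi(p)\d^3p$ can be differentiated under the integral sign arbitrarily often, so $\Psi$ is $C^\infty$ and $\mathfrak{J}$ is $C^\infty$. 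The derivative identity is the Dirac equation $\partial_0\Psi=-\i H\Psi=-\sum_k\alpha_k\partial_k\Psi-\i m\beta\Psi$, valid pointwise. From it,
\[
\partial_0J_0=2\operatorname{Re}\langle\Psi,\partial_0\Psi\rangle=-\sum_k\partial_k\langle\Psi,\alpha_k\Psi\rangle,
\]
because $\alpha_k$ is Hermitian and $\operatorname{Re}\langle\Psi,-\i m\beta\Psi\rangle=0$ since $\beta$ is Hermitian.

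For (f), take $\psi\in\mathcal{D}$, so $\hat\psi\in C_c^\infty$. Since $\e^{-\i x_0h(p)}$ is unitary, we get $|\Psi(\mathfrak{x})|\le(2\pi)^{-3/2}\|\hat\psi\|_{L^1}$ uniformly in $\mathfrak{x}$. Therefore $|J_i|\le|\Psi|^2$ is bounded, using $\|\alpha_i\|=1$.
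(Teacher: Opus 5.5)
Your proposal is correct. For (a), (b), (e) and (f) it follows the paper's proof almost verbatim. In (f) you bound each $J_i$ directly via $\|\alpha_i\|=1$ rather than first reducing to $J_0$ through (d), which makes no difference.

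The genuine difference is in (d).
\begin{itemize}
\item The paper argues structurally. Positivity $J_0(\phi;0,x)\ge 0$ for every $\phi$, applied to $\phi=W^\textsc{d}(g)^{-1}\psi$ and combined with the covariance (c), gives $\mathfrak{J}(\psi,\mathfrak{x})\cdot\mathfrak{e}\ge 0$ for every future-directed unit timelike $\mathfrak{e}$. Hence $\mathfrak{J}$ is future causal. In this route (d) is a consequence of (b) and (c), and the argument works unchanged for any covariant current with nonnegative density, for instance the causal currents of the scalar boson used in the companion work.
\item You prove instead the pointwise inequality $\langle v,v\rangle\ge|\langle v,\vec\alpha\, v\rangle|$ for every $v\in\C^4$. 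It follows from the Weyl block structure and the identity $\sum_{k=1}^3\langle u,\sigma_k u\rangle^2=|u|^4$. This is more elementary and independent of (c), and it holds for every spinor, not only along solutions. The price is that it relies on the specific matrices $\alpha_k$.
\end{itemize}

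In (c) both proofs reduce to $s(A)^*\alpha_i s(A)=\sum_j\Lambda(A)_{ij}\alpha_j$. The paper derives this from $A\sigma_iA^*=\sum_j\Lambda(A)_{ji}\sigma_j$. Doing so implicitly needs $\Lambda(A^*)=\Lambda(A)^{T}$ for the upper block and $A^{*-1}=\sigma_2\bar A\sigma_2$ for the lower block, neither of which it spells out. Your route checks the identity on generators and propagates it by the homomorphism property, which avoids that index bookkeeping. For it to be complete:
\begin{itemize}
\item check every $U\in SU(2)$, not a single rotation;
\item check $A_\rho$ for all $\rho$;
\item invoke a decomposition such as $A=U_1A_\rho U_2$, which follows from polar decomposition and diagonalization of the positive factor.
\end{itemize}
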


{\itshape Proof.} (a) The matrices $\alpha_k,\beta$ are self-adjoint.       (b)  is obvious.  (c) $\mathfrak{J}(W^\textsc{d}(g)\psi,\mathfrak{x}\big)=\big(\langle \Psi_g(\mathfrak{x}),\alpha_i \Psi_g(\mathfrak{x})\rangle\big)=\big( \langle s(A)\Psi(g^{-1}\cdot \mathfrak{x}),\alpha_i s(A)\Psi(g^{-1}\cdot \mathfrak{x})\rangle \big)=\big( \langle \Psi(g^{-1}\cdot \mathfrak{x}), s(A)^*\alpha_i s(A)\Psi(g^{-1}\cdot \mathfrak{x})\rangle \big)$ using (\ref{TTDWF}). The claim follows since $$s(A)^*\alpha_i s(A)=\sum_{j=0}^3\Lambda(A)_{ij}\alpha_j, \quad i=0,1,2,3$$ holds for $A\in SL(2,\C)$. This is due to the fundamental relation  valid for all two by two matrices $A$
\begin{equation}
 A\sigma_i A^*=\sum_{j=0}^3\Lambda(A)_{ji}\sigma_j,  \quad i=0,1,2,3\tag{*}
 \end{equation} 
(d) As  $J_0(\psi;0,x)\ge 0$ for all $x$,$\psi$, one has $\mathfrak{J}(W^\textsc{d}(g)^{-1}\psi;0,x)\cdot (1,0,0,0)\ge 0$ for  $g=(\mathfrak{a},A)\in\tilde{\mathcal{P}}$, whence $\mathfrak{J}(\psi; \mathfrak{a}+A\cdot(0,x))\cdot (A\cdot(1,0,0,0)\ge 0$ by covariance  and hence $\mathfrak{J}(\psi; \mathfrak{x})\cdot \mathfrak{e}\ge 0$ for $\mathfrak{x}\in\R^4$,  $\mathfrak{e}^{\cdot2} =1$, $e_0>0$. The claim follows.
(e)   Note that $\Psi \in C^\infty(\R^4; \C^4)$ if $\varphi \in \mathcal{S}(\R^3,\C^4)$.  Indeed, as $\Psi$ is the Fourier transform of  $\R^3\ni  p\mapsto  e^{-ix_0 h(p)}\varphi(p)$, passing the $x_0$ and $x$ derivatives  under the sign of $d^3p$ integration the result follows by standard arguments.
Hence   $\mathfrak{J}$ is  $C^\infty$ and
$ \partial_0(\Psi^*\Psi) =(\partial_0 \Psi)^* \Psi+ \Psi^*\partial_0\Psi=\big(-\sum_{k=1}^3\partial_k\Psi^*\alpha_k- \frac{1}{\i} \Psi^*\beta m\big)\Psi+\Psi^*\big(-\sum_{k=1}^3\alpha_k\partial_k\Psi+\frac{1}{\i}\beta m\Psi\big)=-\sum_{k=1}^3\partial_k(\Psi^*\alpha_k\Psi)$, whence the claim. (f) Due to (d) it suffices to show that $J_0(\psi,\cdot)$ is bounded:  $(2\pi)^{3/2} |\big(\operatorname{e}^{-\operatorname{i} x_0 H}\psi\big)(x)|=  |\int \e^{\i px}\e^{-\i x_0 h(p)}\phi(p) \d^3p|\le \int |\phi(p)|  \d^3p<\infty $ as $\phi:=\mathcal{F}\psi\in C_c^\infty$. 
\qed 

\section{Integrable majorant regarding the current}\label{IMRC}
The following estimate derived by the non-stationary phase method is an application of  \cite[Theorem 1.8]{T92} in \cite[(77) Theorem]{C17}. For  (\ref{FEATRE}) cf.\,also \cite[Theorem XI.17(a)]{RS79}.
 It  provides an integrable majorant of the current on maximal achronal sets.

\begin{Lem} \label{FEATRE}Let $f\in C_c^\infty(\R^3)$ and $g(\mathfrak{x}):=\int \e^{\i (px+x_0\epsilon(p))} f(p)\,\d^3p$ for $\mathfrak{x}\in\R^4$.  Then for some $0<\gamma<1$ and every $0<N<\infty$ there exists a constant $0<A_N<\infty$ such that 
$$ |g(\mathfrak{x}) |\le  A_N(1+|x|+|x_0|)^{-N}  \quad \text{ for } |x|\ge  \gamma |x_0|$$
\end{Lem}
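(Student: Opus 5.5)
Since $f$ has compact support, $\operatorname{supp} f\subset\{|p|\le R\}$ for some $R$, and I would take
$$\gamma:=\sup_{|p|\le R}|\nabla\epsilon(p)|=\frac{R}{\sqrt{R^2+m^2}}<1,$$
or any number strictly between this value and $1$. The group velocity $\nabla\epsilon(p)=p/\epsilon(p)$ is then bounded by $\gamma_0<1$ on the support, and I fix $\gamma\in(\gamma_0,1)$. The phase is $\phi(p)=px+x_0\epsilon(p)$, with gradient $\nabla_p\phi=x+x_0\,p/\epsilon(p)$. In the region $|x|\ge\gamma|x_0|$ one has
$$|\nabla_p\phi|\ge |x|-\gamma_0|x_0|\ge (1-\gamma_0/\gamma)|x|=:c|x|.$$
Moreover $|x|\ge\gamma|x_0|$ gives $1+|x|+|x_0|\le (1+1/\gamma)(1+|x|)$. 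So it suffices to bound $|g|$ by $A_N(1+|x|)^{-N}$. For $|x|\le1$ the trivial bound $|g|\le\|f\|_{L^1}$ is enough, so only $|x|\ge1$ needs work.

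For $|x|\ge1$ I would use the non-stationary phase method by repeated integration by parts. Set
$$L:=\frac{1}{\i}\,\frac{\nabla_p\phi\cdot\nabla_p}{|\nabla_p\phi|^2}.$$
Then $L\e^{\i\phi}=\e^{\i\phi}$, so
$$g(\mathfrak{x})=\int \e^{\i\phi}\,(L^t)^N f\,\d^3p,$$
where $L^t u=-\frac1\i\nabla\cdot\big(u\,\nabla\phi/|\nabla\phi|^2\big)$. No boundary terms arise because $f$ has compact support. The main obstacle is bounding $(L^t)^N f$ uniformly by $C_N|x|^{-N}$ on the support. The derivatives of $\nabla\phi$ are
$$\partial^\alpha\nabla_p\phi=x_0\,\partial^\alpha(p/\epsilon)\qquad(|\alpha|\ge1),$$
and these are bounded by $C_\alpha|x_0|\le C_\alpha|x|/\gamma$ on the compact support, because $\epsilon$ is smooth with $\epsilon\ge m>0$. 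Consequently each component of $\nabla\phi/|\nabla\phi|^2$, together with all its $p$-derivatives, is $O(|x|^{-1})$. This follows by induction: every derivative falling on $|\nabla\phi|^{-2}$ produces a factor of size $O(|x|)$ divided by an additional $|\nabla\phi|\ge c|x|$. Each application of $L^t$ thus gains a factor $|x|^{-1}$, giving
$$|(L^t)^Nf|\le C_N|x|^{-N}\sum_{|\beta|\le N}|\partial^\beta f|.$$

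Integrating this bound gives $|g(\mathfrak{x})|\le A_N'|x|^{-N}$ for $|x|\ge1$. Combining it with the trivial bound for $|x|\le 1$ and the comparison $1+|x|+|x_0|\le (1+1/\gamma)(1+|x|)$ yields the claim for integer $N$, and hence for all real $0<N<\infty$. Alternatively, the statement follows directly from the general non-stationary phase estimate \cite[Theorem 1.8]{T92} applied as in \cite[(77) Theorem]{C17}, and the argument above simply makes that application explicit.
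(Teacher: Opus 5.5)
Your proposal is correct and follows essentially the paper's route. The paper picks $\gamma$ strictly between $\max_{p\in\operatorname{supp} f}|p|/\epsilon(p)$ and $1$, rescales the phase by $|x|+|x_0|$, checks a uniform positive lower bound on its gradient and uniform bounds on its higher derivatives, and then cites the integration by parts of \cite[Theorem 1.8]{T92}; you instead carry out that integration by parts explicitly with the unscaled phase and convert $|x|^{-N}$ into $(1+|x|+|x_0|)^{-N}$ via the cone condition. One small point: the value $\gamma=\gamma_0$ in your first line must be excluded, since it gives $c=0$; the argument you actually run correctly fixes $\gamma\in(\gamma_0,1)$.
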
\\
{\itshape Proof.} Let $K:=\operatorname{supp}(f)$. Put $\beta:=\max\{\frac{|p|}{\epsilon(p)}:p\in K\}$. Clearly $0\le\beta<1$. Introduce $\omega(p):= (|x|+|x_0|)^{-1}\big(px+x_0\epsilon(p)\big)$. Then $\nabla \omega(p)=  (|x|+|x_0|)^{-1}\big(x+\frac{x_0}{\epsilon(p)}p\big)$ and $|\nabla \omega(p)|\ge  (|x|+|x_0|)^{-1}\big(|x|-|x_0|\frac{|p|}{\epsilon(p)}\big)\ge \frac{|x|-\beta |x_0|}{|x|+|x_0|}$ for $p\in K$. Now let $\beta <\gamma <1$ and 
suppose $|x|\ge \gamma|x_0|$. Then $|\nabla \omega(p)|\ge  \frac{|x|-\beta|x|/\gamma}{|x|+|x|}=\frac{1-\beta/\gamma}{2}>0$ for $p\in K$. Note also that the derivatives satisfy 
$|D^\alpha \omega(p)|\le 1$ for $|\alpha|=1$, $|D^\alpha \omega(p)|\le |D^\alpha \varepsilon(p)|$ for $|\alpha|\ge 2$. Hence integration by parts as in the proof of  \cite[Theorem 1.8]{T92} yields  the claim.\qed
\\

\hspace*{6mm}
 Let $\psi\in \mathcal{D}$ and put $\varphi:=\mathcal{F}\phi$. In momentum space representation  time evolution is $\varphi_{x_0}:=\operatorname{e}^{\operatorname{i}x_0h}\varphi$, i.e., $\varphi_{x_0}(p)=\operatorname{e}^{\operatorname{i}x_0h(p)}\varphi(p) \,\forall\; p$.  Let $\eta\in\{+,-\}$. Then 
\begin{equation}\label{PEP}
 \pi^\eta(p)=\frac{1}{2}\left(I+\frac{\eta}{\epsilon(p)}h(p)\right)
 \end{equation}
  is the projection in $\C^4$ onto the $2$-dimensional eigenspace of $h(p)$ with eigenvalue $\eta\, \epsilon(p)$.   So $\varphi^\eta:=\pi^\eta\varphi$, $\eta=\pm$, 
are the projections  of $\varphi$ onto the   eigenspaces of $\operatorname{sgn}(H^{mom})$. Still $\varphi^\eta     \in C_c^\infty$.  Analogously 
 $(\varphi_{x_0})^\eta:=\pi^\eta\varphi_{x_0}$. Note that $(\varphi_{x_0})^\eta=(\varphi^\eta)_{x_0}=\operatorname{e}^{\operatorname{i}x_0\eta \epsilon}\varphi^\eta$, as $\operatorname{e}^{\operatorname{i}x_0h}$ and $\pi^\eta$ commute. Define analogously $\psi^\eta$ and $\psi_{x_0}^\eta$. Note $\psi_{x_0}^\eta =      \mathcal{F}^{-1} \big(\e^{\i x_0\eta \epsilon}\varphi^\eta\big)$, $(\psi_{x_0}^\eta)_l =      \mathcal{F}^{-1} \big(\e^{\i x_0\eta \epsilon} (\varphi^\eta)_l\big)$  for each component $l=1,2,3,4$, and hence
 \begin{equation}\label{ENSPM}
 (\psi^\eta_{x_0}(x))_l=(2\pi)^{-3/2}\int \operatorname{e}^{\operatorname{i}(px+x_0\eta\epsilon(p))}(\varphi^\eta(p))_l\,\operatorname{d}^3p
 \end{equation}
with $ |\psi_{x_0}(x)|^2=|\sum_\eta  \psi^\eta_{x_0}(x)|^2\le 2\sum_\eta  |\psi^\eta_{x_0}(x)|^2=2 \sum_{l,\eta} | (\psi^\eta_{x_0}(x))_l|^2$.

\begin{Pro} \label{NER} 
Let $\psi\in \mathcal{D}$.  
Then for every $0<N < \infty$ there is a constant  $0<C_N<\infty$  depending  on $\psi$ such that $$J_0(\psi,\mathfrak{x})\le C_N(1+|x|)^{-N}\quad \text{ for } |x|\ge |x_0|$$
\end{Pro}\\
{\itshape Proof.}   Note  $J_0(\psi,\mathfrak{x})=|\psi_{-x_0}(x)|^2$.  By (\ref{ENSPM}),\,(\ref{FEATRE}),  $J_0(\psi,\mathfrak{x})\le 2 \sum_{l,\eta} | (\psi^\eta_{-x_0}(x))_l|^2 \le 2\sum_{l,\eta}A_N^{l,\eta}(1+|x|+|x_0|)^{-N}$, whence the claim.\qed

\section{Covariant AL  of the Dirac system and  of the  electron}\label{CALDSE} 
Now we are in a position to apply  \cite[Theorem 19]{CDM25} thus achieving   our main result.

\begin{The}\label{MTDC}
 Let $\mathfrak{J}$ be the Dirac probability density current. Then there is a unique  AL $T^\textsc{d}$ of the Dirac system satisfying for every achronal Borel set $\Delta$ and  $\psi\in \mathcal{D}$
\begin{equation}\label{REXAL}
\langle \psi, T^\textsc{d}(\Delta)\psi\rangle= \int_{\varpi(\Delta)} \big( J_0(\psi;\tau(x),x) -  J(\psi;\tau(x),x)\nabla\tau(x)\big)  \d^3x 
\end{equation}
where $\tau:\varpi(\Delta)\to \R$ with $\operatorname{graph}\tau=\Delta$. One has the covariance $$W^\textsc{d}(g)T^\textsc{d}(\Delta)W^\textsc{d}(g)^{-1}=T^\textsc{d}(g\cdot \Delta)$$
\end{The}
\\
{\itshape Proof.}  Due to (\ref{PDPC}),\,(\ref{NER}), the general result  \cite[Theorem 19]{CDM25} yields the claim.\qed
\\

 Let $\mathcal{H}^\textsc{d}$ denote the state space  of the Dirac system. Recall that the invariant eigenspace of $\operatorname{sgn}(H)$ for the eigenvalue $+1$ is considered to be  the state space of the {\it electron}, hence denoted by $\mathcal{H}^e$.
 In  momentum representation $\mathcal{H}^e$ is the range of the orthogonal projection $\varphi \mapsto \pi^+\varphi$ given by (\ref{PEP}). Let $j_e: \mathcal{H}^e \to  \mathcal{H}^\textsc{d}$ 
 be the identical injection. Then the \textbf{trace }(compression) $T^e$ of $T^\textsc{d}$   in (\ref{MTDC}) on $ \mathcal{H}^e $ is defined as
 \begin{equation}\label{ALDE}
 T^e(\Delta):=j_e^*\, T^\textsc{d}(\Delta)\,j_e,  \quad \Delta\in\mathcal{B}^{ach}
\end{equation}

Obviously on has
  
\begin{Cor}\label{EMTDC}
$T^e$ $\emph{(\ref{ALDE})}$ is a covariant AL of the electron. Covariance holds by means of the  electron representation being the  representation $W^e:=j_e^*\,W^\textsc{d}\,j_e$  on $ \mathcal{H}^e$. Due to
\begin{equation}\label{EPL} 
\langle \psi, T^e(\Delta)\psi\rangle=\langle \psi, T^\textsc{d}(\Delta)\psi\rangle, \quad  \Delta\in\mathcal{B}^{ach}, \; \psi\in\mathcal{H}^e
\end{equation}
the integral representation  \emph{(\ref{IREXAL})} for $T^\textsc{d}$ holds  for  $T^e$ with respect to $\mathfrak{J}^e$ given by 
$$\mathfrak{J}^e(\psi,\mathfrak{x}):=\mathfrak{J}(\psi,\mathfrak{x}), \;\; \psi \in \mathcal{D}^e:=\mathcal{F}^{-1}\big(\pi^+ C_c^\infty(\R^3,\C^4)\big) $$
\end{Cor}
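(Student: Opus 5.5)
The plan is to verify the defining properties of a covariant AL for $T^e$ directly from those of $T^\textsc{d}$ given by (\ref{MTDC}). The key structural fact is that $\mathcal{H}^e$ reduces $W^\textsc{d}$. Indeed, $\operatorname{sgn}(H)$ commutes with every $W^\textsc{d}(g)$: the two summands in $[m,\frac12,+]\oplus[m,\frac12,-]$ are exactly the spectral subspaces of the sign of energy, and these are invariant because the group generated by $H$ commutes with Lorentz transformations up to reparametrization. Consequently the projection $P:=j_ej_e^*$ onto $\mathcal{H}^e$ commutes with $W^\textsc{d}(g)$. It follows that $W^e(g):=j_e^*W^\textsc{d}(g)j_e$ is a unitary strongly continuous representation on $\mathcal{H}^e$, satisfying $W^\textsc{d}(g)j_e=j_eW^e(g)$ and $j_e^*W^\textsc{d}(g)=W^e(g)j_e^*$. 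The homomorphism property follows from $j_e^*W^\textsc{d}(g)W^\textsc{d}(h)j_e=j_e^*W^\textsc{d}(g)j_ej_e^*W^\textsc{d}(h)j_e$, using $Pj_e=j_e$ and the invariance.

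Next I check the AL axioms. Each $T^e(\Delta)=j_e^*T^\textsc{d}(\Delta)j_e$ is bounded and nonnegative, since $\langle\psi,T^e(\Delta)\psi\rangle=\langle j_e\psi,T^\textsc{d}(\Delta)j_e\psi\rangle\ge0$, and $T^e(\emptyset)=0$. Normalization and $\sigma$-additivity also transfer: if $(\Delta_n)$ are disjoint achronal Borel sets whose union is maximal achronal, then $\sum_nT^\textsc{d}(\Delta_n)=I$ strongly. Since $\phi\mapsto j_e^*\phi$ is continuous, $\sum_n j_e^*T^\textsc{d}(\Delta_n)j_e\psi=j_e^*j_e\psi=\psi$. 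Covariance then follows from the intertwining relations:
\[
W^e(g)T^e(\Delta)W^e(g)^{-1}=j_e^*W^\textsc{d}(g)T^\textsc{d}(\Delta)W^\textsc{d}(g)^{-1}j_e=j_e^*T^\textsc{d}(g\cdot\Delta)j_e=T^e(g\cdot\Delta).
\]

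Finally, (\ref{EPL}) is immediate because $j_e$ is the identical injection. For the integral representation, take $\psi\in\mathcal{D}^e$. Then $\mathcal{F}\psi\in\pi^+C_c^\infty$, and since $\pi^+$ in (\ref{PEP}) is smooth in $p$, $\mathcal{F}\psi$ lies in $C_c^\infty(\R^3,\C^4)$. Hence $\psi\in\mathcal{D}$, so (\ref{REXAL}) applies to $\psi$, and by (\ref{EPL}) it gives $\langle\psi,T^e(\Delta)\psi\rangle$ with the current $\mathfrak{J}^e=\mathfrak{J}|_{\mathcal{D}^e}$. It remains to note that $\mathcal{D}^e$ is dense in $\mathcal{H}^e$ and $W^e$-invariant. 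Density follows from $\mathcal{D}^e=P\mathcal{D}$ together with the density of $\mathcal{D}$ and continuity of $P$. Invariance follows from $P$ commuting with $W^\textsc{d}$ and the $W^\textsc{d}$-invariance of $\mathcal{D}$.

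The only step that is not purely formal is the commutation of $\operatorname{sgn}(H)$ with the boosts in the representation (\ref{DSRPS}). I would settle it in momentum representation, where $W^\textsc{d}$ acts by the standard induced form and preserves the positive- and negative-energy mass shells separately. Alternatively, one can cite the decomposition of $W^\textsc{d}$ into the two irreducible summands recalled in sec.\,3.
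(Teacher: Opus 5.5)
Your proposal is correct and matches what the paper intends: the paper offers no argument beyond calling the corollary obvious, and you supply the routine verification. Because $\mathcal{H}^e$ reduces $W^\textsc{d}$, compression by $j_e$ preserves positivity, normalization and covariance, and $\mathcal{D}^e=P\mathcal{D}\subset\mathcal{D}$ lets (\ref{REXAL}) apply through (\ref{EPL}). Your remark that time evolution commutes with Lorentz transformations up to reparametrization is loose. This is harmless, since the paper takes invariance of $\mathcal{H}^e$ as given, and your momentum-space argument (each mass-shell sheet is preserved) or the cited decomposition suffices.
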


\section{Covariant RCL for the  Dirac system and the  electron} \label{CRCLDSE}
One recalls the {\it one-to-one correspondence} of (covariant) AL and (covariant) RCL as expounded in \cite[(19),\,(20)]{C24} and reported in \cite[(27)]{CDM25}. Accordingly there is a unique RCL $F^\textsc{d}$ for the Dirac system such that  
\begin{equation}
 F^\textsc{d}(\Delta^\land)=T^\textsc{d}(\Delta), \quad \Delta\in\mathcal{B}^{ach}
 \end{equation}
 $F^\textsc{d}$ is covariant by means of   $W$. 
 \\  
\hspace*{6mm}
Analogously there is a unique RCL $F^e$ for the electron   satisfying 
\begin{equation}\label{MTCL}
 F^e(\Delta^\land)=T^e(\Delta),   \quad \Delta\in\mathcal{B}^{ach}
  \end{equation} 
 $F^e$ is covariant be means of $W^e$. Obviously $F^e=j_e^*\,F^\textsc{d}\,j_e$ holds. 
 \\  
\hspace*{6mm}
Thus  a long outstanding achievement is reached.

\section{Properties of the AL of the Dirac system}\label{PALDS}

Several physically relevant properties of the AL $T^\textsc{d}$ for the Dirac system constructed in (\ref{MTDC}) and the related AL $T^e$  for the electron are 
shown.

 \subsection{Dirac and Newton-Wigner position operator}\label{DNWPO}
Let $\Delta$  be a Borel subset  of the Euclidean space $\varepsilon:=\{\mathfrak{x}\in\R^4: x_0=0\}\equiv \R^3$. By (\ref{MTDC}) one has $\langle \psi, T^\textsc{d}(\Delta)\psi\rangle= \int_\Delta|\psi(x)|^2\d^3x$ for all $\psi\in\mathfrak{D}$. So $T^\textsc{d}(\Delta)$ is the multiplication operator by  the indicator function $1_\Delta$ of $\Delta$, and  $T^\textsc{d}$ is the canonical  projection operator valued measure on $\varepsilon$. The corresponding position vector operator $X^\textsc{d}$ is the multiplication operator by the identity function $\operatorname{id}_{\R^3}$.  $X^\textsc{d}$  does not preserve the electron subspace  $\mathcal{H}^e$. This is shown by the following obvious lemma.

\begin{Lem}\label{DPOCSH} In momentum representation $X^\textsc{d}$ equals  $\i\nabla$ and $\operatorname{sgn}(H)$ is the multiplication operator by $\frac{1}{\epsilon}h$.  So in momentum representation $[X^\textsc{d},\operatorname{sgn}(H)]$ is the multiplication operator by $\i\nabla (\frac{1}{\epsilon}h)$, which does not vanish.
\end{Lem}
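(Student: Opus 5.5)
The plan is to pass to the momentum representation through $\mathcal{F}$ and verify the three assertions one after the other.

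First, $X^\textsc{d}$ is the maximal multiplication operator by $x$ on $L^2(\R^3,\C^4)$. Using the convention $(\mathcal{F}\psi)(p)=(2\pi)^{-3/2}\int \e^{-\i px}\psi(x)\,\d^3x$, differentiation under the integral sign gives, for $\psi\in\mathcal{S}(\R^3,\C^4)$,
$$\i\nabla_p(\mathcal{F}\psi)(p)=\mathcal{F}(x\psi)(p).$$
Hence $\mathcal{F}X^\textsc{d}_k\mathcal{F}^{-1}=\i\partial_k$ on the Schwartz space. Since $\mathcal{S}$ is a core for both operators, this identity extends to the selfadjoint closures.

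Second, $H=\mathcal{F}^{-1}H^{mom}\mathcal{F}$, and $H^{mom}$ is multiplication by $h(p)$. By the functional calculus for multiplication operators, $\operatorname{sgn}(H)$ corresponds to multiplication by $\operatorname{sgn}(h(p))$. By (\ref{PEP}) this equals $\pi^+(p)-\pi^-(p)=\frac{1}{\epsilon(p)}h(p)$, because $h(p)$ has only the eigenvalues $\pm\epsilon(p)$ and $\epsilon(p)\ge m>0$. The resulting matrix function is smooth, bounded, and has bounded derivatives.

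Third, the commutator is computed on $\mathcal{F}^{-1}C_c^\infty$. For $\varphi\in C_c^\infty(\R^3,\C^4)$ the Leibniz rule gives
$$[\i\partial_k,\tfrac{1}{\epsilon}h]\varphi=\i\partial_k\big(\tfrac{1}{\epsilon}h\big)\varphi.$$
So the commutator is multiplication by the bounded matrix function $\i\nabla(\frac{1}{\epsilon}h)$.

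The only step needing care is showing that this function does not vanish. I would compute one component explicitly, say at $p=0$:
$$\partial_1\big(\tfrac{1}{\epsilon}h\big)(p)=\frac{\alpha_1}{\epsilon(p)}-\frac{p_1}{\epsilon(p)^3}h(p).$$
At $p=0$ this equals $\alpha_1/m\neq0$. It follows that $X^\textsc{d}$ does not commute with $\operatorname{sgn}(H)$, and therefore does not leave $\mathcal{H}^e$ invariant.

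To make the last point concrete, take $\varphi=\pi^+\varphi$ with $\varphi\in C_c^\infty$. Then $\pi^-\,\i\partial_1(\pi^+\varphi)=\i(\partial_1\pi^+)\pi^+\varphi\ne0$ for a suitable choice of $\varphi$, using $\partial_1\pi^+=\frac{1}{2}\partial_1(\frac{1}{\epsilon}h)$. This exhibits an electron state that $X^\textsc{d}$ maps outside the electron subspace.
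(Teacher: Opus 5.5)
Your proposal is correct and matches the paper, which calls this lemma obvious and gives no proof. Your argument is the direct verification the paper leaves implicit: conjugating $x$ by $\mathcal{F}$ to $\i\nabla$, identifying $\operatorname{sgn}(h)=\pi^+-\pi^-=\frac{1}{\epsilon}h$, applying the Leibniz rule, and checking $\partial_1(\frac{1}{\epsilon}h)(0)=\alpha_1/m\neq 0$. Your phrase ``for a suitable choice of $\varphi$'' is justified at $p=0$, where $(\partial_1\pi^+)\pi^+=\frac{1}{4m}\alpha_1(I+\beta)\neq 0$.
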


Therefore $(W^\textsc{d},X^\textsc{d})$  is irreducible. Henceforth  by {\it Dirac system} we  refer more precisely  to $(W^\textsc{d},T^\textsc{d})$.
 \\  
\hspace*{6mm}
 Apart from massless helicity zero systems not considered here,  the  Newton-Wigner position operator $X^\textsc{nw}$ \cite{NW49},\,\cite{W62} is defined for all massive systems. A massive  system is described  by  a rep $W$ of $\tilde{\mathcal{P}}$, for  which the  mass-squared  operator $C:=H^2-P^2$ is strictly positive, i.e., $C>0$. It is characteristic for  $X^\textsc{nw}$ that it commutes with  the Casimir operators $C$, the Pauli-Lubanski  scalar,
 and  the {\it sign of the energy}. This is a consequence of (\ref{BTFX}) below. Hence $(W^\textsc{d},X^\textsc{nw})$ is reducible. Therefore $X^\textsc{d} \ne X^\textsc{nw}$.
  \\  
\hspace*{6mm}
  The remarkable Bakamjian-Thomas-Foldy formula  
    \begin{equation}\label{BTFX}
X^{\textsc{nw}} =\textrm{\SMALL{$\frac{1}{2}$}}(H^{-1}N+NH^{-1})-C^{-1/2}(C^{1/2}+|H|)^{-1} \,P\times \big(J+H^{-1} (P\times N)\big)
\end{equation}
 holds on a common core and may be regarded as the defining equation for the Newton-Wigner position operator  as it expresses  $X^{\textsc{nw}}$ in terms of the generators of  $\tilde{\mathcal{P}}$.

\begin{The}\label{RNWD} Regarding the Dirac system there is a bounded skew-adjoint vector operator $F=(F_j)_{j=1,2,3}$ on $\mathcal{H}^\textsc{d}$   such that 
\begin{equation}\label{NWODO}
X^\textsc{nw}=  X^\textsc{d} + \i\,F 
\end{equation}
In momentum representation $F_j$ is the multiplication operator by $\Phi_j:\R^3\to \C^{4\times 4}$,
$$\Phi_j(p):= \frac{1}{2\epsilon(p)(\epsilon(p)+m)}     \left( \begin{array}{cc}  -p_j+ \slashed{p}\, \sigma_j & -(\epsilon(p)+m)\sigma_j +\frac{1}{\epsilon(p)}p_j\,\slashed{p}\\  (\epsilon(p)+m)\sigma_j -\frac{1}{\epsilon(p)}p_j\,\slashed{p} &  -p_j+ \slashed{p}\, \sigma_j  \end{array}\right)$$
with $\slashed{p}:=\sum_j\sigma_jp_j$.
\end{The}

{\it Proof.} One gains $X^{\textsc{nw}}$ evaluating (\ref{BTFX}) for $W=W^\textsc{d}$ in momentum representation. For this one substitutes $N$ by $\frac{1}{2}(HX^\textsc{d}+X^\textsc{d}H)$. This well-known relation is easy  to verify.  Note $N_j^{mom}=\frac{1}{2}\alpha_j+h\, \i\partial_j$. Moreover note $J=L-P\times X^\textsc{d}$ with 
$L_j^{mom}=\frac{1}{2}\operatorname{diag}(\sigma_j,\sigma_j)$. Then the result follows by a  straight forward computation.\qed
\\

 The Newton-Wigner localization $(W^\textsc{d},T^\textsc{NW})$ determined by $X^{\textsc{nw}}$ and the  Dirac localization $(W^\textsc{d},T^\textsc{d})$ confined to the Euclidean space $\varepsilon$ determine two systems of imprimitivity for the Euclidean group $\tilde{\mathcal{E}} \subset \tilde{\mathcal{P}}$ acting on $\varepsilon$. Hence by  the inprimitivity theorem there is a unitary transformation $Y$ on  $\mathcal{H}^\textsc{d}$ which commutes with $W^\textsc{d}|_{\tilde{\mathcal{E}}}$ and satisfies $T^\textsc{NW}(\Delta) =Y^{-1} T^\textsc{d}(\Delta)Y $ for every Borel $\Delta\subset \varepsilon$.

\begin{Lem}\label{MPY} 
For $p\in\R^3$ put
\begin{center}
$\Upsilon(p):=\frac{1}{2\sqrt{\epsilon(p)(\epsilon(p)+m)}} \left( \begin{array}{cc} \epsilon(p)+m+\slashed{p} &  \epsilon(p)+m-\slashed{p} \\  \epsilon(p)+m-\slashed{p}  &  -\epsilon(p)-m-\slashed{p}  \end{array}\right)$
\end{center}
    $\Upsilon(p)$ is unitary  self-adjoint,  $\Upsilon(p)^{-1}=\Upsilon(p),\, \frac{1}{\epsilon(p)}h(p)=\Upsilon(p)\operatorname{diag}\big(I_2,-I_2\big)\Upsilon(p)^{-1}$, and
$\Upsilon(B\cdot p)=\operatorname{diag}(B,B)\Upsilon(p)\operatorname{diag}(B^*,B^*)$ for $B\in SU(2)$. 
\end{Lem}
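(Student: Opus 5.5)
This is a direct $2\times2$ block computation. Throughout, write $\epsilon=\epsilon(p)$ and $c:=\epsilon+m$, and use only the Pauli identity $\slashed{p}^2=|p|^2 I_2$ together with $|p|^2=\epsilon^2-m^2=(\epsilon-m)c$.

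\emph{Self-adjointness and unitarity.} Each block of $\Upsilon(p)$ is Hermitian, since $\slashed{p}^*=\slashed{p}$. The two off-diagonal blocks are equal, so $\Upsilon(p)^*=\Upsilon(p)$. Hence unitarity reduces to $\Upsilon(p)^2=I_4$, which gives $\Upsilon(p)^{-1}=\Upsilon(p)$ at once. To check $\Upsilon^2=I_4$, all blocks are polynomials in $\slashed{p}$ and therefore commute. The diagonal entry of $\Upsilon^2$ is
\[
(c+\slashed{p})^2+(c-\slashed{p})^2=2c^2+2|p|^2=2c(c+\epsilon-m)=4c\epsilon ,
\]
and after dividing by $4\epsilon c$ this equals $1$. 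The other diagonal entry is $(c-\slashed{p})^2+(c+\slashed{p})^2$, which is the same. The off-diagonal entry is $(c+\slashed{p})(c-\slashed{p})-(c-\slashed{p})(c+\slashed{p})=0$.

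\emph{Diagonalization.} Since $\Upsilon^{-1}=\Upsilon$, the claim $\frac1\epsilon h=\Upsilon\operatorname{diag}(I_2,-I_2)\Upsilon$ can be verified directly. In the Weyl representation,
\[
h(p)=\begin{pmatrix}\slashed{p}& m\\ m& -\slashed{p}\end{pmatrix}.
\]
Write $\Upsilon=\frac{1}{2\sqrt{\epsilon c}}\begin{pmatrix}a&b\\ b&d\end{pmatrix}$ with $a=c+\slashed{p}$, $b=c-\slashed{p}$, $d=-c-\slashed{p}$. Then
\[
\Upsilon D\Upsilon=\frac{1}{4\epsilon c}\begin{pmatrix}a^2-b^2& ab-bd\\ ab-db& b^2-d^2\end{pmatrix}.
\]
The four entries are as follows. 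First, $a^2-b^2=4c\slashed{p}$, which gives $\slashed{p}/\epsilon$. Second, $b^2-d^2=(c-\slashed{p})^2-(c+\slashed{p})^2=-4c\slashed{p}$, which gives $-\slashed{p}/\epsilon$. Third, $ab-bd=b(a-d)=2(c-\slashed{p})(c+\slashed{p})=2(c^2-|p|^2)=2c(c-\epsilon+m)=4cm$, which gives $m/\epsilon$. The remaining entry $ab-db$ equals the third by commutativity. Together these reproduce $h(p)/\epsilon$.

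\emph{Covariance.} For $B\in SU(2)$ one has $B\slashed{p}B^*=\slashed{(B\cdot p)}$ by the relation (*) with $\Lambda(B)$ a rotation, and $\epsilon(B\cdot p)=\epsilon(p)$ because rotations preserve $|p|$. Conjugating each block by $B$ therefore replaces $\slashed{p}$ with $\slashed{(B\cdot p)}$, leaves the scalar terms $c$ and the prefactor unchanged, and so gives the stated identity.

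There is no genuine obstacle here. The only point needing care is the sign in the $(2,2)$ block, and the third-entry identity $c^2-|p|^2=2mc$, which is what produces the mass entry $m/\epsilon$.
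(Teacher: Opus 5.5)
Your proposal is correct and takes the same route as the paper, whose proof is only the hint to use $\slashed{p}^2=|p|^2$, the block form of $h(p)$ in the Weyl representation, and relation (*) for the $SU(2)$ covariance. You have carried out that block computation explicitly, and all the entries check out, including $c^2-|p|^2=2mc$ for the mass term.
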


{\it Proof}. Use $\slashed{p}^2=p^2$, formula (*) in the proof of (\ref{PDPC}), and note $h(p)=\left( \begin{array}{cc} \slashed{p} & m\\ m & -\slashed{p}  \end{array}\right)$.\qed

It is not difficult to construct $\Upsilon(p)$ having the above properties.

\begin{The} Let $Y$ be the operator on $\mathcal{H}^\textsc{d}$ which in momentum representation is the multiplication operator by $\Upsilon$ $$Y^{mom}\varphi (p):=\Upsilon(p)\varphi(p)$$
 Then \emph{(i)} $Y$ is unitary and self-adjoint,   $Y^{-1}=Y$,  \emph{(ii)}    $Y$   and   $W^\textsc{d}|_{\tilde{\mathcal{E}}}$  commute and  \emph{(iii)}   $Y^{-1}X^\textsc{d}Y$ and $\operatorname{sgn}(H)$ commute. Finally 
 \begin{equation}\label{UANWD}
 X^\textsc{nw}= Y^{-1} X^\textsc{d} Y
\end{equation}
\end{The}
\\
{\it Proof.}  The claims (i) - (iii) hold by (\ref{MPY}). As to (ii) note $W^{\textsc{d}\,mom}(b,B)\varphi (p)=\e^{-\i bp}\operatorname{diag}(B,B)\varphi(B^{-1}\cdot p)$.  (ii) holds since $X^\textsc{d}$ and $Y\operatorname{sgn}(H)Y^{-1}$ commute as in momentum representation $X^\textsc{d}$ equals $\i\nabla$ and $Y\operatorname{sgn}(H)Y^{-1}$ is a multiplication operator by a constant matrix.
 \\  
\hspace*{6mm}
For the proof of  (\ref{UANWD}) we evaluate $Z:= Y^{-1} X^\textsc{d} Y$ in momentum representation. One readily obtains $Z^{mom}\varphi (p)=(\i \nabla\varphi)(p) +\Upsilon^{-1}(p)(\i\nabla \Upsilon)(p) \varphi(p)$. A straight forward computation shows 
\begin{equation}\label{EE}
 \Phi = \Upsilon^{-1}(\nabla \Upsilon)
\end{equation}
 where   $ \Phi$ is introduced in (\ref{RNWD}). Hence $Z^{mom}=X^{\textsc{nw}\,mom}$.\qed

{\bf Remark.} In \cite{NW49},\,\cite{W62}  uniqueness of $X^\textsc{nw}$ for an irreducible massive system is achieved adding invariance under time inversion and a regularity assumption. Regarding $Z^{mom}$ in the proof of (\ref{UANWD}) the latter holds since $\Upsilon$ is smooth. Also the former $\mathcal{T}Z^{mom}\mathcal{T}^{-1}=Z^{mom}$ holds  true for the time inversion transformation $\mathcal{T}$ (\ref{TIO}). For the verification use $\overline{\slashed{p}}=\i \sigma_2 \slashed{p} \i\sigma_2$.
Since $\mathcal{T}$ and $\operatorname{sgn}(H)$ commute, too, (\ref{UANWD}) follows without any further computation.
 
In contrast to the Dirac $X^\textsc{d}$ the Newton-Wigner  $X^\textsc{nw}$ is not causal. As well-known the Newton-Wigner localization is frame-depedent. In favor of Dirac's position operator is the further  fact  that in case of minimal coupling it indicates the point of interaction of electromagnetic field and particle.  It should be hard to estimate the changes of the fine structure of atomic spectra when Dirac's position operator is replaced by Newton-Wigner's one according to (\ref{NWODO}),\,(\ref{EE}).
 \\  
\hspace*{6mm}
 However it is easy to estimate the deviation of the expectation values with respect to the electron states of the Newton-Wigner position  from those  of the Dirac position, see (\ref{EEDEV}). Note that obviously the eigenvalues of $\i\Phi^e_j(p)$ are smaller than those of  $\i\Phi_j(p)$.  It is interesting to quantify the improvement of the estimate 
in (\ref{EEDEV}) using $\lambda_j^e(p)$ in place of $\lambda_j(p)$. See appendix \ref{EVP}.
 \\  
\hspace*{6mm}
Recall (\ref{PEP}) for $\pi^+(p)$. In momentum representation  $\pi^+\varphi$ is the electron part of the Dirac state $\varphi$.

\begin{Lem}\label{ECEE}
Let $p\in\R^3$, $j=1,2,3$, and write $\epsilon$ for $\epsilon(p)$. Put $$\Phi^e_j(p):=\pi^+(p)\Phi_j(p) \pi^+(p),\quad \lambda_j^e(p):=\frac{1}{2\epsilon(\epsilon+m)}\sqrt{p^2-p^2_j}$$ Then
\begin{itemize}
\item[\emph{(i)}] $\Phi^e_j(p) = -\frac{1}{4\epsilon^2(\epsilon+m)} \left(\begin{array} {cc}  -p_j+ \slashed{p}\, \sigma_j & 0\\ 0 &-p_j+ \slashed{p}\, \sigma_j \end{array}\right)\left( \begin{array}{cc}m & \epsilon-\slashed{p} \\  \epsilon+\slashed{p} &m \end{array}\right)$
\item[\emph{(ii)}] 
If $p^2=p_j^2$ then $\Phi^e_j(p)=0$. If $p^2\ne p_j^2$ then the nonzero  eigenvalues of $\i\Phi^e_j(p)$ are simple and equal\, $\pm \lambda_j^e(p)$. 
\item[\emph{(iii)}] Let  $\varphi:=\frac{1}{2}(1+\sqrt{5})$ be the golden ratio. Then
$$ \lambda_j^e(p)\le \frac{|p|}{2\epsilon(\epsilon+m)}\le ||\lambda_j^e||_\infty =   \frac{\sqrt{\varphi}}{4\varphi+2}m^{-1}\approx 0,15\, m^{-1}$$
where the second $\le$ is an equality iff $|p|=\sqrt{\varphi}\,m$. Also   $\lambda_j^e(p)\le \frac{1}{2|p|}$.
\end{itemize}
\end{Lem}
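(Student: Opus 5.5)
The plan is to reduce everything to a few $2\times 2$ Pauli identities, namely $\slashed{p}^2=p^2$, $\slashed{p}\sigma_j+\sigma_j\slashed{p}=2p_j$ and $\slashed{p}\sigma_j=p_j+\i\sum_{k,l}\epsilon_{kjl}p_k\sigma_l$, and on a factorization of $\pi^+$. Write $D:=-p_j+\slashed{p}\sigma_j$ and $E:=(\epsilon+m)\sigma_j-\frac{1}{\epsilon}p_j\slashed{p}$. Then (\ref{RNWD}) gives $\Phi_j=\frac{1}{2\epsilon(\epsilon+m)}\left(\begin{array}{cc}D&-E\\ E&D\end{array}\right)$.

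From (\ref{PEP}) and $h(p)=\left(\begin{array}{cc}\slashed{p}&m\\ m&-\slashed{p}\end{array}\right)$ one has
$$\pi^+=\frac{1}{2\epsilon}\left(\begin{array}{cc}\epsilon+\slashed{p}&m\\ m&\epsilon-\slashed{p}\end{array}\right).$$
The $4\times4$ matrix $M:=\left(\begin{array}{cc}m&\epsilon-\slashed{p}\\ \epsilon+\slashed{p}&m\end{array}\right)$ appearing in (i) is exactly $2\epsilon\,\beta\pi^+$. So (i) is equivalent to
$$\pi^+\Phi_j\pi^+=-\frac{1}{2\epsilon(\epsilon+m)}\operatorname{diag}(D,D)\,\beta\pi^+ .$$
I would verify this by multiplying out blockwise. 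First I would compute $\Phi_j\pi^+$ and then apply $\pi^+$ on the left, repeatedly using $\slashed{p}^2=p^2$, $\epsilon^2=p^2+m^2$ and the anticommutator to move $\slashed{p}$ past $\sigma_j$. The terms from $E$ should combine with those from $D$ into the stated product. This bookkeeping is routine but error prone. As a cross-check I would use (\ref{EE}), which gives $\Phi=\Upsilon\nabla\Upsilon$ with $\pi^+=\Upsilon\operatorname{diag}(I_2,0)\Upsilon$ by (\ref{MPY}), and compare one block.

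For (ii), the Pauli identity gives $D=\i\,\sigma\cdot v$ with $v_l=\sum_k\epsilon_{kjl}p_k$ and $|v|^2=p^2-p_j^2$. Hence $D$ is skew-adjoint with $D^2=-(p^2-p_j^2)I_2$, and $D=0$ iff $p^2=p_j^2$; in that case (i) gives $\Phi_j^e(p)=0$.

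Otherwise, $\Phi^e_j$ is skew-adjoint (a compression of the skew-adjoint $\Phi_j$), so $\i\Phi^e_j$ is self-adjoint. It vanishes on $\operatorname{ran}\pi^-$, so it has rank $\le 2$. I would then show $\operatorname{tr}\Phi_j^e=0$ and $\operatorname{tr}\big((\i\Phi^e_j)^2\big)=2(\lambda_j^e)^2$. For the traces, write $\operatorname{diag}(D,D)\beta\pi^+$ using cyclicity and the facts that $\operatorname{diag}(D,D)$ commutes with $\beta$ and $\operatorname{tr}D=0$. For the second trace I would use $\pi^+\beta\pi^+=\frac{m}{\epsilon}\pi^+$ after commuting $D$ through. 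Since $\i\Phi^e_j|_{\operatorname{ran}\pi^+}$ is a traceless self-adjoint $2\times2$ operator, this forces the eigenvalues $\pm\lambda_j^e$, which are simple and nonzero. I expect this trace computation, together with the block multiplication for (i), to be the main obstacle. The delicate point is commuting $\operatorname{diag}(D,D)$ with $\pi^+$, since $D$ does not commute with $\slashed{p}$. If the trace route becomes messy, I would instead compute $(\Phi^e_j)^2$ directly from (i) and show it equals $-(\lambda^e_j)^2\pi^+$.

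For (iii), the first inequality is immediate from $p^2-p_j^2\le p^2$. For the second, scale $m=1$ and $t=|p|$, and maximize $f(t)=t/\big(2\sqrt{1+t^2}(\sqrt{1+t^2}+1)\big)$. Setting $f'(t)=0$ should reduce to $t^4=t^2+1$, i.e. $t^2=\varphi$, whence $\epsilon=\varphi$ (because $\varphi^2=\varphi+1$). Then $\epsilon(\epsilon+1)=\varphi^3=2\varphi+1$, giving the value $\sqrt{\varphi}/(4\varphi+2)$, and the critical point is the unique maximum. Equality in $\|\lambda^e_j\|_\infty$ is attained at $p_j=0$, $|p|=\sqrt{\varphi}\,m$. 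Finally, $\lambda_j^e\le\frac{1}{2|p|}$ follows from $\epsilon(\epsilon+m)\ge\epsilon^2\ge p^2$.
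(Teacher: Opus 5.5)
Your proposal is correct. For (i) and (iii) it spells out what the paper covers with ``straightforward computation'' and ``easily verified''. Writing $M=2\epsilon\,\beta\pi^+$ is a good way to organize (i), and your critical-point analysis in (iii) is right.

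For (ii) your route differs from the paper's. The paper parametrizes $\operatorname{ran}\pi^+$ by $x\mapsto(mx,(\epsilon-\slashed{p})x)$, $x\in\C^2$, and inserts this into (i). This reduces the eigenvalue problem to the $2\times 2$ equation $\i(\epsilon+\slashed{p})Dx=\mu x$ with $\mu=-2\epsilon m(\epsilon+m)\lambda$. Then $\operatorname{Tr}=0$ and $\det=-\det(\epsilon+\slashed p)\det D=-m^2(p^2-p_j^2)$ give $\mu=\pm m\sqrt{p^2-p_j^2}$. You instead stay on $\C^4$ and use skew-adjointness of $\Phi_j$ together with trace invariants. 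The paper's route needs no self-adjointness, since reality of the eigenvalues comes out of the determinant, and it is very concrete. Your route, in its cleanest form, gives the basis-free identity $(\i\Phi^e_j)^2=(\lambda^e_j)^2\pi^+$. That yields $\|\i\Phi^e_j(p)\|=\lambda^e_j(p)$ directly, which is the form in which (ii) is used in (\ref{EEDEV}).

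One correction concerns the second trace. The fact that does the work there is not $\pi^+\beta\pi^+=\frac m\epsilon\pi^+$ but the anticommutation $D\slashed p=-\slashed pD$. This is immediate from your own formula $D=\i\,\sigma\cdot v$ with $v=p\times e_j\perp p$. It gives $\operatorname{diag}(D,D)\,\pi^+=\beta\pi^+\beta\,\operatorname{diag}(D,D)$, hence $\operatorname{diag}(D,D)\beta\pi^+=\pi^+\beta\operatorname{diag}(D,D)$, and therefore
\[
(\Phi^e_j)^2=\frac{1}{4\epsilon^2(\epsilon+m)^2}\,\pi^+\beta\operatorname{diag}(D^2,D^2)\beta\pi^+=-(\lambda^e_j)^2\pi^+ ,
\]
which is your fallback.

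If $\pi^+\beta\pi^+$ appears in this computation, then $\operatorname{diag}(D,D)$ has been moved past $\pi^+$ as though they commuted. That gives $\operatorname{tr}\big((\i\Phi^e_j)^2\big)=2(\lambda^e_j)^2m^2/\epsilon^2$, i.e.\ eigenvalues wrong by the factor $m/\epsilon$.

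The identity $\pi^+\beta\pi^+=\frac m\epsilon\pi^+$ belongs to (i) instead. Combined with the anticommutation, it compresses the diagonal part $\operatorname{diag}(D,D)$ of $2\epsilon(\epsilon+m)\Phi_j$ to $\frac m\epsilon\operatorname{diag}(D,D)\beta\pi^+$. The off-diagonal $E$-part then contributes $-\frac{\epsilon+m}{\epsilon}\operatorname{diag}(D,D)\beta\pi^+$, and the two add up to (i).
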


{\it Proof.} (i) follows by straight forward computation. (ii) Obviously the electron state  $\varphi=\pi^+\varphi$ satisfies equivalently $\varphi=(m\chi, (\epsilon-\slashed{p})\chi)$ with $\chi \in L^2(\R^3,\C^2)$. Hence the eigenvalue equation reads
$$\i\Phi_j^e(p)\left(\begin{array} {c} m\,x \\ (\epsilon-\slashed{p})x \end{array}\right) =\lambda \left(\begin{array} {c} m\,x \\ (\epsilon-\slashed{p})x \end{array}\right) $$
for $x\in\C^2$, $\lambda\in\R$. It is equivalent to 
\begin{equation}
\i (\epsilon+\slashed{p})( -p_j+ \slashed{p}\, \sigma_j )x=\mu \,x\tag{*}
\end{equation}
for $\mu :=-2\epsilon m(\epsilon+m)\lambda$. Put $A:=-p_j+ \slashed{p}\, \sigma_j$. 
Now verify that $\operatorname{Tr}(A)=0$ and  $\det(A)=p^2-p_j^2 $. Hence  $\operatorname{Tr}(\i(\epsilon+\slashed{p})A)=\i\operatorname{Tr}(\epsilon A -p_j\slashed{p}+p^2\sigma_j)=0$, and $\det(\i (\epsilon+\slashed{p})A)=-\det(\epsilon+\slashed{p})\det(A)=-m^2(p^2-p_j^2)$. Therefore $\mu=\pm m\sqrt{p^2-p_j^2}$, whence $\lambda=\pm \lambda_j^e(p)$, and the claim holds. (iii) is easily  verified.\qed

\begin{Cor} Let $\i F^e$ be the trace of $\i F$ on the state space of the electron. Then $||\i F_j^e||<0,15 \,m^{-1}$, $j=1,2,3$. Similarly $X^e$ denotes the trace of $X^\textsc{d}$. Then \begin{equation}\label{EEDEV}
|\langle \psi, X_j^\textsc{nw}\psi \rangle - \langle \psi, X_j^e \psi \rangle| \le \int \lambda^e_j(p)|\varphi(p)|^2 \d^3p\le ||\lambda_j^e||_\infty
\end{equation}
holds for the electron state $\psi\in\operatorname{dom}(X_j^e)$, $||\psi||=1$, $\varphi=\mathcal{F}\psi$. (Here $X^\textsc{nw}$ denotes the NW-position operator regarding the electron.)
\end{Cor}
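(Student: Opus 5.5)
The plan is to reduce the corollary to the operator identity (\ref{NWODO}) in (\ref{RNWD}), compressed to the electron subspace, and then to the pointwise eigenvalue bound of (\ref{ECEE}).

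\textbf{Step 1: the compressed operator.} In momentum representation the trace $\i F^e_j=j_e^*\,\i F_j\,j_e$ acts on $\varphi=\pi^+\varphi$ as multiplication by $\i\pi^+\Phi_j\pi^+=\i\Phi^e_j$. Since $F_j$ is skew-adjoint, $\i\Phi^e_j(p)$ is a Hermitian matrix for every $p$. Its norm is the largest modulus of its eigenvalues. By (\ref{ECEE})(ii) this is $\lambda^e_j(p)$, and by (\ref{ECEE})(iii) it is at most $\|\lambda^e_j\|_\infty=\frac{\sqrt{\varphi}}{4\varphi+2}m^{-1}$. Numerically this constant is about $0.1486\,m^{-1}<0.15\,m^{-1}$. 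The norm of a multiplication operator is the essential supremum of the pointwise matrix norms, so $\|\i F^e_j\|\le\|\lambda^e_j\|_\infty<0.15\,m^{-1}$.

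\textbf{Step 2: identify $X^\textsc{nw}$ for the electron with a compression.} By (\ref{BTFX}) the Newton--Wigner operator commutes with $\operatorname{sgn}(H)$. Hence $X^\textsc{nw}$ of the Dirac system leaves $\mathcal{H}^e$ invariant, and its restriction there is the NW operator of the electron rep $W^e$. This holds because (\ref{BTFX}) is expressed through the generators, and these restrict to the generators of $W^e$.

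\textbf{Step 3: compress the identity.} Compressing (\ref{NWODO}) gives $X^\textsc{nw}_j=X^e_j+\i F^e_j$ on $\mathcal{H}^e$. Here $X^e_j=j_e^*X^\textsc{d}_jj_e$, which in momentum space is $\pi^+\i\partial_j\pi^+$. The identity holds on the common core $\mathcal{D}^e$. Since $\i F^e_j$ is bounded, the domains of $X^\textsc{nw}_j$ and $X^e_j$ agree on $\mathcal{H}^e$, with closures taken as needed. This domain bookkeeping is the only delicate point. I would handle it by noting that a bounded perturbation preserves closedness and domains, and by approximating $\psi$ in the graph norm by elements of $\mathcal{D}^e$.

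\textbf{Step 4: the estimate (\ref{EEDEV}).} For $\psi\in\operatorname{dom}(X^e_j)$ with $\|\psi\|=1$, Step 3 gives
\[
\langle\psi,X^\textsc{nw}_j\psi\rangle-\langle\psi,X^e_j\psi\rangle=\int\langle\varphi(p),\i\Phi^e_j(p)\varphi(p)\rangle\,\d^3p .
\]
The integrand has modulus at most $\lambda^e_j(p)|\varphi(p)|^2$, by the pointwise eigenvalue bound for the Hermitian matrix $\i\Phi^e_j(p)$. This yields the first inequality in (\ref{EEDEV}). Bounding $\lambda^e_j(p)\le\|\lambda^e_j\|_\infty$ and using $\int|\varphi|^2\,\d^3p=1$ gives the second.
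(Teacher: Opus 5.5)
Your Steps 2--4 are the intended argument. The paper gives no separate proof: the corollary is read off from (\ref{NWODO}) compressed to $\mathcal{H}^e$, which is legitimate because $X^\textsc{nw}$ commutes with $\operatorname{sgn}(H)$, together with the pointwise eigenvalues in (\ref{ECEE}). Those steps are correct. The domain bookkeeping is simpler than you suggest. Since $\i F_j$ is bounded, $\operatorname{dom}X^\textsc{nw}_j=\operatorname{dom}X^\textsc{d}_j$, so for $\psi\in\mathcal{H}^e\cap\operatorname{dom}X^\textsc{d}_j$ you get directly $X^\textsc{nw}_j\psi=P^eX^\textsc{nw}_j\psi=X^e_j\psi+\i F^e_j\psi$. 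No core or graph-norm approximation is needed.

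\textbf{The error is the numerical value in Step 1.} With $\varphi=\frac12(1+\sqrt5)$ you have $4\varphi+2=4+2\sqrt5\approx 8.4721$ and $\sqrt{\varphi}\approx 1.2720$. Hence $\frac{\sqrt{\varphi}}{4\varphi+2}\approx 0.15014$, not $0.1486$.

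In fact $\frac{\sqrt{\varphi}}{4\varphi+2}>\frac{3}{20}$ exactly. After squaring, this is $400\varphi>9(36+16\sqrt5)$, i.e.\ $200+200\sqrt5>324+144\sqrt5$, i.e.\ $56\sqrt5>124$, which is true.

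Your bound in Step 1 is also sharp. The restriction of $\i\Phi^e_j(p)$ to $\operatorname{ran}\pi^+(p)$ has norm exactly $\lambda^e_j(p)$. This function is continuous and attains its supremum at every $p$ with $p_j=0$ and $|p|=\sqrt{\varphi}\,m$ (there $\epsilon=\varphi m$). Therefore
$\|\i F^e_j\|=\|\lambda^e_j\|_\infty\approx 0.1501\,m^{-1}>0.15\,m^{-1}$.

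So the strict inequality $\|\i F^e_j\|<0.15\,m^{-1}$ cannot be proved. It is false as written, the result of turning the ``$\approx 0,15$'' of (\ref{ECEE})(iii) into a strict ``$<$''. What your argument actually establishes, and what should be stated, is
$\|\i F^e_j\|=\|\lambda^e_j\|_\infty=\frac{\sqrt{\varphi}}{4\varphi+2}\,m^{-1}\approx 0.150\,m^{-1}$, for instance $<0.151\,m^{-1}$. The estimate (\ref{EEDEV}) is unaffected.
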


The question is whether the difference $\varDelta$ of the  electron expectation values  regarding the Dirac and the Newton-Wigner position operator is experimentally ascertainable. For a rough consideration we insert the constants $\hbar$ and $c$. According to (\ref{ECEE})  the relevant lengths are $\frac{\hbar}{|p|}$ and the Compton wavelength $\frac{\hbar}{mc}$.   By (\ref{EEDEV}), $\varDelta$ is smaller than the wavelength of the electron. 
\\  
\hspace*{6mm}
Consider an electron state $\varphi$. Let the density $|\varphi(p)|^2$ be concentrated around the origin.  It implies that the position of the electron  is indetermined. As the wavelength sets the minimum space required for the free electron \cite{G90}, $\varDelta$ is not ascertainable.
\\  
\hspace*{6mm}
Conversely, if the electron is well localized around some point, then the density $|\varphi(p)|^2$ is concentrated at large values of $|p|$. As $\lambda^e_j(p)<\frac{\hbar}{|p|}$, according to  (\ref{EEDEV}) the difference $\varDelta$ is considerably smaller than the wavelength and hence presumably not ascertainable.

\subsection{Projection operator valuedness} \label{POVDL} 
By Poincar\'e covariance it follows that $T^\textsc{d}(\Delta)$ is a projection operator for every spacelike flat Borel subset $\Delta$ of spacetime. 
Here one has  the following  general result (\ref{MTPVAL}) on AL.  It does not assume covariance nor the existence of a conserved current determining the AL.

 \begin{The}\label{MTPVAL} Let $T$ be an AL assigning  a projection operator to every  spacelike flat Borel set. Assume $T(\Delta)=0$ if $\mathcal{L}^3(\varpi(\Delta))=0$ for $\Delta\in\mathcal{B}^{ach}$.
 Then $T$ is projection operator valued.
\end{The}

 The proof of (\ref{MTPVAL}) is rather lengthy and is postponed to the appendix (\ref{P7T}).
There are several  preparatory results, some of them are interesting in their own right as e.g. those in appendix $A$. (\ref{MTPVAL})  obviously
  implies the objective (\ref{DSLPV}).

\begin{Cor}\label{DSLPV}  The AL $T^\textsc{d}$ of the Dirac system  \emph{(\ref{MTDC})}  is projection valued. 
\end{Cor}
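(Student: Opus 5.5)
The corollary follows from (\ref{MTPVAL}) once its two hypotheses are verified for $T^\textsc{d}$ from (\ref{MTDC}). I will check them one at a time.

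\emph{First hypothesis: spacelike flat sets.} Let $\Delta$ be a spacelike flat Borel set, so $\Delta$ is a Borel subset of some spacelike hyperplane $\sigma$. Every spacelike hyperplane is the image of the Euclidean space $\varepsilon=\{x_0=0\}$ under some $g\in\tilde{\mathcal{P}}$; write $\sigma=g\cdot\varepsilon$. Then $\Delta=g\cdot\Delta_0$ with $\Delta_0:=g^{-1}\cdot\Delta$ a Borel subset of $\varepsilon$.

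By (\ref{REXAL}) with $\tau\equiv 0$ and Lemma (\ref{PDPC})(b), $\langle\psi,T^\textsc{d}(\Delta_0)\psi\rangle=\int_{\Delta_0}|\psi(x)|^2\,\d^3x$ for all $\psi\in\mathcal{D}$. Since $\mathcal{D}$ is dense and both sides are bounded quadratic forms, polarization gives that $T^\textsc{d}(\Delta_0)$ is multiplication by $1_{\Delta_0}$, hence a projection (this is the computation of sec.\,\ref{DNWPO}). Covariance then gives $T^\textsc{d}(\Delta)=W^\textsc{d}(g)T^\textsc{d}(\Delta_0)W^\textsc{d}(g)^{-1}$. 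A unitary conjugate of a projection is a projection, so $T^\textsc{d}(\Delta)$ is a projection.

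\emph{Second hypothesis: null spatial projection.} Let $\Delta\in\mathcal{B}^{ach}$ with $\mathcal{L}^3(\varpi(\Delta))=0$. Then the integral in (\ref{REXAL}) runs over a Lebesgue null set, so $\langle\psi,T^\textsc{d}(\Delta)\psi\rangle=0$ for every $\psi\in\mathcal{D}$.

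The one point needing care is that (\ref{REXAL}) involves $\nabla\tau$, and $\tau$ is a priori only $1$-Lipschitz, so the integrand is defined only almost everywhere. This causes no difficulty: by Rademacher's theorem $\nabla\tau$ exists a.e., the integrand is dominated by $2J_0$ by Lemma (\ref{PDPC})(d), and the integral over a null set is zero regardless of how the integrand is defined there.

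Since $T^\textsc{d}(\Delta)\ge 0$ and its quadratic form vanishes on the dense set $\mathcal{D}$, we argue as follows. From $\langle\psi,T^\textsc{d}(\Delta)\psi\rangle=0$ and positivity we get $T^\textsc{d}(\Delta)^{1/2}\psi=0$ for $\psi\in\mathcal{D}$. Boundedness and density then give $T^\textsc{d}(\Delta)^{1/2}=0$, hence $T^\textsc{d}(\Delta)=0$.

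\emph{Conclusion.} Both hypotheses of (\ref{MTPVAL}) hold for $T^\textsc{d}$, so $T^\textsc{d}$ is projection valued.
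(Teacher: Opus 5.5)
Your proof is correct and follows the paper's route. The paper likewise gets projection-valuedness on flat spacelike sets from the Euclidean computation of sec.~\ref{DNWPO} combined with Poincar\'e covariance, reads off the vanishing on sets of null spatial projection from (\ref{REXAL}), and then invokes (\ref{MTPVAL}); you have simply written out the density and positivity details that the paper leaves implicit.
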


\subsection{Causality condition, microscopic causality}\label{CCMC}

Being an AL,  $T^\textsc{d}$ satisfies the causality condition CC below.  The latter states that the probability of localization in {{\em any region of influence}}  determined by the limiting velocity of light is not less than that in the region of actual localization. {More precisely, consider  a region $\Delta'$ which is contained in a causal base $\Sigma'$.
Then $\Delta'$ is a {\bf  region of influence} of the achronal region  $\Delta$} if all causal straight  lines, which intersect  $\Delta$,  meet  $\Delta'$.
\begin{figure}[H]
	\centering
	
	\begin{tikzpicture}[x=1.25cm,y=1.25cm]
		
		\fill[backgroundcolor]
		(-4.25,-2.8) rectangle (4.25,2.8);
		
		\fill[causalcolor,opacity=0.10]
		(-3.8,2.8)
		-- (3.8,2.8)
		-- (1,0)
		-- (3.8,-2.8)
		-- (-3.8,-2.8)
		-- (-1,0)
		-- cycle;
		
		
		\draw[surfacecolor,line width=0.8pt]
		(-4.2,-2.1) -- (4.2,2.1);
		
		\draw[causalcolor,line width=0.75pt]
		(-3.8,2.8) -- (-1,0);
		
		\draw[causalcolor,line width=0.75pt]
		(-3.8,-2.8) -- (-1,0);
		
		\draw[causalcolor,line width=0.75pt]
		(1,0) -- (3.8,2.8);
		
		\draw[causalcolor,line width=0.75pt]
		(1,0) -- (3.8,-2.8);
		
		\draw[deltacolor,line width=2.8pt,line cap=round]
		(-1,0) -- (1,0);
		
		\draw[deltaprimecolor,line width=2.8pt,line cap=round]
		(-2,-1) -- (2,1);
		
		\filldraw[deltacolor,draw=white,line width=0.3pt]
		(-1,0) circle (1.4pt);
		
		\filldraw[deltacolor,draw=white,line width=0.3pt]
		(1,0) circle (1.4pt);
		
		\filldraw[deltaprimecolor,draw=white,line width=0.3pt]
		(-2,-1) circle (1.4pt);
		
		\filldraw[deltaprimecolor,draw=white,line width=0.3pt]
		(2,1) circle (1.4pt);
		
		
		\node[
		anchor=west,
		text=surfacecolor,
		font=\Large
		] at (-4.08,-2.12) {$\Sigma'$};
		
		\node[
		anchor=north,
		text=deltacolor,
		font=\Large
		] at (0,-0.18) {$\Delta$};
		
		\node[
		anchor=south,
		text=deltaprimecolor,
		font=\Large
		] at (0.70,0.55) {$\Delta'$};
		
		\node[
		anchor=west,
		text=causallabel,
		font=\Large
		] at (0.00,2.05) {$J^+(\Delta)$};

                   \node[
		anchor=west,
		text=causallabel,
		font=\Large
		] at (0.00,-2.05) {$J^-(\Delta)$};
		
	\end{tikzpicture}
	
	\caption{The achronal region $\Delta\subset\Sigma$ and its region of
		influence $\Delta'=\Sigma' \cap( J^+(\Delta)\cup J^-(\Delta))$  on the causal base $\Sigma'$. {Here $J^\pm(\Delta):=\{\mathfrak{x}+\mathfrak{z}: \mathfrak{x} \in\Delta, \mathfrak{z}  \text{ nonspacelike } ,  \pm z_0\ge 0 \} $.}}
	\label{fig:region-of-influence}
\end{figure}
An AL $T$ satisfies
\begin{equation*}
T(\Delta)\le T(\Delta')\:.  \tag{CC}
\end{equation*}

{For the proof and more details see \cite[sec.\,4, in particular (16) Theorem]{C24}.}  Obviously $T^e$  satisfies CC, too. Note that CC implies the familiar causal time evolution, i.e., $W^e(t)T^e(\Delta)W^e(t)^{-1}\le T^e(\Delta_t)$ for and every region $\Delta$ of Euclidean space and the region of influence 
$\Delta_t:=\{y\in\R^3:\,\exists \,x\in\Delta  \text{ with } |y-x|\le |t|\}$,  $t\in\R$.

\begin{Cor}\label{ASRCL} Let $\Delta, \,\Gamma \in\mathcal{B}^{ach}$. If $\Delta\cup\Gamma$ is acronal then $T^\textsc{d}(\Delta)$ and $T^\textsc{d}(\Gamma)$ commute. If  $\Delta, \,\Gamma$  are achronally separated, then $T^\textsc{d}(\Delta)$ and $T^\textsc{d}(\Gamma)$ are orthogonal, i.e.  $T^\textsc{d}(\Delta)T^\textsc{d}(\Gamma)=0$.
\end{Cor}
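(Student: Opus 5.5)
The plan is to use only two facts: $T^\textsc{d}$ is projection valued by (\ref{DSLPV}), and on every maximal achronal set it is a normalized POVM (hence a PVM).

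\emph{First claim.} Let $\Delta\cup\Gamma$ be achronal.
\begin{itemize}
\item[(1)] Extend $\Delta\cup\Gamma$ to a maximal achronal set $\Lambda$. This is possible since every achronal set lies in a maximal one, and $\Lambda$ is closed, hence Borel.
\item[(2)] Restricted to the Borel subsets of $\Lambda$, $T^\textsc{d}$ is a normalized POVM. These subsets all belong to $\mathcal{B}^{ach}$, and finite additivity on disjoint achronal sets follows from the defining property: complete any disjoint family to a partition of $\Lambda$ and compare sums.
\item[(3)] By (\ref{DSLPV}) this POVM is projection valued, hence a PVM. For a PVM one has $T(A)T(B)=T(A\cap B)$.
\item[(4)] Recall this identity quickly: $T(A)=T(A\setminus B)+T(A\cap B)$, and orthogonality of projections adding to a projection gives $T(A\setminus B)T(B)=0$.
\item[(5)] Hence $T^\textsc{d}(\Delta)T^\textsc{d}(\Gamma)=T^\textsc{d}(\Delta\cap\Gamma)=T^\textsc{d}(\Gamma)T^\textsc{d}(\Delta)$, so the two operators commute.
\end{itemize}

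\emph{Second claim.} Let $\Delta,\Gamma$ be achronally separated.
\begin{itemize}
\item[(1)] Check that $\Delta\cup\Gamma$ is achronal. Pairs within $\Delta$ or within $\Gamma$ satisfy the condition because each set is in $\mathcal{B}^{ach}$. Mixed pairs satisfy $|x_0-y_0|\le|x-y|$ by condition (b) of achronal separateness.
\item[(2)] Since $\Delta\cap\Gamma=\emptyset$ by condition (a), the first claim's argument gives $T^\textsc{d}(\Delta)T^\textsc{d}(\Gamma)=T^\textsc{d}(\emptyset)=0$.
\item[(3)] More directly: $T^\textsc{d}(\Delta)+T^\textsc{d}(\Gamma)=T^\textsc{d}(\Delta\cup\Gamma)$ is a sum of two projections that is again a projection. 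This forces them to be orthogonal, since $PQ+QP=0$ implies $PQP=-QP\cdot P$ and thus $PQ=0$ by positivity of $PQP=(QP)^*(QP)$.
\end{itemize}

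\emph{Main obstacle.} I expect the only delicate point to be justifying finite additivity of $T^\textsc{d}$ on disjoint achronal Borel sets whose union is not maximal. I will handle it by adjoining the Borel complement $\Lambda\setminus(\Delta\cup\Gamma)$ and applying normalization twice.
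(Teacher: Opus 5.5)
Your proposal is correct and follows the paper's proof. You embed $\Delta\cup\Gamma$ in a maximal achronal $\Lambda$ and use that $T^\textsc{d}$ is a projection valued measure on $\Lambda$, so $T^\textsc{d}(\Delta)T^\textsc{d}(\Gamma)=T^\textsc{d}(\Delta\cap\Gamma)$; you also make explicit what the paper leaves implicit, namely that achronally separated sets have an achronal union and that normalization yields additivity on $\Lambda$.
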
\\
{\itshape Proof.} Let $\Lambda\subset \R^4$ be maximal achronal containing $\Delta\cup\Gamma$. Then the claim follows from the fact that $T^\textsc{d}$ is a projection valued measure on $\Lambda$.\qed

As spacelike separateness implies achronal separateness it follows from (\ref{ASRCL}) that the AL $T^\textsc{d}$  and the RCL $F^\textsc{d}$ satisfy  {\bf local orthogonality} and  in particular local commutativity or  {\bf microscopic causality}. 
 \\  
\hspace*{6mm}
 However $T^e$ is no longer microscopically causal, i.e., $T^e(\Delta)$  and  $T^e(\Gamma)$ do not commute for  spacelike separated regions $\Delta$ and $\Gamma$. By
\cite[Theorem 2]{HC01} this holds true  for some subsets of parallel spacelike hyperplanes. The conclusion drawn by \cite{HC01}  in the comment to this theorem is that the (positive energy) Dirac theory permits superluminal signalling. We will get back to this subject in sec.\,\ref{EPR}.

\subsection{Localization properties of $T^e$}\label{LPTE}

There are the  important results (\ref{SESA}),\,(\ref{SESB}), (\ref{LSD})  and  (\ref{POLDE}) about  the space $\mathcal{H}^e$ and the localization $T^e$ on the Euclidean space $\R^3\equiv \{0\}\times \R^3$. One is going to extend (\ref{SESB}) and   (\ref{POLDE})  to $T^e$ on general maximal achronal regions.

\begin{The}\label{SESA}\emph{\cite[Corollary 1.7]{T92}} Let $\psi\in\mathcal{H}^e\setminus \{0\}$.  
Then the support of $\psi$\footnote{Generally  $\operatorname{supp}(\psi)\equiv     \operatorname{supp}([\psi])      :=\bigcap_{\psi'\in[\psi]}\overline{\{\psi'\ne 0\}}$. 
  Note that there is $\psi'\in[\psi]$ satisfying $\operatorname{supp}([\psi])=\overline{\{\psi'\ne 0\}}$  \cite[18.1]{C17}.} is the entire Euclidean space.
   \end{The}

 \begin{Cor}\label{SESB} Let $\psi\in\mathcal{H}^e\setminus \{0\}$.
  Let $\Delta\subsetneq \R^3$  be closed. Then $T^e(\Delta)\psi \ne \psi$.
  \end{Cor}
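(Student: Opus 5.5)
The plan is to argue by contradiction, reducing the statement to the support theorem (\ref{SESA}). Here $\Delta$ is a closed proper subset of the Euclidean space $\varepsilon\equiv\{0\}\times\R^3$. As noted in sec.\,\ref{DNWPO}, $T^\textsc{d}(\Delta)$ is then the multiplication operator by the indicator function $1_\Delta$ on $L^2(\R^3,\C^4)$. By (\ref{ALDE}),
$$T^e(\Delta)=j_e^*\,T^\textsc{d}(\Delta)\,j_e=P_e\,1_\Delta|_{\mathcal{H}^e},$$
where $P_e$ is the orthogonal projection onto $\mathcal{H}^e$.

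Suppose $\psi\in\mathcal{H}^e\setminus\{0\}$ satisfies $T^e(\Delta)\psi=\psi$. Since $1_\Delta$ is a projection, we have
$$\|\psi\|^2=\langle\psi,T^e(\Delta)\psi\rangle=\langle\psi,1_\Delta\psi\rangle=\|1_\Delta\psi\|^2,$$
and therefore $\|1_{\R^3\setminus\Delta}\psi\|^2=\|\psi\|^2-\|1_\Delta\psi\|^2=0$. So $\psi$ vanishes almost everywhere on the open set $\R^3\setminus\Delta$, which is nonempty because $\Delta\subsetneq\R^3$.

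It follows that $\operatorname{supp}([\psi])\subset\Delta$. Indeed, the representative $\psi':=1_\Delta\psi$ lies in $[\psi]$, and $\{\psi'\ne0\}\subset\Delta$ with $\Delta$ closed, so $\overline{\{\psi'\ne0\}}\subset\Delta$. Hence $\operatorname{supp}(\psi)\subset\Delta\ne\R^3$. This contradicts (\ref{SESA}), which asserts that every nonzero electron state has support equal to the entire Euclidean space.

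There is no real obstacle. The one point needing care is the passage from $\langle\psi,T^e(\Delta)\psi\rangle=\|\psi\|^2$ to $\psi=1_\Delta\psi$ almost everywhere, and this works precisely because $T^\textsc{d}(\Delta)$ is a projection on Euclidean sets. The closedness of $\Delta$ is what allows the support, as defined in the footnote to (\ref{SESA}), to be controlled by $\Delta$.
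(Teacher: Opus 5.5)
Your proof is correct and follows the route the paper intends: the corollary is stated as an immediate consequence of (\ref{SESA}) with no separate proof, and the natural derivation is exactly yours. Since $T^e(\Delta)$ is the compression of the projection $1_\Delta$, the equation $T^e(\Delta)\psi=\psi$ forces $1_{\R^3\setminus\Delta}\psi=0$, hence $\operatorname{supp}(\psi)\subset\Delta\ne\R^3$, contradicting the support theorem.
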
  
  
 Moreover, one has  
  
\begin{The} \emph{\cite{H85}}\label{LSD}  Let  $K >2m$. 
 The spatial probability in any state $\psi$ satisfies
\begin{equation*}
\langle \psi, T^e(\{x\in \R^3:\,|x|>r\})\psi\rangle \notin \mathcal{O}(\operatorname{e}^{-Kr}), \quad r\to \infty
\end{equation*}
 \end{The}
 \\
 So causal time evolution not only forbids localized states in non-essentially dense regions  but  requires also a limited exponential decay of the spatial probability. The limit is determined by   the Compton wavelength  $\lambda_C=\frac{\hbar}{mc}$. This is an interesting behavior of free relativistic systems.  But certainly it is not a paradox as sometimes 
it is referred to in the literature, neither does it mean that ``arbitrarily good localization'' \cite[sec.\,5]{H01} is impossible. Indeed, the localization  $T^e$  satisfies CC {\it and} is separated (\ref{ASSLA}) on every causal base (\ref{PCB}).

 By relativistic symmetry Euclidean space is not distinguished from any other spacelike hyperplane. Actually one recalls \cite[sec.\,3]{C24} that relativistic symmetry combined with causality necessitates to consider the more general   maximal achronal sets and their Borel subsets. 
  \\  
\hspace*{6mm}
 One keeps in mind the fact (\ref{AADAC}) that with respect to achronal separateness the completion $\Delta^\land$ of an achronal set $\Delta$  coincides with  its set  $\Delta^\sim$ of determinacy.
We will use this characterization of the completion as it is natural regarding the statements and proofs in the following. Recall that $\Delta$ is said to be determining if 
$\Delta^\sim\setminus\Delta \ne \emptyset$. It means that there is  $\mathfrak{x} \not\in \Delta$ such that every timelike  straight line through $\mathfrak{x}$  meets $\Delta$.
 \\  
\hspace*{6mm}
The result (\ref{SESB}) is a special case of  (\ref{ESES}).

\begin{The}\label{ESES} Let $\Delta\in\mathcal{B}^{ach}$. Suppose there is a determining $\Gamma\in\mathcal{B}^{ach}$ achronally separated from $\Delta$.  Then $T^e(\Delta)\psi \ne \psi$ for  $\psi\in\mathcal{H}^e\setminus \{0\}$.
 \end{The} 
 
{\it Proof.}  We anticipate (\ref{SPNO}). Accordingly, $\Gamma^\sim$      contains  $\mathfrak{a}+B_R$ for some    $\mathfrak{a}$ and $R>0$ with $B_R:=\{\mathfrak{y}: y_0=0, |y|<R\}$. By translation-covariance it is no restriction to assume     $\mathfrak{a}=0$.              
Note $\Gamma\subset \Delta^\perp$, whence $B_R\subset \Gamma^\sim = \Gamma^\land \subset \Delta^\perp$.
Hence $\Delta^\land\subset B_R^\perp =\{\mathfrak{x}:|x_0|\le |x|-R\}$. It follows $T^e(\Delta)=F^e(\Delta^\land)\le F^e(B_R^\perp)=T^e(\{|x|\ge R\})$ as $\{\mathfrak{x}: x_0=0,|x|\ge R\}$ is maximal achronal in $B_R^\perp$. By (\ref{SESB}) one has $T^e(\{|x|\ge R\})\psi\ne\psi$ for $\psi\in\mathcal{H}^e\setminus \{0\}$, whence the result.\qed
  
 \begin{Cor}  Let $\Delta\in\mathcal{B}^{ach}$ be bounded. Then $T^e(\Delta)\psi \ne \psi$ for  $\psi\in\mathcal{H}^e\setminus \{0\}$.
 \end{Cor}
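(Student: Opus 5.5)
The plan is to reduce the corollary to (\ref{ESES}). For this it suffices to exhibit, for a bounded $\Delta\in\mathcal{B}^{ach}$, a determining set $\Gamma\in\mathcal{B}^{ach}$ that is achronally separated from $\Delta$. Once such a $\Gamma$ is found, (\ref{ESES}) yields $T^e(\Delta)\psi\ne\psi$ for every $\psi\in\mathcal{H}^e\setminus\{0\}$.

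Since $\Delta$ is bounded, choose $R>0$ with $\Delta\subset\{\mathfrak{x}: |x_0|<R,\ |x|<R\}$. Then take $\Gamma$ to be a small flat open ball far away in space, namely
\[
\Gamma:=\{\mathfrak{y}: y_0=0,\ |y-a|<1\},\qquad a\in\R^3,\ |a|\ge 2R+1 .
\]
$\Gamma$ is a Borel subset of a spacelike hyperplane, so $\Gamma\in\mathcal{B}^{ach}$. Achronal separateness from $\Delta$ is checked directly. For $\mathfrak{x}\in\Delta$ and $\mathfrak{y}\in\Gamma$ one has $|x-y|\ge |a|-1-|x|> 2R-R=R>|x_0|=|x_0-y_0|$. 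This estimate also shows that the two sets are disjoint.

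It remains to verify that $\Gamma$ is determining, and this is the only step needing a short argument. I would take $\mathfrak{z}:=(\tfrac12,a)$, so that $\mathfrak{z}\notin\Gamma$ because $z_0\neq0$. Any timelike direction $\mathfrak{w}$ has $w_0\ne0$ and $|w|<|w_0|$. The line $\mathfrak{z}+s\mathfrak{w}$ meets $\{y_0=0\}$ at $s=-\tfrac{1}{2w_0}$, where the spatial point is $a-\tfrac{w}{2w_0}$. Its distance from $a$ is $\tfrac{|w|}{2|w_0|}<\tfrac12<1$, so the intersection point lies in $\Gamma$. Hence $\mathfrak{z}\in\Gamma^\sim\setminus\Gamma$, i.e. $\Gamma$ is determining. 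Consistent with the introduction's remark, a spacelike region with an interior point is determining, but here the direct check is enough.

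Combining the three facts, (\ref{ESES}) applies and the corollary follows. There is no real obstacle; the only care needed is the strict inequalities, which guarantee disjointness and the separation condition $|x_0-y_0|\le|x-y|$.
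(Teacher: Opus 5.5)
Your proof is correct and matches the paper's: it also takes a translated flat open unit ball $\Gamma$ achronally separated from $\Delta$ and applies (\ref{ESES}). The only difference is that you write out the separation estimate and the check that $\Gamma$ is determining, which the paper leaves implicit.
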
 
 
 {\it Proof.} Choose $\mathfrak{a}\in \R^4$ such that $\Gamma:=\mathfrak{a} +\{\mathfrak{y}: y_0=0, |y|<1\}$ is achronally separated from $\Delta$. Apply (\ref{ESES}).\qed
  
A state $\psi\in \mathcal{H}^e$, $||\psi||=1$ is said to be {\bf localized} in $\Delta \in \mathcal{B}^{ach}$ if $T^e(\Delta)\psi=\psi$. Hence by (\ref{SESB}),  there are no localized electron states in proper closed subsets and a fortiori in bounded regions of Euclidean space.
(The question is left open whether there is a state localized in an open dense $\Delta$ such that $\R^3\setminus\Delta$ has positive Lebesgue measure.) Equivalently the  probability of localization of the electron  in a  proper closed subset of Euclidean space is less than $1$.  
\\
\hspace*{6mm}
So   a more general property of localization   is introduced \cite{C81}, which is mathematically weaker than that of a localized state  in $\Delta$ but physically equivalent to it, i.e., $||T^e(\Delta)|| = 1$.  Indeed, as
\begin{equation*} \label{CID}
||T^e(\Delta)||= \sup \left\{\langle \psi, T^e(\Delta)\psi\rangle:\, ||\psi||=1\right\}
\end{equation*}
 holds, norm $1$  means that  the electron can be localized within that region  $\Delta$ by a suitable preparation, not strictly but as accurately as desired. By a consequence of \cite[Theorem 8]{CL15} one has the physically very satisfactory property regarding $T^e$  on Euclidean space.

\begin{The}\label{POLDE} \emph{\cite[(88)]{C17}}  Let the Borel set $\Delta$ of Euclidean space have an interior point. Then $||T^e(\Delta)||=1$.
\end{The}

The following statement  is obviously equivalent to  $||T^e(\Delta)||=1$ for all  Borel $\Delta$ with interior points:
For  every point $a$ of Euclidean space
there is a sequence $(\psi_n)$ of electron states satisfying 
\begin{equation}\label{ASSL}
\big\langle \psi_n,T^e(B )\, \psi_n\big\rangle \to 1,  \quad n\to \infty
\end{equation}
for every  open ball $B$ around $a$.  According to  \cite[sec.\,G]{CL15}, $(\psi_n)$  is called a sequence of states localized  at $a$,  and  $T$ is called  separated.  Point-localized sequences of states are studied and constructed  in \cite[sec.\,H]{CL15}.    See also   \cite[sec.\,8,\,18]{C17}. Explicit examples of point-localized sequences of states regarding the electron localization $T^e$ are given in \cite[sec.\,I]{CL15} and  \cite[sec.\,18.2]{C17}. Certain of those 
  have already been  constructed and studied in great detail  in   \cite{BM99} and     \cite{BFM05}. The concept there is further formalized in   \cite{M02}. In regard to \cite[(92) Example]{C17} see also \cite[sec.\,2]{BK03}.  The research on the subject still goes on as in \cite{BT26}.
 \\
\hspace*{6mm}
 According to (\ref{POLDE}), $T^e$ on Euclidean space  is separated. Hence by a suitable preparation the electron can be localized around $a$ as good as desired, thus distinguishing $a$ from any other point.

Regarding general achronal sets  $\Delta\subset \R^4$,
by definiton $\mathfrak{x}\in\Delta$ is an interior point of $\Delta$ if there is an open ball $B$ in $\R^3$ around $x$ contained in $\varpi(\Delta)$. Now
let $\Lambda$ be any maximal achronal set with $\Delta\subset \Lambda$. Consider $\Lambda$ to be a subspace of $\R^4$ equipped with the trace topology. Let  $\mathfrak{x}\in\Lambda$. The open {\bf achronal balls}  $U:= \Lambda\cap (\R\times B)$ around $\mathfrak{x}$, where $B$ is an open ball in $\R^3$ around $x$, constitute a local base of $\mathfrak{x}$. Obviously $\varpi(U)=B$.
So $\mathfrak{x}$ is an interior point of $\Delta$ if and only if $\Delta$ is a neighborhood of $\mathfrak{x}$, i.e., $\Delta$ contains an open achronal ball around  
 $\mathfrak{x}$. 
 \\
 \hspace*{6mm}
 Let $\psi_n\in\mathcal{H}^e$, $||\psi_n||=1$, and $\mathfrak{x}$ an interior point of $\Delta$. By definition  $(\psi_n)$ is a {\bf sequence of states localized at} $\mathfrak{x}$  if
 \begin{equation}\label{ASSLA}
\big\langle \psi_n,T^e(U )\, \psi_n\big\rangle \to 1,  \quad n\to \infty
\end{equation} 
for every  open achronal ball $U$ around $\mathfrak{x}$ contained in $\Delta$. If $\Delta$ is maximal achronal and if there is a point-localized sequence for every $\mathfrak{x}\in\Delta$, then $T^e$ is said to be {\bf separated on} $\Delta$.
 \\
 \hspace*{6mm}
The following result  (\ref{SPNO}) is decisive. On it depend (\ref{ESES}), (\ref{CSPNO}) - (\ref{PCB}). 

\begin{The}\label{SPNO} Let $\Delta\in\mathcal{B}^{ach}$ be
determining. Then  $||T^e(\Delta)||=1$. Let $(\alpha,a)\in \Delta^\sim\setminus \Delta$. Put $\delta:=|\alpha-\beta|$ for  $(\beta,a)\in\Delta$. 
 Then $$(\text{\SMALL{$\frac{\delta}{2}$}},a)+  \{\mathfrak{y}=(0,y):|y|<  \text{\SMALL{$\frac{\delta}{2}$}} \}\subset \Delta^\sim$$
and  $(\beta,a)$  is an interior point of $\Delta$.
\end{The}

{\it Proof.} Let $\Delta=\operatorname{graph}\tau$ with $\tau:\varpi(\Delta)\to \R$ a $1$-Lipschitz function.   Fix $\mathfrak{a}\in \Delta^\sim\setminus \Delta$.  By translation covariance it suffices   to treat the case  $\mathfrak{a}=(\alpha,0)$ and   $0\in \Delta$. Assume $\alpha>0$ ($\alpha<0$ is treated analogously). Then 
$|\tau(x)|\le  |x|$ for $\tau(0)=0$.

(a) As $\mathfrak{a}\in\Delta^\sim$,  for every $z\in\R^3$ with $|z|<1$ there is just one $t=T(z)\in\R$ such that $\mathfrak{a}+t(-1,z)=(\alpha-t,tz) \in\Delta$. 
\begin{itemize}
\item[(i)] $T(z)>\frac{\alpha}{2}$ for $|z|<1$
\end{itemize}
Indeed, note $ \alpha-T(z)=\tau\big(T(z)z\big)$ and   $|\tau\big(T(z)z\big)|\le |T(z)z|<|T(z)|$, whence the claim
\begin{itemize}
\item[(ii)] $T$ is continuous.
\end{itemize}
 Indeed, $|T(z')-T(z)|  = |  \tau\big(T(z')z'\big)-\tau\big(T(z)z\big)|  \le | T(z')z'- T(z)z|  =| \big(T(z')-T(z)\big)z'+ T(z)(z'-z)|)\le |T(z')-T(z)||z'|+|T(z)||z'-z| $, whence 
$|T(z')-T(z)|\le T(z)\frac{|z'-z|}{|1-|z'||}\to 0$ for $z'\to z$.

\hspace*{6mm}
  Let $\mathfrak{y}:=(\frac{\alpha}{2},y)$ with $|y|<\frac{\alpha}{2}$. We are going to show that $\mathfrak{y}\in \Delta^\sim$, i.e., that every straight timelike line $\mathfrak{y}+s(-1,e)$,  $|e|<1$, $s\in\R$  meets $\Delta$.

(b) Note that   $\mathfrak{y}+s(-1,e) \in\Delta$  implies $s>0$. Indeed,  as $\tau(y+se)=\frac{\alpha}{2}-s$, one has $|\frac{\alpha}{2}-s|\le |y+se|\le |y|+|s|<\frac{\alpha}{2}+|s|$, whence the claim.  Moreover, if $y=\frac{\alpha}{2}e$, then $\mathfrak{y}+\R(-1,e)$ insersect $\Delta$ at  $\mathfrak{a}+T(e)(-1,e)$. Therefore one may assume $s>0$ and  $y\ne\frac{\alpha}{2}e$. The following is easy to verify. 
 \begin{itemize}
\item[(iii)]   Let $y\ne\frac{\alpha}{2}e$. Then $\mathfrak{a}+\R(-1,z)$ and  $\mathfrak{y}+\R_+(-1,e) $  intersect at  $\mathfrak{a}+t(-1,z)$ if and only if $$t>\frac{\alpha}{2} \text{ \,and\, \,} z= e +\frac{1}{t}(y-\frac{\alpha}{2}e)$$
 It follows $z\ne e$ and $t=\frac{|y-\frac{\alpha}{2}e|}{|z-e|}$.
\end{itemize}

(c) Put $Z:[\frac{\alpha}{2},\infty]\to \{z:|z|<1\}$, $Z(t):=e +\frac{1}{t}(y-\frac{\alpha}{2}e)$ for $t<\infty$ and $Z(\infty):=e$. Obviously $Z$ is continuous. By (ii), $X:[\frac{\alpha}{2},\infty[\to \R$, $X(t):=T\big(Z(t)\big) -    \frac{|y-\frac{\alpha}{2}e|}{|Z(t)-e|}$ is continuous. Note that $T\big(Z([\frac{\alpha}{2},\infty])\big)$ is compact and, by (i), contained in $]\frac{\alpha}{2},\infty[$,  $X(\frac{\alpha}{2})= T(\frac{2}{\alpha}y)-\frac{\alpha}{2}>0$, and $X(t)\to -\infty$ for $t\to\infty$. Hence by the  intermediate value theorem there is $t^*>\frac{\alpha}{2}$ with $X(t^*)=0$. It follows $t^*= \frac{|y-\frac{\alpha}{2}e|}{|Z(t^*)-e|} = T\big(Z(t^*)\big) $. So, by (iii),  $\mathfrak{y}+\R(-1,e) $ intersects $\Delta$ at $\mathfrak{a}+t^*(-1,Z(t^*))$.
  
(d) Hence $K:=( \frac{\alpha}{2},0)+  \{\mathfrak{y}=(0,y):|y|<\frac{\alpha}{2}\}\subset \Delta^\sim$. $||T^e(K)||=1$ by (\ref{POLDE}) due to time-translation covariance of $T^e$.  Recall  $\Delta^\sim=\Delta^\land$ (\ref{AADAC}) and $F^e(A^\land)=T^e(A)$ (\ref{MTCL}) for $A =K,\,\Delta$. It follows $K^\land\subset \Delta^\land$ and $T^e(K)=F^e(K^\land) \le F^e(\Delta^\land)=T^e(\Delta)$, whence  $||T^e(\Delta)||=1$. Finally note $\{ y\in\R^3: |y|<\frac{\alpha}{2}\}=\varpi(K)\subset \varpi(\Delta^\sim) = \varpi(\Delta)$ thus finishing the proof.
\qed

\begin{The}\label{CSPNO} Let $\Delta \in\mathcal{B}^{ach}$ be compact and spacelike. Suppose $(\alpha,a)\in \Delta^\sim\setminus \Delta$.  
Put $\delta:=|\alpha-\beta|$ for  $(\beta,a)\in\Delta$. Let $0\le \lambda_k\uparrow_k 1$. Let $(\psi^a_n)$  be a sequence of states localized  at $a$ in Euclidean space. Given $k\in\N$ choose  a subsequence $n_k\in\N$ such that 
$$\langle \psi^a_{n_k}, T^e(\{y\in\R^3:|y|< 2^{-k}\delta \})  \psi^a_{n_k} \rangle\ge \lambda_k$$
Define  by  time-translation $$\psi_k:=W^e\big(2^{-k}\delta\big) \psi^a_{n_k}$$ 
Then  $(\beta,a)$ is an   interior point of $\Delta$ by \emph{(\ref{SPNO})} and  $(\psi_k)$ is a sequence of states localized at $(\beta,a)$.
\end{The}

{\it Proof.}  Let $\Delta=\operatorname{graph}\tau$ with $\tau:\varpi(\Delta)\to \R$ a $1$-Lipschitz function.  By translation covariance it suffices   to treat the case  $\mathfrak{a}=(\alpha,0)$ and   $0\in \Delta$. Assume $\alpha>0$ ($\alpha<0$ is treated analogously). Then $(\beta,a)=0$, $\delta=\alpha$, and $|\tau(x)|<  |x|$ for  $x\ne 0$ since $\tau(0)=0$ and $\Delta$ spacelike.

(a) 
By time-translation covariance $$\langle \psi_k, T^e(K_k)  \psi_k \rangle\ge \lambda_k, \quad 
K_k:=(2^{-k}\alpha,0)+ \{(0,y)\in\R^4:|y|< 2^{-k}\alpha\}$$
 For $k=1$, $K_1$ equals $K$ in (d) of the proof of (\ref{SPNO}). 
Hence $K_1\subset \Delta^\sim$, whence 
$(2^{-1}\alpha,0)\in\Delta^\sim\setminus \Delta$. Now one repeats the construction in (\ref{SPNO}) for $(2^{-1}\alpha,0)$ in place of $(\alpha,0)$ and so on. Put 
$$\Delta_k:=\{\mathfrak{x}=(2^{-k+1}\alpha-T_k(z),\, T_k(z)z):|z|<1\}$$
where,  $T_1:=T $  refers to $(\alpha,0)$, and analoguously $T_k$ refers to $(2^{-k}\alpha,0)$. Obviously $0\in \Delta_k\subset \Delta$ and $K_k\subset \Delta_k^\sim$. Hence $T^e(K_k)\le T^e(\Delta_k)$, see (d) of the proof of (\ref{SPNO}).

(b) The claim is $\Delta_k\subset \Delta_l$ if $l<k$. Indeed, let $\mathfrak{x}=(2^{-k+1}\alpha-T_k(z),\, T_k(z)z) \in\Delta_k$. Note  $\alpha_k<\alpha_l$. Put $t:=\alpha_l-\alpha_k+T_k(z)$ and $z'=T_k(z)/t$. Then $|z'|<1$ and $\mathfrak{x}=(2^{-l+1}\alpha- t,\, tz')$. This proves $\mathfrak{x}\in\Delta_l$ with  $t=T_l(z')$, see (a) of the proof of (\ref{SPNO}).

(c) Let $R>0$. The claim is $\varpi(\Delta_k) \subset B_R:=\{x\in\R^3:|x|<R\}$ for some $k$. 
Indeed, assume the contrary, i.e., for every $k$ there is $x_k\in\varpi(\Delta_k)$ with $|x_k|\ge R$.
As $\varpi(\Delta)$ is compact and $(x_k)$ in   $\varpi(\Delta)$, there is $(x_{k_l})$ and $x\in \varpi(\Delta)$ with $x_{k_l}\to x$. 
\\
\hspace*{6mm}
Then $|x|\ge R$, whence $x\ne 0$.  As $x_k\in\varpi(\Delta_k)$, $(\tau(x_k),x_k)\in\Delta_k$ and hence $(\tau(x_k),x_k)=(2^{-k+1}\alpha-T_k(z_k), T_k(z_k)z_k)$ for some $|z_k|<1$. Assume without restriction $z_{k_l}\to z$, $|z|\le1$. Put $t_l:=T_{k_l}(z_{k_l})$. It follows $\lim_lt_l=-\tau(x)$, $\lim_l t_lz_{x_l} =-\tau(x) z=x$. As $|\tau(x)|<|x|$ one has  the contradiction $|\tau(x)|< |\tau(x)||z|\le|\tau(x)|$.

(d) Let $U\subset \Delta$ be an open achronal ball around $0$.  Let $\epsilon>0$. By (c) and (b)  there is $l$ such that $\Delta_k\subset U$ and  
$\langle \psi_k, T^e(K_k)  \psi_k \rangle\ge 1-\epsilon$ for all $k\ge l$. Hence, by (a),  $\langle \psi_k, T^e(U)  \psi_k \rangle\ge 1-\epsilon$ for all $k\ge l$.\qed

We like to recall the comments on (\ref{ASSL}), whence  explicit examples for point-localized sequences of states $(\psi_k)$ in (\ref{CSPNO}) are available.

\begin{The}\label{SLIP}  Let $\Delta\in \mathcal{B}^{ach}$ be spacelike and have an interior point. Then 
 $\Delta$ is determining and, by \emph{(\ref{SPNO})},  $||T^e(\Delta)||=1$.
\end{The}

{\it Proof.} Without restriction let the interior point of $\Delta$ be the origin $0$ and let  $\varpi(\Delta)=B_R$, where $B_R:=\{x\in\R^3: |x|\le R\}$ for some finite $R>0$. Let $\Delta=\operatorname{graph}(\tau)$. One has $|\tau(x)|<|x|\le R$ as $\tau(0)=0$. Since $\tau(B_R)$ is compact there is $0<\delta< R$ such that $|\tau(x)|\le R-\delta$, whence in particular $\tau(x)\ge \delta -R$ for $|x|\le R$.
\\
\hspace*{6mm}
One is going to show that every causal straight  half-line $$\gamma(t):= (\delta-t,tz), \;t\ge 0$$ for $z\in\R^3$, $|z|\le1$ meets $\Delta$, whence $(\delta,0)\in \Delta^\sim\setminus \Delta$ proving the claim.  
\\
\hspace*{6mm}
Put $\Delta^+:=\{\mathfrak{x}:x_0\ge \tau(x)\}$ and similar $\Delta^-$. Note $\gamma(0)\in\Delta^+$ for every $z$ as $\delta>0$. For $z=0$, $\gamma(R)\in\Delta^-$ as $\delta<R$. Now let $z\ne0$. Then $\gamma(\frac{R}{|z|}) \in\Delta^-$, since  $\delta -\frac{R}{|z|}\le \delta- R\le \tau(\frac{R}{|z|}z)$. It follows by the intermediate value theorem that $\gamma(t^*)\in\Delta$ for some $t^*>0$.\qed

\begin{Cor}\label{PCB}
Let $\Sigma$ be a causal base. Recall that  $\Sigma$  is maximal achronal, whence every point of $\Sigma$ is an interior point of $\Sigma$.  There is a  point-localized sequence of states for every point in  $\Sigma$, i.e., $T^e$ is separated on $\Sigma$. Let $U \subset \Sigma$ be an  open achronal ball. Then in particular  $||T^e(U)||=1$. Actually 
$U$ is dertermining.
\end{Cor}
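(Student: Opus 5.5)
The plan is to reduce every assertion of the corollary to (\ref{SLIP}), (\ref{SPNO}) and (\ref{CSPNO}). The only geometric input needed is that a small closed achronal ball on a causal base is compact, spacelike, and has an interior point.

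Fix $\mathfrak{x}\in\Sigma$ and an open achronal ball $U=\Sigma\cap(\R\times B)$ around $\mathfrak{x}$, with $B$ an open ball in $\R^3$ of radius $r$ around $x$. Since $\Sigma$ is maximal achronal, $\varpi(\Sigma)=\R^3$, so $\Sigma=\operatorname{graph}\tau$ with $\tau:\R^3\to\R$ $1$-Lipschitz and continuous. Hence $\mathfrak{x}$ is an interior point of $\Sigma$, and of $U$ as well.

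First I show that $U$ is determining. Take the closed ball $\bar B'$ of radius $r/2$ about $x$ and put $\Delta:=\Sigma\cap(\R\times\bar B')$. It is compact, being the graph of a continuous function on a compact set. It is spacelike as a subset of the spacelike set $\Sigma$, and it has the interior point $\mathfrak{x}$. Now apply the argument of (\ref{SLIP}) to $\Delta$ after translating $\mathfrak{x}$ to $0$. There $\varpi(\Delta)$ is taken to be the closed ball $B_R$, so compactness of $\tau(B_R)$ and $|\tau(y)|<|y|$ for $y\ne0$ give some $\delta\in\,]0,R[$ with $(\delta,0)\in\Delta^\sim\setminus\Delta$. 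Since $\Delta\subset U$, we get $(\delta,0)\in U^\sim$. We also need $(\delta,0)\notin U$. This holds because $U$ is achronal and $\varpi$ restricted to $U$ is injective: the only point of $U$ over $x$ is $\mathfrak{x}=0\ne(\delta,0)$. So $U$ is determining, and (\ref{SPNO}) gives $\|T^e(U)\|=1$.

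For separatedness, apply (\ref{CSPNO}) to the compact spacelike determining set $\Delta$ with the point $(\delta,a)\in\Delta^\sim\setminus\Delta$, where $a=x$. Point-localized sequences $(\psi_n^a)$ in Euclidean space exist by (\ref{POLDE}) and the remarks after (\ref{ASSL}). Then (\ref{CSPNO}) produces a sequence $(\psi_k)$ localized at $\mathfrak{x}$ relative to every open achronal ball contained in $\Delta$.

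The main obstacle is that definition (\ref{ASSLA}) on $\Sigma$ requires convergence for every open achronal ball $U'$ around $\mathfrak{x}$ in $\Sigma$, not only for those inside $\Delta$. I would handle this by monotonicity. Any $U'$ contains a smaller open achronal ball $U''\subset\Delta$ around $\mathfrak{x}$. Since $T^e$ is a POVM on $\Sigma$, $T^e(U'')\le T^e(U')$, and therefore $\langle\psi_k,T^e(U')\psi_k\rangle\to1$. As $\mathfrak{x}\in\Sigma$ was arbitrary, $T^e$ is separated on $\Sigma$.
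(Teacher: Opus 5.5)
Your proof is correct. It has the same skeleton as the paper's: find a compact spacelike piece of $\Sigma$ together with a determining point above $\mathfrak{x}$, feed it into (\ref{CSPNO}), and get the last claim from (\ref{SLIP}). The difference lies in which compact set you use.

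The paper places $\mathfrak{a}=(\alpha,0)$ above $\mathfrak{x}=0$ and uses $\Delta_\alpha=\Sigma\cap C_\alpha$, where $C_\alpha$ is the past light cone of $\mathfrak{a}$. There, $\mathfrak{a}\in\Delta_\alpha^\sim\setminus\Delta_\alpha$ because every half-line from $\mathfrak{a}$ in $C_\alpha$ meets the causal base $\Sigma$. Compactness of $\Delta_\alpha$ is quoted from the proof of \cite[(36)]{C24}. You instead take the small closed achronal ball $\Sigma\cap(\R\times\bar B')$, which is trivially compact, and produce the determining point $(\delta,0)$ by the argument of (\ref{SLIP}).

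This makes your proof self-contained. Moreover, you only use that $\Sigma$ is spacelike and maximal achronal, never that it meets every causal line. So your argument in fact gives separatedness of $T^e$ on every spacelike maximal achronal set, for instance the hyperboloid $x_0=\sqrt{1+|x|^2}$, which is not a causal base. The paper's choice, in turn, picks out the geometrically natural set from which $\mathfrak{a}$ is determined.

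You also spell out two steps the paper leaves to the reader. The first is the monotonicity argument that passes from open achronal balls inside the compact set to all open achronal balls of $\Sigma$, as (\ref{ASSLA}) requires. The second is obtaining $\|T^e(U)\|=1$ from $U$ being determining via (\ref{SPNO}); the paper instead reads it off from separatedness.
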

\\
{\it Proof.}  Let  $\mathfrak{a}   \in \R^4\setminus \Sigma$.  Again one may assume $0\in\Sigma$ and $\mathfrak{a}=(\alpha,0)$ with $\alpha>0$.
 Let $C_\alpha$ be  the past  light cone with vertex $\mathfrak{a}$.  Then every straight half-line in $C_\alpha$ starting at $\mathfrak{a}$ meets $\Sigma$. As shown in the proof of  \cite[(36)]{C24} $\Delta_\alpha:=\Sigma \cap C_\alpha$ is compact.
Then the result except for the last claim holds by (\ref{CSPNO}) and (\ref{ASSLA}). The last claim holds by (\ref{SLIP}).\qed

For the result (\ref{PCB}) one recognizes that regarding achronal  localization causal bases generalize adequately the spacelike hyperplanes regarding a localization for flat spacelike regions.
\\
\hspace*{6mm}
 As revealed in sec.\,\ref{HBLALLA} the light hyperplanes play a decisive role in achronal localzation due to their particular features. A  peculiarity is the fact that any bounded subset $\Delta$ of  a light hyperplane is already complete, i.e., it satisfies $\Delta=\Delta^\sim=\Delta^\land$ even if $\Delta$ has an interior point. Hence (\ref{SPNO}) does not apply
and it is left open whether $||T^e(\Delta)||=1$ holds true for an open bounded subset $\Delta$ of a light hyperplane.
\\
\hspace*{6mm}
 Let $\kappa$ be a light hyperplane. This means $\kappa=\{\mathfrak{x}\in\R^4: \mathfrak{x}\cdot \mathfrak{e}=\rho\}$ for some unique $\rho\in\R$ and  
    lightlike  $\mathfrak{e}=(1,e)\in\R^4$, $|e|=1$.  Let $\mathfrak{a}\in\kappa$ and $\lambda\in\R \setminus \{0\}$. The  subset 
\begin{equation}\label{OAP}
\mathfrak{a}+\{\mathfrak{x}\in\R^4: \mathfrak{x}\cdot \mathfrak{e}=0, |x|^2-(xe)^2< 2\lambda\, xe\} 
\end{equation}
of $\kappa$ is called an open achronal {\bf paraboloid}.

\begin{Pro}\label{COAP}      Let $\Delta\subset \kappa$. Then 
$\Delta$ is determining if and only if $\Delta$ contains an open achronal paraboloid.
\end{Pro}    
\\
{\it Proof.}      Let $\mathfrak{y}\not\in \kappa$. We  determine the intersection $S$ of $\kappa$ with the interior of the light-cone with apex $\mathfrak{y}$.  Note  $\mathfrak{z}\cdot \mathfrak{e}\ne 0$ for $\mathfrak{z}^{\cdot 2}>0$. So for every timelike $\mathfrak{z}$ one finds  $(\mathfrak{y}-s\mathfrak{z})\cdot \mathfrak{e}=\rho$ with $s=\lambda/(\mathfrak{z}\cdot \mathfrak{e})$ for $\lambda:=\rho-\mathfrak{y}\cdot\mathfrak{e}\ne 0$. One  checks that $\mathfrak{x}=\mathfrak{z}/(\mathfrak{z}\cdot \mathfrak{e})$ for some timelike $\mathfrak{z}$ if and only if $\mathfrak{x}\cdot \mathfrak{e}=1$ and $|x|<|x_0|$. Hence $S=\mathfrak{y}+\lambda\{\mathfrak{x}:\mathfrak{x}\cdot \mathfrak{e}=1, |x|<|x_0|\}$.  By the substitution $\mathfrak{x}=\frac{1}{\lambda}\mathfrak{x}'+\frac{1}{2}(1,-e)$,  $S$ equals  (\ref{OAP}) 
for  $\mathfrak{a}:=\mathfrak{y}+\frac{1}{2} \lambda \,(1,-e) \in\kappa$.
\\
\hspace*{6mm}
Hence   $\mathfrak{y}\in \Delta^\sim\setminus \Delta$ if and only if $S\subset\Delta$.
Conversely, let $Q$ be an achronal paraboloid be contained in $\Delta$. Put  
$\mathfrak{y}:= \mathfrak{a}-\frac{1}{2} \lambda \,(1,-e)\not\in\kappa$ as $\lambda\ne 0$. Then $Q$ equals $S$. This ends the proof.
\qed

\begin{Cor} Let $\Delta\in\mathcal{B}^{ach}$ be a subset of a light hyperplane  $\kappa$. Let $\Delta$ contain an open achronal paraboloid.  Then $||T^e(\Delta)||=1$ and no state $\psi\in\mathcal{H}^e$ is localized in $\kappa\setminus \Delta$.
\end{Cor}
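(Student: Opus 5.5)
The statement has two parts, and each is obtained by combining (\ref{COAP}) with an earlier result.

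For the norm claim, note first that $\Delta\subset\kappa$ contains an open achronal paraboloid. By (\ref{COAP}), $\Delta$ is therefore determining. Since $\Delta\in\mathcal{B}^{ach}$, (\ref{SPNO}) applies directly and gives $||T^e(\Delta)||=1$. For the record, the mechanism is as follows. Choose $\mathfrak{y}\in\Delta^\sim\setminus\Delta$, say $\mathfrak{y}=\mathfrak{a}-\frac12\lambda(1,-e)$ as in the proof of (\ref{COAP}). Then (\ref{SPNO}) produces a small flat spacelike ball $K$ with $K\subset\Delta^\sim=\Delta^\land$. Hence $T^e(K)=F^e(K^\land)\le F^e(\Delta^\land)=T^e(\Delta)$, and $||T^e(K)||=1$ follows from (\ref{POLDE}) together with translation covariance.

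For the localization claim, I would apply (\ref{ESES}) to the set $\Delta':=\kappa\setminus\Delta$, which is Borel and achronal since $\kappa$ is. I take $\Gamma:=\Delta$ as the determining set. It remains to check that $\Delta'$ and $\Gamma$ are achronally separated, i.e.\ that they are disjoint and $|x_0-y_0|\le|x-y|$ for $\mathfrak{x}\in\Delta'$, $\mathfrak{y}\in\Gamma$. Disjointness holds by construction. The inequality holds because both sets lie in the achronal set $\kappa$: any two points of $\kappa$ satisfy $(\mathfrak{x}-\mathfrak{y})\cdot\mathfrak{e}=0$ with $\mathfrak{e}$ lightlike, so $\mathfrak{x}-\mathfrak{y}$ is not timelike. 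Then (\ref{ESES}) yields $T^e(\kappa\setminus\Delta)\psi\ne\psi$ for every $\psi\in\mathcal{H}^e\setminus\{0\}$. In other words, no electron state is localized in $\kappa\setminus\Delta$.

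There is no real obstacle here. The only points requiring care are the Borel measurability of $\kappa\setminus\Delta$, which is immediate because $\kappa$ is closed, and the verification that points within a light hyperplane are pairwise achronally separated, which is the one-line argument given above.
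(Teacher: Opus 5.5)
Your proposal is correct and is essentially the paper's own proof, which simply reads ``Combine (\ref{SPNO}), (\ref{ESES}), (\ref{COAP})''. You use (\ref{COAP}) to see that $\Delta$ is determining and then (\ref{SPNO}) for $||T^e(\Delta)||=1$; you apply (\ref{ESES}) to $\kappa\setminus\Delta$ with $\Gamma=\Delta$, these being achronally separated because both lie in the light hyperplane $\kappa$.
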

\\
{\it Proof.}   Combine (\ref{SPNO}),\,(\ref{ESES}),\,(\ref{COAP}).\qed

\subsection{High boost limit, achronal localizability, Lorentz contraction}\label{HBLALLA}
In the limit of infinite rapidity a  spacelike hyperplane,  boosted along a direction parallel to it, equals   a {\bf light hyperplane}, i.e. an achronal  hyperplane, which is {\it not  spacelike}.
 Light hyperplanes are the tangent spaces of the light cones and, like the light cones,  are constituted by lightlike straight lines. 
The limit is called \textbf{high boost limit} if it occurs pointwisely   such that every point of spacetime runs along a lightlike straight line. In this case for a causal localization   the probabilities of localization converge by  continuity, essentially due to  monotony. This is investigated in great detail for the massive scalar boson in \cite{C25}, and for the Dirac system, the Dirac fermions, the Weyl systems and the Weyl fermions in \cite[sec.\,26, in particular (133)]{C17}.

\textbf{Achronal localizability.} So roughly speaking  by {\it continuity} every causal localization on flat spacelike regions automatically comprises the regions of all achronal hyperplanes.   It is worth to point out that the set of spacelike hyperplanes and the  disjoint set of light hyperplanes are not isolated although both  are  Poincar\'e invariant.
\\
\hspace*{6mm}
The existence of the high boost limit is decisive as in the end the conclusion to draw from this particular property of causal localization is  that localizability of a relativistic quantum mechanical system concerns not only flat spacelike regions, but all  {\em achronal} regions of spacetime \cite[sec.\,3]{C24}. Poincar\'e covariant AL and equivalently RCL constitute  the frame which complies most completely with this basic insight.
\\
\hspace*{6mm}
We proceed  providing a  proof of the above mentioned continuity for $T^\textsc{d}$ and  $T^e$.  In this case  it is easy to do due to the integral representation (\ref{REXAL}). A consequence of this continuity is the Lorentz contraction  for the electron.

\textbf{High boost limit of a spacelike hyperplane. }The light hyperplane $\chi:=\{\mathfrak{x}\in\R^4: x_0=x_3\}$ is the high boost limit of the Euclidean space $\varepsilon=\{\mathfrak{x}\in\R^4: x_0=0\}$ as follows.  In the same way, by relativistic symmetry, every light hyperplane is the high boost limit of a spacelike hyperplane. 
 Note $$A_\rho\cdot \varepsilon=\{x_0=\tanh(\rho)\, x_3\}$$ where  $A_\rho=\e^{\rho\, \sigma_3/2}$ represents the boost along the third spatial axis with rapidity $\rho$. 
 For $\rho\ge 0$ let
$$l_\rho:\varepsilon\to A_\rho\cdot \varepsilon, \quad l_\rho(0,x):=\big( \text{\footnotesize{$\frac{1}{2}$}}    (1-\e^{-2\rho})x_3,\,x_1,\, x_2,\, \text{\footnotesize{$\frac{1}{2}$}}    (1+\e^{-2\rho})x_3\big)$$ 
which is a linear  bijection composed by the inhomogeneous dilation $\mathfrak{x}\mapsto (x_0,x_1,x_2,\e^{-\rho}x_3)$ and the subsequent boost $\mathfrak{x}\mapsto A_\rho\cdot \mathfrak{x}$. One has  $ l_\rho \to l_\infty$ pointwisely for $\rho\to\infty$, where $l_\infty$ is a linear bijection given by
$$   l_\infty:\varepsilon \to \chi, \quad        l_\infty (0,x):=(\text{\footnotesize{$\frac{1}{2}$}} x_3,x_1,x_2,\text{\footnotesize{$\frac{1}{2}$}} x_3)$$
The maps $l_\rho$ are charaterized by fact that  every point $(0,x)\in \varepsilon$ runs through the segment $\{l_\rho(0,x): 0\le \rho\le \infty\}$ of the lightlike line $(0,x)+\R(\frac{x_3}{2},0,0,-\frac{x_3}{2})$ joining $(0,x)$ with $(\frac{x_3}{2},x_1,x_2,\frac{x_3}{2})\in\chi$.

\textbf{Continuity of} $T^\textsc{d}$.
The localization operators  on regions of $\varepsilon$ and  $\chi$ are related to each other by the high boost limit as follows.

\begin{The}\label{CTHBL}
Let $\psi$ be a state of the Dirac system and $\Delta$ a  Borel subset of $\varepsilon$. Then  $$\lim_{\rho\to\infty} \langle \psi,T^\textsc{d}\big(l_\rho(\Delta)\big)\psi\rangle=
\langle \psi,T^\textsc{d}\big(l_\infty(\Delta)\big)\psi \rangle$$
\end{The}\\
{\it Proof.} Put $t_\rho:=\tanh \rho$. Without restriction let $\psi\in\mathcal{D}$. Then by (\ref{REXAL})
\begin{equation*}\label{IREXAL}
\langle \psi, T^\textsc{d}(l_\rho(\Delta))\psi\rangle= \int_{\varpi(l_\rho(\Delta))} \big( J_0(\psi;t_\rho x_3,x) - t_\rho J_3(\psi;t_\rho x_3,x)\big)  \d^3x 
\end{equation*}
Obviously $1_{\varpi(l_\rho(\Delta))}\to 1_{\varpi(l_\infty(\Delta))}$ and $\big(J_0(\psi;t_\rho x_3,x) - t_\rho J_3(\psi; t_\rho x_3,x)\big) \to \big( J_0(\psi;x_3,x) -  J_3(\psi;x_3,x)\big)$
 pointwisely for $\rho\to\infty$. By (\ref{PDPC})(d) and (\ref{NER}) one gets the integrable majorant $2C_4(1+|x|)^{-4}$ of the integrand, whence the claim by dominated convergence and (\ref{REXAL}).\qed

\textbf{Lorentz contraction}, see \cite{C17} and cf.\,\cite{C25}. Recall that $A_{\rho e}:=\operatorname{exp}(\frac{\rho}{2}\sum_{k=1}^3e_k\sigma_k)$  represents the boost in direction $e\in\R^3$, $|e|=1$ with rapidity $\rho$. 

\begin{The}\label{LCMSB} Let $\psi$, $||\psi||=1$, be a state of the Dirac system.  Then
$$\langle\, W(A_{\rho e})\psi, T^\textsc{d}(\{\mathfrak{x}\in\varepsilon:-\delta \le xe \le \delta\}) \, W(A_{\rho e})\psi\, \rangle \to 1,\quad |\rho|\to\infty$$
 for every   $\delta>0$.
\end{The}\\
{\it Proof.} It suffices to treat the case $\rho\to\infty$ and every $e$. Indeed, for the case $\rho\to -\infty$ consider $-e$. Then, due to Euclidean covariance,  it suffices to deal  with only one  direction $e$. We choose $e=(0,0,-1)$. 
\\
\hspace*{6mm}
Let $\epsilon>0$. There is $0<\beta<\infty$ such that $\langle \psi, T^\textsc{d}(\{\mathfrak{x}\in\chi:|x_3|\le \beta/2\}) \,\psi\rangle > 1-\epsilon$  since $T^\textsc{d}(\chi)=I$. Note $\{\mathfrak{x}\in\chi:|x_3|\le \beta/2\} =l_\infty(\{\mathfrak{x} \in\varepsilon:|x_3|\le \beta\})$. 
Let $\rho_\beta>0$ with $\beta\e^{-\rho}\le \delta$ for $\rho\ge \rho_\beta$. 
\\
\hspace*{6mm}
By (\ref{CTHBL}), 
$1\ge \langle W(A_{-\rho})\psi, T^\textsc{d}(\{\mathfrak{x}\in\varepsilon:|x_3| \le \delta \})\, W(A_{-\rho})\psi\rangle \ge \langle W(A_{-\rho})\psi, T^\textsc{d}(\{\mathfrak{x}\in\varepsilon:|x_3| \le \beta\e^{-\rho}\})\, W(A_{-\rho})\psi\rangle  = \langle \psi, T^\textsc{d}(A_\rho\cdot \{\mathfrak{x}\in\varepsilon:|x_3| \le \beta\e^{-\rho}\})\,\psi\rangle =\langle \psi, T^\textsc{d}(l_\rho(\{\mathfrak{x}\in\varepsilon:|x_3| \le \beta\}))\,\psi\rangle  \to_\rho \langle \psi, T^\textsc{d}(\{\mathfrak{x}\in\chi:|x_3|\le \beta/2\}) \,\psi \rangle > 1 - \epsilon $.
Hence $1\ge  \langle W(A_{-\rho})\psi, T^\textsc{d}(\{\mathfrak{x}\in\varepsilon:|x_3| \le \delta \})\, W(A_{-\rho})\psi\rangle \ge 1-\epsilon$ for all $\rho$  
large enough. The result follows.\qed
\\

\begin{Cor}\label{ELCMSB} The result $\emph{(\ref{LCMSB})}$ holds equally for the electron  AL $(W^e,T^e)$.
\end{Cor}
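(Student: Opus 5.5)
The plan is to deduce the corollary from (\ref{LCMSB}) through the compression relation (\ref{EPL}) and the definition $W^e=j_e^*\,W^\textsc{d}\,j_e$. Let $\psi\in\mathcal{H}^e$ with $||\psi||=1$, and write $S_\delta:=\{\mathfrak{x}\in\varepsilon:-\delta\le xe\le\delta\}$.

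First, $\mathcal{H}^e$ is $W^\textsc{d}$-invariant, being the positive spectral subspace of $\operatorname{sgn}(H)$, which commutes with $W^\textsc{d}$. Hence $W^\textsc{d}(g)j_e=j_eW^e(g)$, so $W^e(A_{\rho e})\psi$ is again a unit vector in $\mathcal{H}^e$ and it coincides with $W^\textsc{d}(A_{\rho e})\psi$ as an element of $\mathcal{H}^\textsc{d}$.

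Next, apply (\ref{EPL}) to this vector. It gives
$$\langle W^e(A_{\rho e})\psi,T^e(S_\delta)W^e(A_{\rho e})\psi\rangle=\langle W^\textsc{d}(A_{\rho e})\psi,T^\textsc{d}(S_\delta)W^\textsc{d}(A_{\rho e})\psi\rangle.$$

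Finally, $\psi$ regarded as an element of $\mathcal{H}^\textsc{d}$ is a unit vector, that is, a state of the Dirac system. So (\ref{LCMSB}) applies and the right-hand side tends to $1$ as $|\rho|\to\infty$, for every $\delta>0$. This is the claim.

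No real obstacle arises. The only point needing care is the intertwining $W^\textsc{d}j_e=j_eW^e$, which holds because $\mathcal{H}^e$ is a reducing subspace of $W^\textsc{d}$.
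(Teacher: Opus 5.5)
Your proposal is correct and is essentially the paper's argument, which simply notes that by (\ref{EPL}) it suffices to apply (\ref{LCMSB}) to states $\psi\in\mathcal{H}^e$. You additionally spell out the intertwining $W^\textsc{d}(g)j_e=j_eW^e(g)$, which follows from the $W^\textsc{d}$-invariance of $\mathcal{H}^e$; the paper leaves this implicit, but it is what keeps the boosted vector in $\mathcal{H}^e$, so that (\ref{EPL}) applies to it.
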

\\
{\it Proof.} Due to (\ref{EPL}) it suffices to choose $\psi\in\mathcal{H}^e$.\qed
\\

By  (\ref{ELCMSB}) the  probability of localization  of the electron in the boosted state $W(A_{\rho e})\psi$ in a whatever narrow strip $\{-\delta\le xe\le \delta\}$    tends to $1$ if the rapidity $\rho$ tends  to $\infty$ or $-\infty$. We like to call  this behavior  the \textbf{Lorentz contraction} of  the electron. 

Here the usual questions related to classical Lorentz contraction equally arise. The following discussion is taken from \cite{C25}. For more details cf. \cite[sec.\,17.2]{C17}.

\hspace*{6mm} 
(a) Can Lorentz contraction be observed?   Immagine an apparatus $\mathcal{A}$ able to ascertain the probabilities of localization of the electron in $\{|xe|\le \delta \}$, i.e.,  the expectation  values of $A:=T^e(\{|xe|\le \delta \})$. For a given state $S$ described by $\psi$ and $\varepsilon >0$, let $\delta>0$ be so small that $\langle \psi, T^e(\{|xe|\le \delta \})\psi\rangle \le \varepsilon$. According to (\ref{LCMSB}) there is a rapidity $\tilde{\rho}$ such that 
$\langle\, W(A_{\rho e})\psi, T^e(|xe| \le \delta\}) \, W(A_{\rho e})\psi\ \rangle \ge 1-\varepsilon$ for $\rho\ge \tilde{\rho}$. Let $\tilde{S}$ be the boosted state. It is described by  $\tilde{\psi}= W(A_{\tilde{\rho} e})\psi$. Then
\begin{equation}\label{LCDSM} 
\langle \psi,A \psi\rangle\le \varepsilon\;\textrm{ and }\;\langle \tilde{\psi},A \tilde{\psi}\rangle\ge 1-\varepsilon
\end{equation}
 Hence the apparatus $\mathcal{A}$ distinguishes the state $S$ from the boosted state $\tilde{S}$. So an observer  can ascertain the Lorentz contraction of the electron.
\\
\hspace*{6mm} 
(b) The ascertainments (\ref{LCDSM}) are related to some reference frame $\mathfrak{R}$. 
What are the ascertainments  of an observer related to any other frame $\mathfrak{R}'\equiv g^{-1}\cdot\mathfrak{R}$
 with $g\in \tilde{\mathcal{P}}$  provided with the  localization $(T^e)'$? Note that $A'=T^e(g\cdot\{x_0=0,|xe|\le\delta\})$ and  $\tilde{\psi}'=W(h')\psi'$ for $h'=ghg^{-1}$, $h:=A_{\tilde{\rho}e}$. For these well-known general relations see e.g. \cite[sec.\,VIII]{A69},  \cite[sec.\,17.2]{C17}. Then 
\begin{equation} 
 \langle \psi',A'\psi'\rangle\le\varepsilon \text{\, and } \langle \tilde{\psi}',A'\tilde{\psi}'\rangle\ge 1-\varepsilon
 \end{equation}
  Hence, observed from $\mathfrak{R}'$, the electron  in the state $S$ is highly localized in the spacelike region $g\cdot\{\mathfrak{x}: x_0=0, |xe|>\delta\}$, whereas in the boosted state $\tilde{S}$ it is highly localized in $g\cdot\{\mathfrak{x}:x_0=0, |xe|\le\delta\}$. The expected conclusion is that, due to relativistic symmetry,  the  Lorentz contraction of the massive scalar boson  can be ascertained  in the same way and with the same result by any Lorentz observer.
\\
\hspace*{6mm} 
 (c) On the other hand there is the dependence of the Lorentz contraction on the frame, which is discussed now.  
Due to the relativistic symmetry and the covariance of $T^e$ the result (\ref{LCMSB}) can be expressed equivalently in the following way. Let a four vector $\mathfrak{e}$ be called a spacelike direction if 
$\mathfrak{e}\cdot \mathfrak{e}=-1$.

\begin{Cor} \label{FDLCDL}  Let $\sigma$ be a spacelike hyperplane and $\mathfrak{e}$ a spacelike direction parallel to $\sigma$. Boost them along $\mathfrak{e}$ with rapidity $\rho$ obtaining 
$\sigma_\rho$ and $\mathfrak{e}_\rho$. Then
$$   \langle\psi,T^e\big(\{\mathfrak{x}\in\sigma_\rho: |(\mathfrak{x}-\mathfrak{o})\cdot \mathfrak{e}_\rho|\le \delta\}\big)\psi \rangle \to 1 \textrm{ for } |\rho|\to \infty$$ where  $\mathfrak{o}\in\sigma$ is the fixed point of the boost.
\end{Cor}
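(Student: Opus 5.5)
The plan is to reduce Corollary (\ref{FDLCDL}) to (\ref{ELCMSB}), i.e.\ (\ref{LCMSB}) for $(W^e,T^e)$, by means of Poincar\'e covariance. Choose $g\in\tilde{\mathcal{P}}$ carrying the standard configuration $(\varepsilon, (0,e), 0)$ with $|e|=1$ to $(\sigma,\mathfrak{e},\mathfrak{o})$. This is possible because the Poincar\'e group acts transitively on triples consisting of a spacelike hyperplane, a unit spacelike vector parallel to it, and a point in it. Write $g=(\mathfrak{o},A)$ with $\Lambda(A)$ mapping the time axis to the normal of $\sigma$ and $(0,e)$ to $\mathfrak{e}$.

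Next, identify the boosted data. The boost along $\mathfrak{e}$ with rapidity $\rho$ fixing $\mathfrak{o}$ is
\[
b_\rho:=g\,A_{\rho e}\,g^{-1}.
\]
Since $A_{\rho e}$ fixes the origin, $b_\rho$ fixes $\mathfrak{o}$. We have $\sigma_\rho=b_\rho\cdot\sigma=g\cdot(A_{\rho e}\cdot\varepsilon)$ and $\mathfrak{e}_\rho=\Lambda(A)\Lambda(A_{\rho e})(0,e)$. Because $g$ preserves the Minkowski product of difference vectors, we get
\[
\{\mathfrak{x}\in\sigma_\rho: |(\mathfrak{x}-\mathfrak{o})\cdot\mathfrak{e}_\rho|\le\delta\}
= g\,A_{\rho e}\cdot\{\mathfrak{y}\in\varepsilon: |\mathfrak{y}\cdot(0,e)|\le\delta\}
= g\,A_{\rho e}\cdot\{\mathfrak{y}\in\varepsilon: |ye|\le\delta\}.
\]

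Then apply covariance of $T^e$ (\ref{EMTDC}):
\[
\langle\psi,T^e(\ldots)\psi\rangle=\langle W^e(A_{\rho e})^{-1}W^e(g)^{-1}\psi,\;T^e(\{|ye|\le\delta\})\;W^e(A_{\rho e})^{-1}W^e(g)^{-1}\psi\rangle .
\]
Now set $\phi:=W^e(g)^{-1}\psi$ and note $W^e(A_{\rho e})^{-1}=W^e(A_{-\rho e})$. The expression becomes the quantity in (\ref{LCMSB}) for the state $\phi$ with rapidity $-\rho$. By (\ref{ELCMSB}) this tends to $1$ as $|\rho|\to\infty$, since the statement covers both signs of the rapidity.

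The only delicate point is the identification of $\mathfrak{e}_\rho$ and the strip. One must check that the Minkowski product against the boosted direction transforms correctly: $(\mathfrak{x}-\mathfrak{o})\cdot\mathfrak{e}_\rho$ with $\mathfrak{x}=g A_{\rho e}\cdot\mathfrak{y}$ equals $\mathfrak{y}\cdot(0,e)=-ye$. Also, the sign conventions for which way ``boost along $\mathfrak{e}$'' points are irrelevant, since both $\rho\to\pm\infty$ are included.
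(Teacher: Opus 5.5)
Your proposal is correct and follows essentially the paper's route. The paper presents this corollary as an equivalent reformulation of (\ref{LCMSB}), via (\ref{ELCMSB}) for $T^e$, obtained from relativistic symmetry and the covariance of $T^e$. Your explicit choice of $g$ with $b_\rho=gA_{\rho e}g^{-1}$, the identification of the strip as $gA_{\rho e}\cdot\{\mathfrak{y}\in\varepsilon:|ye|\le\delta\}$, and the covariance step to the state $W^e(g)^{-1}\psi$ with rapidity $-\rho$ spell out exactly that reduction.
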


 Thus, if the frame is moving fast enough depending on the state, then  the electron is highly localized in a narrow strip perpendicular to the direction of  motion.
\\
\hspace*{6mm} 
 The dependence on the frame of the Lorentz contraction in classical  mechanics is striking by the fact that for the comoving observer it does not even exist. The same holds true for the Lorentz contraction of the electron wavefunctions. Moreover, due to  the Poincar\'e covariance of the localization no reference to a {\it moving observer} is needed,  but the fact refers to the expectation value of the corresponding localization observable. Indeed, boost the apparatus $\mathcal{A}$ according to  $A_{\tilde{\rho}e}$ thus obtaining the comoving apparatus $\tilde{\mathcal{A}}$. It is able to ascertain  the probabilities of localization of the electron in the spacelike region $A_{\rho e}\cdot\{|xe| \le \delta\}$ (to which a comoving observer refers), i.e., the expectation  values of  $\tilde{A}:=T^e\big(A_{\rho e}\cdot\{|xe| \le \delta\}\big)$. Then due to the covariance of localization
 \begin{equation} \label{NLCFCM}
\langle \tilde{\psi},A \tilde{\psi}\rangle\ge 1-\varepsilon\;\textrm{ and }\;  \langle \tilde{\psi},\tilde{A} \tilde{\psi}\rangle\le \varepsilon
\end{equation} 
holds. This means that the non-comoving apparatus $\mathcal{A}$ ascertains the Lorentz contraction of the electron whereas the comoving apparatus $\tilde{\mathcal{A}}$ ascertains non-contraction.

\subsection{Electron-positron representation}\label{EPR}
Let us point out that the Dirac system for every positive mass and the two right- and left-handed Weyl systems, which are the massless Dirac systems, are the only known irreducible {\it causal systems}. Here causal system means a rep of $\tilde{\mathcal{P}}$ and covariant {\it projection operator} valued map $T$ on the flat spacelike Borel sets, which is a normalized measure on every spacelike hyperplane and which satisfies CC. The  mentioned causal systems are the only ones with a nonnegative mass-squared operator,  a finite   spinor dimension, and which satisfy a  (presumably superfluous) regularity assumption.  For details see \cite[sec.\,27-29]{C17}.
\\
\hspace*{6mm} 
As stated in sec.\,\ref{CCMC}, $T^e$ is not microscopically causal. However,
despite this failure the physically relevant probabilities of localization of the electron are given by the expectation values of the localization operators  $T^\textsc{D}$ of the Dirac system (\ref{EPL}), which are  {\it locally orthogonal}. Let us try to  take seriously this fact. 
 The question is how to understand this peculiar situation which is unique as evidenced in the foregoing remark. 
 \\
\hspace*{6mm}
According to the result (\ref{SESB}) (see also the more general (\ref{ESES}) for spacetime regions) the electron cannot be localized in bounded regions of Euclidean space. The reason one imagines is that attempting the localization causes such a large  uncertainty in energy that
pair-production occurs.  In fact, we are going to argue that the  electron localization $T^e$ is in compliance with this explanation. It  takes account of pair-production in an effective manner within the frame of relativistic quantum mechanics.
 \\
\hspace*{6mm} 
  Here we present  the main idea and the outcomes (\ref{POE}),\,(\ref{POP}),  which concern the probability of the occurrence of an electron and a proton, and the final state of the Dirac system after the position measurement, the density operator (\ref{SO}). The results concern a {\em non-selective measurement}. They are {\em universally valid} in the sense that they hold true as long as no information gained from the actual measuring equipment is taken into account.
The subject is developed to some extent  in \cite[Sec.\,III\,J]{CL15} and \cite[sec.\,19]{C17}.
\\
\hspace*{6mm}
One starts recalling the fact  that the state space  of the Dirac system  $\mathcal{H}^\textsc{d}$ is the inner orthogonal sum 
\[
\mathcal H^\textsc{d}
=
\mathcal H^+\oplus\mathcal H^-
\]
 of the  $W^\textsc{d}$-invariant  irreducible eigenspaces $\mathcal{H}^\pm$  of  $\operatorname{sgn}(H)$ determined by the eigenvalues $\pm1$. Let the corresponding irreducible subrep of $W^\textsc{d}$ be denoted by $W^\pm$.  Here  $\mathcal{H}^+=\mathcal{H}^e$ and $W^+=W^e$, i.e.,  the {\it electron} representation characterized by mass the $m>0$, spin $j=\frac{1}{2}$ and the sign $+$ of energy.  The irreducible subrepresentation $W^-$ of $W^\textsc{d}$ on $\mathcal{H}^-$ is characterized by the same mass and spin but negative sign of energy. Hence,  $W^\pm\in[m,\frac{1}{2},\pm]$ for short.
\\
\hspace*{6mm}
In general for  spin $j\in\{0,\frac{1}{2}, 1,\dots\}$,  every   rep   in  $[m, j, -]$ is antiunitarily equivalent to every  rep  in $[m, j, +]$.
 The transforming operator is unique up to a constant phase. Also, if  $W$ is a rep in $[m, j, -]$ and if $\mathcal{A}$ is any antiunitary operator from the carrier space $\mathcal{H}'$ of $W'$ onto a Hilbert space $\mathcal{H}$, then  $W:=\mathcal{A}W'\mathcal{A}^{-1}$ is a rep in $[m, j, +]$ on $\mathcal{H}$. (Indeed, these facts  are an easy consequence of e.g. \cite[(93)]{C17}).
Let $H'$ be the energy operator, which is the generator of the time translations represented by $W'$. Then the self-adjoint operator 
$\mathcal{A}H'\mathcal{A}^{-1}$ is {\it not} the the energy operator $H$ referring to $W$. Indeed, $\langle \varphi, \mathcal{A}H'\mathcal{A}^{-1}\varphi\rangle = 
\langle H'\mathcal{A}^{-1}  \varphi,     \mathcal{A}^{-1}\varphi\rangle = \langle \mathcal{A}^{-1}  \varphi,   H'  \mathcal{A}^{-1}\varphi\rangle   \le 0$ for all states $ \varphi\in \mathcal{H}$. Hence $\mathcal{A}H'\mathcal{A}^{-1}$ is still negative, whereas $H\ge 0$. However, $W'$ represents the time translations by $W'(t)=e^{-\i t H'}$. As $W(t)=\mathcal{A}W'(t)\mathcal{A}^{-1}$ this implies $e^{-\i t H}=\mathcal{A} e^{-\i t H'} \mathcal{A}^{-1} =\exp(\mathcal{A}(-\i tH') \mathcal{A}^{-1}) =\exp\big(-\i t (-\mathcal{A}H'\mathcal{A}^{-1})\big)$. One infers 
\begin{equation}\label{EAP}
H= -\mathcal{A}H'\mathcal{A}^{-1}
\end{equation}
which is a nonnegative operator.
\\
\hspace*{6mm}
These facts suggest the following reasoning. We start out from the assumption that $W'\in [m, j, -]$ represents a stable particle, too. This has mass $m>0$, spin $j$ and of course {\it positive energy}.  Hence the sign $-$ does {\it not} refer to the energy but in actual fact indicates that the particle is the  antiparticle of the particle described by $W\in [m, j, +]$.
The representation of the  kinematical observables referring to the antiparticle  are gained from $W$ analogously to (\ref{EAP}). 
\\
\hspace*{6mm}
 Stating that  the antiparticle is in the state $\varphi\in \mathcal{H}'$  means that it is in the state $\mathcal{A}\varphi\in\mathcal{H}$. Below we tacitly use this meaning keeping in mind that  the transition probabilities $|\langle \varphi,\varphi'\rangle|^2 = |\langle \mathcal{A}\varphi,\mathcal{A}\varphi'\rangle|^2 $, in particular $||\varphi||= || \mathcal{A}\varphi||$, are preserved.
\\
\hspace*{6mm}
We turn to the case $W'=W^-$. For $\mathcal{A}$ one chooses the {\bf time reversal operator} $\mathcal{T}^-:=j^*_-\mathcal{T} j_-$ on $\mathcal{H}^-$, which is a symmetry 
operation, but any antiunitary operator $\mathcal{A}$ on $\mathcal{H}^-$ is just as good. Here $\mathcal{T}$ is the time reversal operator for the Dirac system and $j_-: \mathcal{H}^- \to  \mathcal{H}^\textsc{d}$ denotes  the identical injection.  In momentum space representation $\mathcal{T}$  reads 
\begin{equation}\label{TIO}
 (\mathcal{T}\varphi)(p)=    \left( \begin{array}{cc}  \i \sigma_2 & 0\\ 0 & \i \sigma_2 \end{array}\right)   \,\overline{\varphi(-p)} 
 \end{equation}
It is easy to verify that $\mathcal{T}$ and the Hamiltonian $H$ of the Dirac system commute $$[\mathcal{T},\, H] =0$$
In particular $[\mathcal{T},\, \operatorname{sgn}(H)] =0$ holds so that $\mathcal{T}$ preserves the subspaces $\mathcal{H}^\pm$.
\\
\hspace*{6mm}
According to the foregoing considerations $W^-\in [m,\frac{1}{2},-]$ indicates the presence of the antiparticle of the electron, the {\bf positron}, described by the rep $W^p\in [m,\frac{1}{2},+]$ defined as 
\begin{equation}
W^p:=\mathcal{T}^-  W^-     \,(\mathcal{T}^-)^{-1}
\end{equation}
Let  $\mathcal{H}^p$ denote the carrier space of $W^p$. Of course  $\mathcal{H}^p$ coincides with $\mathcal{H}^-$. 
\\
\hspace*{6mm}
We give an answer to  the question about the meaning of the states $\varphi$ of the Dirac system. It arises if neither  $\varphi\in \mathcal{H}^+$ nor $\varphi\in \mathcal{H}^-$, i.e., if $\varphi= \varphi_+ + \varphi_-$ is the linear combination of states $\hat{\varphi}_\pm:= ||\varphi_\pm||^{-1}\varphi_\pm\in\mathcal{H}^\pm$. In this case $\varphi$ is regarded to be  a {\it non-observable virtual superposition} of the electron state  $\hat{\varphi}_+$  and the positron state $\hat{\varphi}_-$. The attempt to realize this state by preparation creates a {\it mixed state} of the  electron state and  the positron state with respective probabilities $||\varphi_\pm||^2$. 
\\
\hspace*{6mm}
We are going to apply  this principle to our main subject, the electron localization. 
\\
\hspace*{6mm}
Let the electron be in the  state $\varphi_1\in  \mathcal{H}^e$, consider a region $\Delta\in\mathcal{B}^{ach}$, and suppose $T^e(\Delta)\varphi_1\ne 0$. 
Since $\varphi_1$ is a state of the Dirac system, $ \langle \varphi_1,T^\textsc{d}(\Delta) \varphi_1\rangle$ is the probability to get an affirmative answer at the position measurement to the question whether the Dirac system is in the region $\Delta$.  
Recall that
$\langle \varphi_1,T^e(\Delta)\varphi_1\rangle =\langle \varphi_1,T^\textsc{d}(\Delta) \varphi_1\rangle$. 
Hence 
$\langle \varphi_1,T^e(\Delta)\varphi_1\rangle$ is the probability  of localization in $\Delta$ of the electron in the state $\varphi_1$. It is also the probability  that after the measurement the Dirac system is in the state 
\begin{equation*}
\phi_1 := \norm{T^\textsc{d}(\Delta)\varphi_1 }^{-1} T^\textsc{d}(\Delta)\varphi_1
\end{equation*}
Now this state is a non-observable virtual superposition $$\phi_1=P^e\phi_1+ P^p\phi_1$$ of an electron and a positron state. Here $P^e$ is the projection operator on  $\mathcal{H}^e$ and $P^p := I-P^e$.
 Accordingly, after the measurement the electron state
\begin{equation*}
\varphi_2 :=   \norm{ P^e\phi_1 }^{-1}P^e\phi_1=\, \norm{ T^e(\Delta)\varphi_1 }^{-1} T^e(\Delta)\varphi_1
\end{equation*}
occurs with probability  $\sigma_2^e$ given by the product $\langle \varphi_1,T^e(\Delta)\varphi_1\rangle \norm{ P^e\phi_1}^2=\norm{ T^e(\Delta)\varphi_1 }^2$. 
\\
\hspace*{6mm} 
As $I-T^\textsc{d}(\Delta)= T^\textsc{d}(\Delta')$ with $\Delta' := \R^3\setminus \Delta$, one has at the same time the information that there is the electron state $\varphi_2':=  \norm{ T^e(\Delta')\varphi_1 }^{-1} T^e(\Delta')\varphi_1$ with probability $\sigma_2'^e := \norm{ T^e(\Delta')\varphi_1 }^2$. Thus the probability of the occurrence of an electron  after the position measurement  is
\begin{equation}\label{POE}
\sigma_2^e + \sigma_2'^e=1-2\langle \varphi_1,T^e(\Delta)T^e(\Delta') \varphi_1\rangle
\end{equation}
The  state $\phi_1$ also gives rise to the positron state $\varphi^p_2 =  \norm{  P^p T^\textsc{d}(\Delta)\varphi_1 }^{-1}  P^p T^\textsc{d}(\Delta)\varphi_1$ with probability $\sigma^p_2 =  \norm{ P^e T^\textsc{d}(\Delta)\varphi_1 }^2$. Moreover,  there is the positron state  $\varphi'^p_2 =  || P^p T^\textsc{d}(\Delta')\varphi_1 ||^{-1} P^p T^\textsc{d}(\Delta')\varphi_1$ with probability $\sigma'^p_2 =  \norm{ P^p T^\textsc{d}(\Delta')\varphi_1 }^2$. One easily checks that 
\begin{equation}\label{POP}
\sigma^p_2+\sigma'^p_2=2\langle \varphi_1,T^e(\Delta)T^e(\Delta') \varphi_1\rangle
\end{equation}
\hspace*{6mm} 
The foregoing considerations show that 
the influence of the apparatus represented by $T^\textsc{d}(\Delta)$ does not create eigenstates of $T^\textsc{d}(\Delta)$, which are not observable  but a mixed state of electron and positron states given by the density operator 
\begin{equation}\label{SO}
 W := \sum_{P,E}PEP_{\varphi_1} EP\quad \text{ with } P\in\{P^e,P^p\}, E\in\{T^\textsc{d}(\Delta), T^\textsc{d}(\Delta')\} 
\end{equation}
Here  $P_{\varphi_1}$ denotes  the projection operator on the subspace $\C\varphi_1$. Clearly $W\ge 0$ and, using the formula $QP_\phi Q=\norm{Q\phi}^2\norm{\phi}^{-2}P_{Q\phi}$ for any projection $Q$, one easily verifies $\operatorname{tr}(W)=1$. 
\\
\hspace*{6mm} 
 Let us review the formula (\ref{SO}). The eigenspaces $T^\textsc{D}(\Delta)\mathcal{H}^\textsc{D}$ and $T^\textsc{D}(\Delta')\mathcal{H}^\textsc{D}$ of the observable $T^\textsc{D}(\Delta)$ of the Dirac system are infinite dimensional, whence the eigenvalues  $0,1$ are degenerate. Without a detailed knowledge of the used measuring equipment one assumes a  {\it non-selective measurement}, which affects the state $\varphi_1$ in a minimal way. In this case one obtains for the state after the measurement the density operator $\tilde{W}:=\sum_E EP_{\varphi_1} E$ for $E\in\{T^\textsc{d}(\Delta), T^\textsc{d}(\Delta')\}$. However,  one does not have the information that the state after the position measurement belongs to one of the eigenspaces, as the states of the eigenspaces are not observable, but that the state is a mixture of an electron and an positron state. Hence the actual density operator follows $\sum_P P\tilde{W} P$ for  $P\in\{P^e,P^p\}$ as claimed.
 \\
\hspace*{6mm} 
Regarding the positron production the result is  that a non-selective measurement of the position of the electron in the state $\varphi_1$ in a spacetime region $\Delta$ creates a positron with probability
$$2\langle T^e(\Delta)\varphi_1,T^e(\Delta') \varphi_1\rangle$$
see (\ref{POP}). If the outcome "positron" is recorded then one infers  
from (\ref{SO})   that       the state of the positron     is represented by        the density  operator 
$ \frac{1}{\operatorname{Tr}(\tilde{W}P^p)}\, P^p\tilde{W}P^p$.

 \section{Recent related developments and outlook}}
We mention some related recent developments, although they concern
questions which are logically distinct from the 
causal-localization problem just described.  Halvorson and Clifton
\cite{HC01} analyzed no-go consequences of imposing, among other
assumptions, {\em commutativity requirements} on localization effects
associated with spacelike separated regions as prescribed in the Araki-Haag-Kastler approach to local quantum physics \cite{H96},\,\cite{A09}.  The relation between
particle localization, commutativity and the locality principle has
recently been reconsidered by one of the authors  \cite{M26a},\,\cite{M26b}. In the second paper in particular a causal localization has been constructed even for quantum fields and it is proved that commutativity can be restored when dealing with causally separated laboratories and local (conditional) localizations.
However in the first paper, following a type of analysis initiated by   Busch and collaborators \cite{Buschloc1,Buschloc2},   it is argued that, for causal localizations extended to the whole spacetime, the commutativity requirment is not necessary even if possible. 
A different further recent approach is due to Lechner and de Oliveira
\cite{LO26}, who construct causal quantum-mechanical localization
observables in lattices of {\em real} projections, in connection with {\em modular
localization}.  These works are related to the general problem of
relativistic localization, but they address different aspects of it sometime called microcausality.
\\
\hspace*{6mm}
The route followed here is instead the one initiated by the
causality condition and developed through achronal localization: the
construction of localization observables from conserved currents and
the analysis of their causal properties for the Dirac system and for
the positive-energy electron sector.
\\
\hspace*{6mm}
   In particular, \cite{M26b} is completely developed in a QFT scenario in order to make contact with the Araki-Haag-Kastler formulation, and also for systems with many particles, contrary to what we do in this work, where localization means localization of a single fermion. In that work, localization of a single boson is recovered when referring to single-particle states. This is the perspective from which we should consider the present work: localization of a fermion is expected to be the restriction to the one-particle space of a more general localization for the Dirac quantum field, or perhaps for the normally ordered current constructed with this quantum field operator. An investigation at the level of quantum field theory for the Dirac field, similar to that of \cite{M26b}, will be carried out elsewhere.

\section*{Acknowledgments}  C.D.R, and V.M acknowledge that this work was written within the activities of INdAM-GNFM.

\section*{Declaration statements}
{\bf Conflict of Interest}: The authors declare that they have no conflict of interest.\\
{\bf Ethical Statement}: This work does not involve human participants, animals, or sensitive data, and therefore no ethical approval was required.\\
{\bf Informed Consent}: Not applicable.\\
{\bf Data Availability}: No datasets were generated or analyzed in this study. All relevant information is contained within the article.\\
{\bf Funding}: This research received no external funding.

\appendix

 \section{Properties of RCL}

Recall that ${\cal E}({\cal H})$ denotes the space of the effects on a Hilbert space $\cal{H}$, i.e.,  the  bounded operators $A$ on  $\cal{H}$  with $0\le A \le I$.
We will see that an RCL  employs   the natural  {\bf monotone $\sigma$-complete effect algebra} (MSCEA) structures of $\mathcal{C}$ and  
${\cal E}({\cal H})$. 
\\
\hspace*{6mm}
An MSCEA  \cite{DP00} is an {\em effect algebra} $({\cal E}, \oplus, \perp, {\bf 0}, {\bf 1})$  such that, referring to the  canonical order $a\leq b$ iff there is $c \in {\cal E}$  with  $c\perp a$ and $a\oplus c = b$, every increasing sequence $(a_n)$  admits a supremum $\vee_n a_n \in {\cal E}$. $\oplus$ is called {\em partial sum} as it applies only to orthogonal summands. If $(a_n)$ is a countable  orthogonal sequence in ${\cal E}$ then put  $\oplus_n a_n:=\vee_n (a_1 \oplus \cdots \oplus a_n)$.
\\
\hspace*{6mm}
It is easy to see that every $\sigma$-complete orthomodular lattice $({\cal L}, \leq, \perp,  \vee, \wedge, {\bf 0}, {\bf 1})$, and hence in particular the causal logic $\mathcal{C}$, is a MSCSA
  when defining the  partial sum  $M_1 \oplus M_2 := M_1 \vee M_2$ for elements $M_1,M_2 \in \mathcal{L}$ with $M_1\leq  M_2^\perp$.  Also 
  ${\cal E}({\cal H})$ 
  has  a natural structure of  MSCEA  if defining $A^\perp := I-A$  and the partial sum 
$A\oplus B  := A+B$ for $A \leq B^\perp$.  
\\
\hspace*{6mm}
Let $\mathcal{E}_1$ and ${\cal E}_2$ be two MSCAE. A map $\phi : {\cal E}_1 \to {\cal E}_2$ is a  homomorphism  of 
 MSCEA if $\phi$ preserves the unities, i.e., $\phi(0)=0$ and $\phi(1)=1$, and the partial sums, i.e.,  $\phi(a\oplus b)=\phi(a)\oplus\phi(b)$, and satisfies 
$\phi\left( \vee_{n\in \mathbb{N} } a_n\right)=\vee_{n\in \mathbb{N}}  \phi \left( a_n \right)$  for  every increasing sequence $(a_n)$ in ${\cal E}$.

\begin{The}\label{HMSCAE}  Let $F: \mathcal{E}_1 \to {\cal E}_2$ a map, which preserves the unities and satisfies 
$$ F(\oplus_n a_n) = \oplus_n F(a_n)$$
for every countable orthogonal sequence $(a_n)$ in ${\cal E}_1$. Then $F$ is a homomorphism.
\end{The}

{\it Proof.} Let $a,b\in {\cal E}$ with $a\perp b$ and put
$
 c :=(a\oplus b)^\perp.
$
Then both
$
 (a,b,c,0,0,\ldots)
$
and
$
( a\oplus b,c,0,0,\ldots)
$
are orthogonal sequences with orthogonal sum $a\oplus b \oplus c \oplus 0\oplus \dots=(a\oplus b) \oplus c =1$. As $F$ preserves the unities by the hypothesis 
$
 1=(F(a)\oplus F(b))\oplus F(c) =F(a\oplus b)\oplus F(c)
$, whence $F(c)^\perp=F(a)\oplus F(b)$ and  $F(c)^\perp=F(a\oplus b)$. Therefore
$
 F(a\oplus b)=F(a)\oplus F(b).
$
Thus $F$ preserves the partial sum.
\\
\hspace*{6mm}
It remains to establish  preservation of suprema of increasing sequences. To this end,  let
$
 a_n\uparrow a
$. Put  $d_1:=a_1$ let $d_n$ satisfy $a_{n-1}\oplus d_n=a_n$ for $n\ge 2$.
Then obviously  $(d_n)$ is an orthogonal sequence and
$
 a_n=\oplus_{k=1}^{n}d_k
$
for every $n$. Moreover,
$$
 \oplus_{n}d_n=a
$$
because the supremum of the finite partial sums is precisely $\vee_n a_n=a$.
Therefore
$$F(\vee_na_n)=F(a)=F(\oplus_n d_n)=\oplus_n F(d_n)=\vee_n\oplus_{1}^nF(d_k) = \vee_n F(\oplus_{1}^nd_k) =\vee_nF(a_n)$$
\qed

As the range of an RCL is contained in ${\cal E}({\cal H})$,  one considers an  RCL  to be a map between  MSCAE. Actually, by the following corollary, {\em an} RCL {\em is an} MSCAE {\em homomorphism}.
In this context one may wonder whether there are representations of $F: {\cal C} \to {\cal L}({\cal H})$, i.e. homomorphisms,  with respect to the orthomodular lattice structure, evaluated in the lattice ${\cal L}({\cal H})$ of orthoprojectors of a separable Hilbert space ${\cal H}$. The answer is negative unless $F$ is trivial according to (127) Lemma in \cite{C17}.

 \begin{Cor}\label{PRCL} An RCL $F$ satisfies $F(\R^4)=I$,  $F(M^\perp)=I-F(M)$,  $F(\bigvee_n M_n)=\sum_nF(M_n)\le I$ for every sequence $(M_n)$ of mutually orthogonal sets, $F$ is finitely orthoadditive, and  is  monotone. Moreover, if $(M_n)$  is increasing, respectively decreasing, then $F(\bigvee_nM_n)=\lim_nF(M_n)$, respectively $F(\bigwedge_nM_n)=\lim_nF(M_n)$. 
\end{Cor}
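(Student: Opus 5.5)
The plan is to first show that $F$ is an MSCEA homomorphism by (\ref{HMSCAE}), and then read off the listed properties from that structure. Take $F$ as a map from $\mathcal{C}$ to $\mathcal{E}(\mathcal{H})$. We have $F(\emptyset)=0$ by definition. To get $F(\R^4)=I$, apply the defining sum to the orthogonal sequence $(\R^4,\emptyset,\emptyset,\dots)$; its join is $\R^4$, so $F(\R^4)+\sum 0=I$.

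Next comes orthoadditivity. Let $(M_n)$ be mutually orthogonal in $\mathcal{C}$ and put $M:=\bigvee_nM_n$. Then $M^\perp\in\mathcal{C}$ is orthogonal to every $M_n$, and $M^\perp\vee M=\R^4$ in the orthomodular lattice $\mathcal{C}$. The definition of an RCL applied to $(M,M^\perp)$ gives $F(M)+F(M^\perp)=I$, which is the relation $F(M^\perp)=I-F(M)$. Applied to $(M^\perp,M_1,M_2,\dots)$, whose join is also $\R^4$, it gives $F(M^\perp)+\sum_nF(M_n)=I$. Comparing the two identities yields $F(\bigvee_nM_n)=\sum_nF(M_n)$, and this sum is $\le I$ since $F(M^\perp)\ge 0$. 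In the effect algebra $\mathcal{E}(\mathcal{H})$ the orthogonal sum $\oplus_n$ is the strong limit of partial sums, i.e. the supremum. The countable join in $\mathcal{C}$ equals $\oplus_n M_n$ by the definition of $\oplus$ in a $\sigma$-complete orthomodular lattice. So the hypothesis of (\ref{HMSCAE}) holds, and $F$ is a homomorphism.

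Finite orthoadditivity is the case of finitely many nonempty terms. For monotonicity, $M\le N$ gives $N=M\vee(N\wedge M^\perp)$ by orthomodularity, with orthogonal summands. Hence $F(N)=F(M)+F(N\wedge M^\perp)\ge F(M)$.

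For increasing $(M_n)$ we use preservation of suprema from (\ref{HMSCAE}). In $\mathcal{E}(\mathcal{H})$ the supremum of an increasing bounded sequence of effects is its strong limit, which gives $F(\bigvee M_n)=\lim F(M_n)$. For decreasing $(M_n)$, the sequence $(M_n^\perp)$ increases and $\bigwedge M_n=(\bigvee M_n^\perp)^\perp$ by De Morgan. Then $F(\bigwedge M_n)=I-\lim F(M_n^\perp)=\lim F(M_n)$.

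The main point needing care is the lattice identification of $\oplus_n M_n$ with $\bigvee_n M_n$, and of the join of $M^\perp$ with the $M_n$ as $\R^4$. Both are routine consequences of orthomodularity and $\sigma$-completeness of $\mathcal{C}$, which were recalled earlier.
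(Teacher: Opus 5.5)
Your proposal is correct and follows essentially the same route as the paper. Complementation and countable orthoadditivity come from the RCL axiom applied to $(M,M^\perp,\emptyset,\dots)$ and $(M^\perp,M_1,M_2,\dots)$, monotonicity from the orthomodular decomposition $N=M\vee(N\wedge M^\perp)$, and the monotone limits from (\ref{HMSCAE}) together with duality. The one step the paper spells out that you only assert is that $\oplus_n F(M_n)$, the supremum of the partial sums in $\mathcal{E}(\mathcal{H})$, equals their strong limit. The paper justifies this with the monotone convergence theorem for bounded increasing sequences of self-adjoint operators.
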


{\it Proof.} Let $M\in\mathcal{C}$.  $F$ applies to the sequence $(M_n):= (M,M^\perp,\emptyset,\dots)$ yielding  $I=F(M)+F(M^\perp) +F(\emptyset)+\dots =F(M)+F(M^\perp)$, whence $F(M^\perp)=I-F(M)$. For $M=\emptyset$ it follows $F(\R^4)=I$. As to the next claim put $M:=\bigvee_n M_n$. Then $F$ applies to $(M^\perp, M_1,M_2,\dots)$ yielding $I=F(M^\perp)+\sum_nF(M_n)$, whence $F(M)=\sum_nF(M_n)\le I$. Given finitely many mutually orthogonal $M_1,M_2,\dots M_n$, $F$ applies to the sequence $(M_1,\dots,M_n,\emptyset,\dots)$, whence finite orthoadditivity of $F$ follows. 
 In order to show monotony, let $M\subset L$. By orthomodularity  $L= M\vee (M^\perp\wedge L)$, whence $F(L)=F(M)+F(M^\perp\wedge L)\ge F(M)$.
\\
\hspace*{6mm} 
Finally, in view of  (\ref{HMSCAE}), let $(M_n)$ be a sequence of mutually  orthogonal sets. Then obviously $F(\oplus_nM_n)=F(\bigvee_nM_n)=\sum_nF(M_n)$. Now one evaluates $\oplus_nF(M_n) = \vee_nA_n$ for  $A_n:= \sum_1^nF(M_k)$. As $(A_n)$ is a bounded  sequence (by $I$) of symmetric operators such that $(\langle x,A_nx\rangle)$ is increasing for every $x\in\mathcal{H}$, by 
\cite[Theorem 4.28]{W76} there is a bounded operator $A$ with $A_nx\to Ax$ for every $x\in\mathcal{H}$. {Clearly $0\le A_n\le A\le I$ for every $n$ and $B\ge A$ for every bounded operator $B$ with $B\ge A_n$ for all $n$. Hence $A=\sum_nF(M_n)$ strongly and $A=\vee_n A_n$, whence  $\oplus_nF(M_n) =\sum_nF(M_n)$.
\\
\hspace*{6mm} 
Thus the assumptions on $F$ in (\ref{HMSCAE}) are satisfied. So for every increasing sequence $(M_n)$ one has  $F(\bigvee_nM_n)=\bigvee_nF(M_n)$. The latter equals $\lim_nF(M_n)$ arguing for $(F(M_n))$ as above using \cite[Theorem 4.28]{W76}. If $(M_n)$ is decreasing then the claim holds by duality.\qed

 \section{Result on unions of graphs of $L$-Lipschitz functions.}

 Obviously a set $\Delta\subset \R^4$ is the graph of an $L$-Lipschitz function $\tau$ defined on $\varpi(\Delta)$  if and only if $|x_0-y_0|\le L|x-y|$ for $\mathfrak{x},\mathfrak{y} \in \Delta$. 

\begin{Lem}\label{LOAS} Let $L\ge 0$. Let $\Delta\subset \R^4$ be the union of 
sets $\Gamma$, which are the graphs of $L$-Lipschitz functions.
 Suppose that for every  $\mathfrak{a},\mathfrak{b} \in \Delta$,  $\mathfrak{a}\ne \mathfrak{b}$ there are $0=\lambda_0<\lambda_1<\dots<\lambda_{n-1} <\lambda_n=1$ for some $n\in\N$ such that, for $i=0,\dots,n-1$, $\{\mathfrak{x}^i,\mathfrak{x}^{i+1}\}$ lies in some $\Gamma$, where
$x^i:= a+\lambda_i (b    -  a) $ for $i=0,\dots,n$.
\\
\hspace*{6mm}
 Then  $\Delta$ is the graph of an $L$-Lipschitz function. If $L=1$ this means that  $\Gamma$ and $\Delta$ are achronal.
 If the sets $\Gamma$ are spacelike, then $\Delta$ is spacelike. 
  \end{Lem}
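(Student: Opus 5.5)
To prove (\ref{LOAS}), I would use the criterion recalled just before the statement. A set is the graph of an $L$-Lipschitz function on its spatial projection exactly when $|x_0-y_0|\le L|x-y|$ for all $\mathfrak{x},\mathfrak{y}$ in the set. So it suffices to show this inequality for an arbitrary pair $\mathfrak{a}\ne\mathfrak{b}$ in $\Delta$.

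Fix such $\mathfrak{a},\mathfrak{b}$ and take the chain $0=\lambda_0<\dots<\lambda_n=1$ given by the hypothesis. I read the hypothesis as supplying points $\mathfrak{x}^i\in\Delta$ whose spatial parts are $x^i=a+\lambda_i(b-a)$, with $\mathfrak{x}^0=\mathfrak{a}$, $\mathfrak{x}^n=\mathfrak{b}$, and each consecutive pair $\{\mathfrak{x}^i,\mathfrak{x}^{i+1}\}$ lying in a single graph $\Gamma_i$. Since $\Gamma_i$ is the graph of an $L$-Lipschitz function, we get $|x^i_0-x^{i+1}_0|\le L|x^i-x^{i+1}|=L(\lambda_{i+1}-\lambda_i)|b-a|$.

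The triangle inequality then gives
$|a_0-b_0|\le\sum_{i=0}^{n-1}|x^i_0-x^{i+1}_0|\le L|b-a|\sum_i(\lambda_{i+1}-\lambda_i)=L|a-b|$.
The key point is that the spatial points lie in order on the segment from $a$ to $b$. Because of this collinearity the spatial lengths add up exactly to $|a-b|$, with no loss. This proves that $\Delta$ is the graph of an $L$-Lipschitz function. As a by-product, $a\ne b$ whenever $\mathfrak{a}\ne\mathfrak{b}$, since $a=b$ would force $a_0=b_0$. For $L=1$ this is the definition of achronal, and each $\Gamma$ is achronal for the same reason.

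For the spacelike claim, take $L=1$ and assume each $\Gamma$ is spacelike. Since $a\ne b$, the distinct $\lambda_i$ make consecutive spatial points distinct, so consecutive points $\mathfrak{x}^i\ne\mathfrak{x}^{i+1}$. Spacelikeness of $\Gamma_i$ then gives strict inequality in every summand, and since $n\ge1$ the summed bound is strict: $|a_0-b_0|<|a-b|$. So $\Delta$ is spacelike.

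The only delicate point is the interpretation of the hypothesis: the intermediate points must exist as points of $\Delta$ over the prescribed spatial positions, which is implicit in requiring $\{\mathfrak{x}^i,\mathfrak{x}^{i+1}\}\subset\Gamma$. Once that is granted, the rest is the telescoping estimate above.
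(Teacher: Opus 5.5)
Your proof is correct and is essentially the paper's argument: the telescoping estimate $|a_0-b_0|\le\sum_i|x^i_0-x^{i+1}_0|\le L|a-b|$ rests on the collinearity of the $x^i$. For the spacelike claim, $\mathfrak{a}\ne\mathfrak{b}$ forces $a\ne b$ and hence $\mathfrak{x}^i\ne\mathfrak{x}^{i+1}$, so spacelikeness of each $\Gamma$ makes every summand strict. Your reading of the hypothesis, with the $\mathfrak{x}^i$ fixed points of $\Delta$ satisfying $\mathfrak{x}^0=\mathfrak{a}$ and $\mathfrak{x}^n=\mathfrak{b}$, is also the one the paper's proof tacitly uses.
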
\\ 
  {\itshape Proof.} Let  $\mathfrak{a},\mathfrak{b} \in \Delta$,  $\mathfrak{a}\ne \mathfrak{b}$  and $\lambda_i$ as above. Then $|a_0-b_0|\le \sum_{i=0}^{n-1} |x^i_0-x^{i+1}_0 |\le  \sum_{i=0}^{n-1}  L |x^i -x^{i+1}| = L |a-b|$ since the points $x^i$ lie on the straight line joining $a$ and $b$.
  \\
\hspace*{6mm}
 If the sets $\Gamma$ are spacelike, then the inequality holds for $L=1$, i.e.,  $|a_0-b_0|\le |a-b|$. Since $\mathfrak{a}\ne \mathfrak{b}$, this implies $a\ne b$ and hence $x^i\ne x^{i+1}$. Now, due to the  spacelikeness of $\Gamma$, the inequality holds even for $<$ in place of $\le$. \qed

See the applications  (\ref{LCLL})\,-\,(\ref{INFLF}),  (\ref{GPGL}), (\ref{GMMAS}) of (\ref{LOAS}). The result (\ref{LCLL})   concerns locally Lipschitz functions used in (\ref{RPLFPOL}), and  (\ref{INFLF}) on the infimum of two Lipschitz functions is used in (\ref{MTPVAL}).

\begin{Lem}\label{LCLL} Let $L>0$ and let $D\subset \R^3$ be convex. Let $\tau:D\to\R$ be locally Lipschitz with  local Lipschitz constants $L'<L$.
Let $K\subset D$ be compact. Then $\tau|_K$ is $L'$-Lipschitz for some constant $L'<L$.
Moreover $\tau$ is $L$-Lipschitz.
\end{Lem}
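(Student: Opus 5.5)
The plan is to prove the two claims in order: first the uniform strict Lipschitz bound on the compact set $K$, and then the global $L$-Lipschitz property on all of $D$, which follows by convexity. Throughout I take the hypothesis to mean that every $x\in D$ has a neighbourhood $U_x$ in $D$ on which $\tau$ is $L_x$-Lipschitz for some $L_x<L$.

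\emph{Global claim.} Let $a,b\in D$. By convexity the segment $S=\{a+\lambda(b-a):\lambda\in[0,1]\}$ lies in $D$ and is compact. Cover $S$ by the neighbourhoods $U_x$, $x\in S$, and use a Lebesgue-number argument on $[0,1]$. This yields a partition $0=\lambda_0<\lambda_1<\dots<\lambda_n=1$ such that consecutive points $x^i,x^{i+1}$ always lie in a common $U_x$. Consequently $|\tau(x^i)-\tau(x^{i+1})|\le L|x^i-x^{i+1}|$. Collinearity then gives $|\tau(a)-\tau(b)|\le L|a-b|$ by summing, exactly as in the proof of (\ref{LOAS}). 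One could also phrase this directly via (\ref{LOAS}), applied to the graphs $\Gamma$ of $\tau|_{U_x}$, with their union equal to $\operatorname{graph}\tau$. Running the same argument with the finitely many constants $L_x$ met along $S$ gives $|\tau(a)-\tau(b)|\le (\max_x L_x)|a-b|$, where the maximum is over that finite subcover. This is strict locally but not yet uniform.

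\emph{Compact claim, argued by contradiction.} Suppose there are pairs $a_k\ne b_k$ in $K$ with $q_k:=|\tau(a_k)-\tau(b_k)|/|a_k-b_k|\to L$. Note that $q_k\le L$ by the global claim. Pass to subsequences with $a_k\to a$ and $b_k\to b$ in $K$. There are two cases.
\begin{itemize}
\item[(i)] If $a\ne b$, continuity gives $|\tau(a)-\tau(b)|=L|a-b|$. This contradicts the finite-subcover bound with $\max L_x<L$ on the segment $[a,b]$.
\item[(ii)] If $a=b$, then eventually $a_k,b_k\in U_a$, where $\tau$ is $L_a$-Lipschitz. Hence $q_k\le L_a<L$, again a contradiction.
\end{itemize}
Therefore $\sup q<L$ on $K$, and this supremum serves as the required $L'<L$.

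The main obstacle is making the segment covering rigorous, namely producing the partition with consecutive points in a common neighbourhood. I would handle it with the Lebesgue number of the open cover $\{\lambda:a+\lambda(b-a)\in U_x\}$ of $[0,1]$, choosing $U_x$ relatively open in $D$. The only other point requiring care is that convexity is used precisely so that the segment stays in the domain of $\tau$.
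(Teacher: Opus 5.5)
Your proof is correct, but you obtain the uniform constant on $K$ by a different route than the paper.

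The paper first replaces $K$ by its convex hull, which is compact and contained in $D$ because $D$ is convex. Then every segment $[a,b]$ with $a,b\in K$ stays inside $K$, so a single finite subcover $U_{x_1},\dots,U_{x_l}$ of $K$ gives directly the constant $L'=\max_l L_{x_l}<L$. Chaining along $[a,b]$ through balls contained in these finitely many sets, and applying (\ref{LOAS}), yields $|\tau(a)-\tau(b)|\le L'|a-b|$. The global $L$-Lipschitz property is then a corollary, obtained by applying this to the compact set $[a,b]\subset D$.

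You reverse the order. You first get, along each segment in $D$, a bound that is strict but depends on the pair $(a,b)$. You then extract uniformity on $K$ by a sequential-compactness contradiction, splitting into two cases:
\begin{itemize}
\item pairs with distinct limits, handled by continuity and the per-segment strict bound;
\item pairs collapsing to a point $a$, handled by the local constant $L_a$.
\end{itemize}

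The two arguments have different merits. The paper's version is constructive and needs no contradiction: $L'$ is explicitly a maximum of finitely many local constants over a cover of $\operatorname{conv}K$. Yours avoids the convex-hull reduction. Your Lebesgue-number construction of the partition $0=\lambda_0<\dots<\lambda_n=1$ is also a cleaner justification of the chaining step than the paper's informal description via the points where $[a,b]$ leaves the balls $B_k$.
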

\\
 {\itshape Proof.} As known the convex hull of $K$ is compact. It is contained in $D$. Hence  it is no restriction to assume that $K$ is convex.  By assumption for every $x\in K$ there is an open set $U_x$ containing $x$ such that $\tau|_{D\cap U_x}$ is $L_x$-Lipschitz for some $0\le L_x<L$. As $K$ is compact there are finitely many $U_l:=U_{x_l}$ covering $K$. Put $L':=\max \{L_{x_l}\}<L$. 
\\
\hspace*{6mm} 
For $a,b\in K$, $a\ne b$  the segment $[a,b]:=\{a+\lambda (b-a): \lambda\in[0,1]\}$ is contained in $ K$.  
For  $x\in[a,b]$ there is $l$ with $x\in U_l$. Let $B_x\subset U_l$ be an open ball with center $x$. Since   $[a,b]$  is compact there are finitely many $B_k:=B_{x_k}$ covering $[a,b]$. 
Note that $\tau|_{\overline{B}_k\cap[a,b]}$ is $L'$-Lipschitz.
Now (\ref{LOAS}) applies. Indeed, the boundary points of the balls at which  $[a,b]$  leaves a ball $B_k$ as $\lambda$ increases determine the values $\lambda_i$. Hence $\tau|_K$ is $L'$-Lipschitz.
\\
\hspace*{6mm} 
Finally let $a,b\in D$, $a\ne b$. As $[a,b]\subset D$ is compact, the previous result applies. Hence $|\tau(a)-\tau(b)|\le L'|a-b|$  for some  $L'<L$. Therefore $\tau$ is $L$-Lipschitz.
\qed

\begin{Lem}\label{PLLF}  Let $L\ge 0$.  Let $\Delta\subset \R^4$ be the union of two closed 
sets $\Gamma_1, \Gamma_2$, which are the graphs of $L$-Lipschitz functions. Suppose that  $\varpi(\Delta)$ is convex. Then $\Delta$ is the graph of an $L$-Lipschitz function. If $\Gamma_1$, $\Gamma_2$ are spacelike then so is $\Delta$.
\end{Lem}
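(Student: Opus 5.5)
The aim is to verify the hypothesis of (\ref{LOAS}) for the family $\{\Gamma_1,\Gamma_2\}$. Then (\ref{LOAS}) shows at once that $\Delta$ is the graph of an $L$-Lipschitz function, and that $\Delta$ is spacelike if $\Gamma_1,\Gamma_2$ are. Along the way, the relation $|a_0-b_0|\le L|a-b|$ with $a=b$ will give that $\Delta$ is a graph at all. So fix $\mathfrak{a},\mathfrak{b}\in\Delta$ with $\mathfrak{a}\ne\mathfrak{b}$. We must find a finite chain of points $\mathfrak{x}^i\in\Delta$ lying over the segment $[a,b]$, with consecutive pairs in a common $\Gamma_j$. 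One preliminary point: $\Gamma_1\cup\Gamma_2$ need not be a graph a priori, so "$\mathfrak{x}^i$" must be chosen as points of $\Delta$ over $x^i$. We will see that the points used below are unambiguous.

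The case $a=b$ needs care. If $\mathfrak{a},\mathfrak{b}$ lie in the same $\Gamma_j$, then they are equal, since $\Gamma_j$ is a graph. Otherwise $\mathfrak{a}\in\Gamma_1\setminus\Gamma_2$ and $\mathfrak{b}\in\Gamma_2\setminus\Gamma_1$ lie over the same $a\in\varpi(\Gamma_1)\cap\varpi(\Gamma_2)$ with different heights. I would rule this out first by a separate argument, showing that $\tau_1=\tau_2$ on $\varpi(\Gamma_1)\cap\varpi(\Gamma_2)$, using convexity and closedness. If this fails in general, it has to be taken as part of the intended meaning of the statement (the union being a graph). I expect this to be the main obstacle. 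I would try the following: if $\tau_1(a)\neq\tau_2(a)$, consider the points where each graph is defined near $a$ and use the closed-set structure along segments. Failing that, I would note that the statement implicitly assumes compatibility on the overlap, as in the application to $\min(\tau_1,\tau_2)$ in (\ref{INFLF}), where the union is automatically a graph.

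Assuming the union is a graph $\tau$ on the convex set $D:=\varpi(\Delta)=D_1\cup D_2$, with $D_j:=\varpi(\Gamma_j)$ closed because $\Gamma_j$ is closed and $\tau_j$ is continuous, the main step is the chain construction. Take $a\neq b$, so that $[a,b]\subset D$ by convexity, and parametrize $c(\lambda):=a+\lambda(b-a)$. The sets $I_j:=\{\lambda\in[0,1]: c(\lambda)\in D_j\}$ are closed, and they cover $[0,1]$. Build the partition greedily: put $\lambda_0:=0$. Given $\lambda_i\in I_{j}$, let $\lambda_{i+1}$ be the largest $\lambda$ such that $[\lambda_i,\lambda]\subset I_j$, or simply the largest point of $I_j$ if we only need the endpoints in $I_j$. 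This largest point exists because $I_j$ is closed. For (\ref{LOAS}) it suffices that both endpoints $\mathfrak{x}^i,\mathfrak{x}^{i+1}$ lie in the same $\Gamma_j$, not the whole segment. So define $\lambda_{i+1}:=\max I_j$, then switch to the other index $j'$, noting that $\lambda_{i+1}$ must lie in $I_{j'}$ if $\lambda_{i+1}<1$, since points just beyond it belong to $I_{j'}$ and $I_{j'}$ is closed. Then take $\lambda_{i+2}:=\max I_{j'}$. The maxima strictly increase, and there are only two sets, so the process ends after at most two or three steps with $\lambda_n=1$. This yields $0=\lambda_0<\dots<\lambda_n=1$ with consecutive points in a common $\Gamma_j$.

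Finally apply (\ref{LOAS}) to conclude that $\Delta$ is the graph of an $L$-Lipschitz function, and spacelike if both $\Gamma_j$ are spacelike.
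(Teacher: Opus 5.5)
Your chain construction is the paper's argument. The paper takes $\mathfrak a\in\Gamma_1$ and puts $\lambda':=\sup\{\lambda\in[0,1]: a+\lambda(b-a)\in\varpi(\Gamma_1)\}$. Closedness puts the point over $x^{\lambda'}$ in $\Gamma_1$. When $\lambda'<1$ that point also lies in $\Gamma_2$, because $]\lambda',1]$ is covered by $\varpi(\Gamma_2)$. Then (\ref{LOAS}) is invoked. This is exactly your greedy step $\lambda_1=\max I_1$, $\lambda_2=1$.

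Two small corrections:
\begin{itemize}
\item $D_j=\varpi(\Gamma_j)$ is closed because $\tau_j$ is \emph{Lipschitz}, not merely continuous. If $x_n\to x$ with $x_n\in D_j$, then $\tau_j(x_n)$ is Cauchy, and closedness of $\Gamma_j$ puts the limit point in $\Gamma_j$.
\item If $\max I_j=\lambda_0$, start with the other index rather than producing a repeated $\lambda$.
\end{itemize}

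The compatibility issue you raise cannot be resolved by your first option. No argument from convexity and closedness gives $\tau_1=\tau_2$ on $\varpi(\Gamma_1)\cap\varpi(\Gamma_2)$. Take $\Gamma_1=\{\mathfrak x: x_0=0\}$ and $\Gamma_2=\{\mathfrak x: x_0=1\}$. These are closed, spacelike graphs of constant functions, and $\varpi(\Delta)=\R^3$ is convex. Yet their union is neither a graph nor spacelike. So the lemma is false as literally stated.

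Your remark that the graph property comes ``along the way'' from (\ref{LOAS}) is therefore circular. Take $\mathfrak a\ne\mathfrak b$ over the same point $a$. Every link of a chain lies in one graph over that single point, which forces $\mathfrak a=\mathfrak b$, so no chain exists. The graph property of $\Delta$ (equivalently $\tau_1=\tau_2$ on the overlap) must be an extra hypothesis.

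The paper's proof tacitly assumes exactly this:
\begin{itemize}
\item it writes $\tau$ for the function whose graph is $\Delta$;
\item in step (b) it identifies the points of $\Gamma_1$ and $\Gamma_2$ over $x^{\lambda'}$;
\item it concludes $\mathfrak b\in\Gamma_1$ from $\lambda'=1$.
\end{itemize}
The hypothesis does hold in both applications, (\ref{INFLF}) and step (c) of the proof of (\ref{MTPVAL}), where the union is the graph of a minimum of two functions.

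So commit to your second option. With that hypothesis stated, your proof is complete and coincides with the paper's.
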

\\
 {\itshape Proof.}
 Let  $\mathfrak{a},\mathfrak{b} \in \Delta$,  $\mathfrak{a}\ne \mathfrak{b}$.  Without restriction $\mathfrak{a} \in \Gamma_1$.
 Put $x^\lambda:= a+\lambda(b-a)$ for $\lambda\in [0,1]$.  Then  $x^\lambda \in\varpi(\Delta)$ and $x^0\in  \varpi(\Gamma_1)$.
 Let $\lambda':=\sup\{\lambda\in [0,1]: x^\lambda\,\in \varpi(\Gamma_1)\}$.
\\
\hspace*{6mm} 
(a) Show $\big(\tau(x^{\lambda'}),x^{\lambda'}\big)\in \Gamma_1$. Indeed, there are $\lambda_n\uparrow \lambda'$ such that $x^{\lambda_n} \in\varpi(\Gamma_1)$ and $x^{\lambda_n}\to\ x^{\lambda'}$. Then $ \Gamma_1 \ni \big(\tau(x^{\lambda_n}),x^{\lambda_n}\big) \to
\big(\tau(x^{\lambda'}),x^{\lambda'}\big)\in \Gamma_1$ as $\tau$ is continuous and $\Gamma_1$ closed.
\\
\hspace*{6mm} 
(b) If $\lambda'=1$ then $\mathfrak{b} \in \Gamma_1$, whence $|a_0 -b_0|=|\tau_1(a)-\tau_1(b)|\le L|a-b|$  for $\operatorname{graph}(\tau_1)=\Gamma_1$. If $\lambda'<1$, let $\lambda \in ]\lambda',1]$. Then 
$x^\lambda \not\in \varpi(\Gamma_1)$, whence  $x^\lambda \in \varpi(\Gamma_2)$. Therefore, arguing as in (a), it follows $\big(\tau(x^{\lambda'}),x^{\lambda'}\big)\in \Gamma_2$.
Now (\ref{LOAS}) applies.\qed

\begin{Cor}\label{INFLF} Let $L\ge 0$ and let $D\subset \R^3$ be closed convex. Let $\tau_i$ be $L$-Lipschitz on $D$ and put $\Lambda_i=\operatorname{graph}(\tau_i)$,  $i=1,2$. Then  $\tau:=\inf\{\tau_1,\tau_2\}$ is $L$-Lipschitz, and if $\Lambda_i$, $i=1,2$ are spacelike, then so is $\Lambda:=\operatorname{graph}(\tau)$. The same holds true for  $\tau:=\sup\{\tau_1,\tau_2\}$. 
\end{Cor}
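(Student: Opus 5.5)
The plan is to derive the corollary from (\ref{PLLF}) applied to the two closed graphs obtained by splitting $D$ according to which of $\tau_1,\tau_2$ is smaller. Put
$$D_1:=\{x\in D:\tau_1(x)\le\tau_2(x)\},\qquad D_2:=\{x\in D:\tau_2(x)\le\tau_1(x)\}.$$
Both sets are closed, since $D$ is closed and $\tau_1,\tau_2$ are continuous, and $D_1\cup D_2=D$. On $D_i$ the infimum $\tau$ agrees with $\tau_i$. Hence
$$\Lambda=\operatorname{graph}(\tau)=\Gamma_1\cup\Gamma_2,\qquad \Gamma_i:=\operatorname{graph}(\tau_i|_{D_i}).$$
Each $\Gamma_i$ is closed, being the graph of a continuous function on a closed set, and it is the graph of an $L$-Lipschitz function, namely a restriction of $\tau_i$. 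Moreover $\varpi(\Lambda)=D$ is convex.

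Next I check that $\Lambda$ is a graph over $D$ and that the possible overlap causes no trouble. On $D_1\cap D_2$ we have $\tau_1=\tau_2$, so $\Gamma_1$ and $\Gamma_2$ agree wherever their projections overlap. The union is therefore still the graph of the single-valued function $\tau$, and this is exactly the situation of (\ref{PLLF}).

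Invoking (\ref{PLLF}) then shows that $\Lambda$ is the graph of an $L$-Lipschitz function, which must be $\tau$. If the $\Lambda_i$ are spacelike, then so are their subsets $\Gamma_i$, and the spacelike clause of (\ref{PLLF}) shows that $\Lambda$ is spacelike.

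The case $\tau=\sup\{\tau_1,\tau_2\}$ is identical with the inequalities reversed. Alternatively, write $\sup\{\tau_1,\tau_2\}=-\inf\{-\tau_1,-\tau_2\}$ and note that $\tau\mapsto-\tau$ preserves the $L$-Lipschitz property and spacelikeness, since it corresponds to time reflection $x_0\mapsto-x_0$.

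The only point needing care is the hypothesis of (\ref{PLLF}) that each $\Gamma_i$ be closed. I would justify it as follows: a sequence $(x_n)$ in $D_i$ with $(\tau_i(x_n),x_n)$ convergent has $x_n\to x\in D_i$ by closedness of $D_i$, and continuity of $\tau_i$ then puts the limit in $\Gamma_i$. I do not expect any real obstacle; the argument is a direct reduction.
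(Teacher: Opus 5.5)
Your proposal is correct and is essentially the paper's proof. The paper likewise writes $\Lambda=\Gamma_1\cup\Gamma_2$ with the closed pieces $\Gamma_i=\{\mathfrak{x}\in\Lambda:\tau(x)=\tau_i(x)\}$, which are exactly your $\operatorname{graph}(\tau_i|_{D_i})$; it notes that $\varpi(\Lambda)=D$ is convex and then applies (\ref{PLLF}). Your explicit checks that the two pieces agree on their overlap and that the $\sup$ case follows by time reflection supply details the paper leaves implicit.
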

\\
 {\itshape Proof.}  $\Gamma_i:=\{\mathfrak{x}\in \Lambda: \tau(x)=\tau_i(x)\}$, $i=1,2$ are closed  with $\Lambda=\Gamma_1\cup \Gamma_2$, $\varpi(\Lambda)=D$ convex. 
$\Gamma_i$ is the graph of  a $L$-Lipschitz function, respectively  $\Gamma_i$ is spacelike. Hence (\ref{PLLF}) applies.\qed

\section{Pyramids with special polyhedronal basis}

Let $B:=\{x\in R^3: |x|\le 1\}$ be the unit ball and $\partial B$ the unit sphere in $\R^3$. Let $\gamma > 0$ and $v\in\partial B$. Then $H_v:=\{\mathfrak{x}\in\R^4: x_0=\gamma xv\}$
is the hyperplane in $\R^4$ containing the origin which touches  $\{\gamma\}\times B$ at $(\gamma,v)$.  The closed half-space $H'_v :=\{\mathfrak{x}\in\R^4: x_0\ge \gamma xv\}$ contains  $\{\gamma\}\times B$.

\begin{Lem}\label{CSP} Let $F\subset \partial B$ be finite. Obviously $\bigcap_{v\in F}H'_v$ is a closed convex set containing $\{\gamma\}\times B$ and satisfying $\bigcap_{v\in F}H'_v \cap (\{\gamma\}\times \R^3)  = \{\gamma\}\times Q$ for $Q:=\bigcap_{v\in F}\{x\in\R^3:xv\le 1\}$.
Then 
\begin{itemize}
\item[\emph{(a)}] $Q$ is a closed convex polyhedron if there are  linearly independent $v_1,v_2,v_3$ in $\R^3$ such that 
$\pm v_1,\pm v_2,\pm v_3\in F$. $B$ touches every face $Q\cap \{x\in\R^3:xv=1 \}$
of $Q$ in $v\in F$.
\item[\emph{(b)}] Let $q>1$. Then there is a finite $F\subset \partial B$ such that $Q\subset qB$.

\end{itemize}
\end{Lem}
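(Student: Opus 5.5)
The plan is to reduce everything to elementary statements about the convex polyhedral sets cut out by finitely many half-spaces $\{x\in\R^3: xv\le 1\}$, $v\in F$. Then I use compactness of $\partial B$ for (b).

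For (a), first observe that $Q$ is an intersection of finitely many closed half-spaces, hence closed, convex and polyhedral. Boundedness is the only point needing the hypothesis. Since $\pm v_1,\pm v_2,\pm v_3\in F$, every $x\in Q$ satisfies $|xv_i|\le 1$ for $i=1,2,3$. Because $v_1,v_2,v_3$ are linearly independent, the linear map $x\mapsto (xv_1,xv_2,xv_3)$ is invertible. Hence $Q$ lies in the preimage of the cube $[-1,1]^3$, which is bounded, so $Q$ is a compact convex polyhedron.

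For the touching claim, fix $v\in F$. Then $v\in Q$, since $vv=1$ and $vw\le |v||w|=1$ for every $w\in F$ by Cauchy--Schwarz. So $v$ lies in the face $Q\cap\{xv=1\}$. Moreover $B\subset Q$, again by Cauchy--Schwarz, and $B\cap\{xv=1\}=\{v\}$, with equality in Cauchy--Schwarz only at $x=v$. Thus $B$ touches that face exactly at $v$. The face is nonempty, so it is a genuine supporting face (possibly lower dimensional if some inequality is redundant; I would remark that $\{xv=1\}$ supports $Q$ at $v$, which is all that is claimed).

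For (b), given $q>1$ set $c:=1/q<1$ and choose $\eta>0$ with $1-\eta^2/2>c$ roughly, for instance $\eta$ such that $\cos$-type estimate $wv\ge 1-\tfrac12|w-v|^2$ gives $wv>c$ whenever $|w-v|<\eta$. By compactness of $\partial B$ pick a finite $\eta$-net $F\subset\partial B$, and enlarge it by $\pm e_1,\pm e_2,\pm e_3$ so that (a) applies. Now let $x\in Q$, $x\ne0$, and put $w:=x/|x|\in\partial B$. Choose $v\in F$ with $|w-v|<\eta$. Then $1\ge xv=|x|\,wv>|x|c$, so $|x|<q$ and $Q\subset qB$.

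The only step requiring care is the quantitative choice of $\eta$ via the identity $wv=1-\tfrac12|w-v|^2$ for unit vectors. Taking $\eta:=\sqrt{2(1-1/q)}$ makes this explicit. Everything else is routine.
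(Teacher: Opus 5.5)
Your proof is correct and follows the paper's argument: boundedness of $Q$ comes from $|xv_i|\le 1$ together with linear independence, and (b) uses a finite cover of $\partial B$ by the caps $\{w\in\partial B: wv>1/q\}$, which are exactly your $\eta$-balls with $\eta=\sqrt{2(1-1/q)}$, giving $1\ge xv>|x|/q$. Your hedge about possibly degenerate faces is unnecessary: since $wv<1$ for every $w\in F\setminus\{v\}$, a neighborhood of $v$ in the plane $\{xv=1\}$ lies in $Q$, so each face is a genuine two-dimensional facet that $B$ touches only at $v$.
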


{\itshape Proof.} (a) Let 
$x\in Q\setminus \{0\}$.
Then $1\ge\pm xv_i$, whence $|xv_i|\le 1$,  $i=1,2,3$. Let $(e_1,e_2,e_3)$ be the standard basis of $\R^3$ and $v_j=\sum _{i=1}^3A_{ij} e_i$.  Note $\sum_{i=1}^3|\hat{x}e_i|^2=1$ for   $\hat{x}:=\frac{1}{|x|}x$.
Then by linear analysis there is $j\in\{1,2,3\}$ such that $(\hat{x} v_j)^2\ge\alpha^2/3$ with $\alpha$ the smallest eigenvalue of $|A|$. This implies $|x|\le \sqrt{3}/\alpha$. Hence $Q$ is bounded. The result follows.
\\
\hspace*{6mm}
(b) For $v\in \partial B$ let $U_v:=\{y\in \partial B:yv>\frac{1}{q}\}$. Since $v\in U_v$, $(U_v)_v$ is an open covering of $\partial B$. Hence there is a finite  $F\subset \partial B$ such that $\partial B=\bigcup_{v\in F}U_v$.  Let  $F$ also contain  $\pm v_1,\pm v_2,\pm v_3$ with  $v_1,v_2,v_3$  linearly independent. Recall (a). Let $Q$ be determined by $F$.
\\
\hspace*{6mm}
Now let $x\in Q\setminus\{0\}$. Then $xv\le 1$ for $v\in F$. For  $\hat{x}=\frac{1}{|x|}x$  there is $v\in F$ such that $\hat{x}\in U_v$. Hence $\hat{x}v>\frac{1}{q}$. It follows $|x|< q$.\qed

\begin{Pro}\label{GPGL} Let $\gamma>0$. Let $P\subset \R^4$ be the infinite pyramid with vertex $0$ and generated by the future directed straight half-lines joining the vertex with the points of  $\{\gamma\}\times Q$, where $Q$ is a polyhedron defined in \emph{(\ref{CSP})}. Then $\partial P$ is the graph of a $\gamma$-Lipschitz function on $\R^3$. In particular, if $\gamma\le 1$, then $\partial P$ is maximal achronal.
\end{Pro}\\
 {\itshape Proof.} Put   $Q_t:= \frac{t}{\gamma}Q$, $t>0$.  Then  $\partial P\cap (\{t\}\times \R^3) =\{t\}\times \partial Q_t$ and $Q_t\supset \frac{t}{\gamma} B$. 
So one infers that $\varpi$ maps $\partial P$ bijectively onto $\R^3$.   
   \\
\hspace*{6mm} 
 The hyperplanes  $H_v:=\{\mathfrak{x}\in\R^4: x_0=\gamma xv\}$, $v\in F$
are     the tangent spaces of $P$ off the edges.     
    \\
\hspace*{6mm} 
 Now note that $H_v$ is the graph of the $\gamma$-Lipschitz function $x\mapsto \gamma xv$ and that $\partial P$ is the union of the sets  $H_v\cap \partial P$. Every edge of $\partial P$ is mapped by $\varpi$ on a plane $E_w:=\{x\in\R^3:xw=0\}$, where $w$ equals $v-v'$ for some $v,v'\in F$, $v\ne v'$.
     \\
\hspace*{6mm} 
So (\ref{LOAS}) applies. Indeed, let  $\mathfrak{a},\mathfrak{b} \in \partial P$,  $\mathfrak{a}\ne \mathfrak{b}$ and let $[a,b]$ be the segment $\{a+\lambda (b-a): 0\le\lambda\le 1\}$. Then the  points at which $[a,b]$ crosses a plane $E_w$ determine the values $\lambda_i$.\qed 

\begin{Def}\label{GGIP}      More generally we consider  the spacetime translated pyramids $\mathfrak{a}+P$. Their characteristics are the position of the \textbf{vertex},  the  \textbf{steepness}  $\gamma>0$ and the \textbf{bound} $q>1$ of the pyramid.
\end{Def}

\begin{Lem} \label{GPCLPM} Let $P$  be the pyramid  \emph{(\ref{GPGL})} with vertex $0$, steepness $\gamma$ and bound $q$. Then     
 \\
\hspace*{6mm}  
\begin{equation}\label{GCIP}
\{\mathfrak{x}: x_0\ge \gamma|x|\}\subset\;    P\;\subset \{\mathfrak{x}: x_0\ge \frac{\gamma}{q}|x|\}
\end{equation}
\end{Lem}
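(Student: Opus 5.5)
The plan is to describe $P$ slice by slice in time and then compare each slice with balls. First I identify $P$ explicitly. Since $P$ is generated by the future directed half-lines from $0$ through the points $(\gamma,y)$ with $y\in Q$, it consists of the points $s(\gamma,y)$ with $s\ge 0$ and $y\in Q$. Writing $t=s\gamma$, this gives
\[
P=\{\mathfrak{x}: x_0\ge 0,\ x\in \tfrac{x_0}{\gamma}Q\}.
\]
The slice $x_0=0$ is just $\{0\}$, because $Q$ is bounded. So for $t>0$ the slice $P\cap(\{t\}\times\R^3)$ equals $\{t\}\times Q_t$ with $Q_t=\frac{t}{\gamma}Q$, exactly as in the proof of (\ref{GPGL}). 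Both inclusions in (\ref{GCIP}) then reduce to comparing $Q$ with balls, $B\subset Q\subset qB$, and rescaling.

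For the first inclusion, let $x_0\ge\gamma|x|$. If $x_0=0$, then $\mathfrak{x}=0\in P$. Otherwise I put $y:=\frac{\gamma}{x_0}x$, so that $|y|\le 1$ and $y\in B$. From (\ref{CSP}) I use $B\subset Q$, which holds because $yv\le |y|\le 1$ for every $v\in F$. Hence $\mathfrak{x}=\frac{x_0}{\gamma}(\gamma,y)$ lies on a generating half-line, and so $\mathfrak{x}\in P$.

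For the second inclusion, let $\mathfrak{x}\in P$, so $\mathfrak{x}=s(\gamma,y)$ with $s\ge0$ and $y\in Q$. By (\ref{CSP})(b), bound $q$ means $Q\subset qB$, i.e. $|y|\le q$. Then $|x|=s|y|\le sq=\frac{q}{\gamma}x_0$, which is $x_0\ge\frac{\gamma}{q}|x|$.

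The only delicate point is the explicit description of $P$. I must make sure that "generated by the half-lines" means the union of these closed half-lines, not their convex hull. Since $Q$ is convex, the union is already a convex cone, so the two readings agree, and the vertex $0$ is included in either case. I also need that "bound $q$" refers to $Q\subset qB$ as constructed in (\ref{CSP})(b); that is the property the characteristics in (\ref{GGIP}) encode.
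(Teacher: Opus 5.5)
Your proof is correct and is essentially the paper's argument. The paper checks the slice $x_0=\gamma$, using $B\subset Q$ for the first inclusion and $|x|\le q$ on $Q$ for the second, and then extends to all of $P$ through the common conical structure with vertex $0$, which your parametrization $P=\{s(\gamma,y): s\ge 0,\ y\in Q\}$ makes explicit; one minor point is that the slice $x_0=0$ of $P$ is $\{0\}$ because $\gamma>0$ forces $s=0$, not because $Q$ is bounded.
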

\\
{\itshape Proof.}  Let $\mathfrak{x}=(\gamma,x) \in C:=\{\mathfrak{y}: y_0\ge \gamma|y|\}$. Then $|x|\le 1$, whence $\mathfrak{x}\in P$. Therefore $C\subset P$.
Now let  $C:=\{\mathfrak{y}: y_0\ge \frac{\gamma}{q}|y|\}$. Obviously $0\in C$.
Let $\mathfrak{x}\in \{\gamma\}\times Q$. Then $|x|\le q$. Hence $\mathfrak{x}\in C$. So the straight half-line joining the vertex $0$ with  $\mathfrak{x}$ is contained in $C$. Therefore  $P\subset C$. \qed

\begin{Lem}\label{GCIMAS}
Let $0\le L<L'$. Let $\Lambda$ be the graph of an $L$-Lipschitz function  $\tau$ on $\R^3$ and $\Lambda^-:=\{\mathfrak{x}:x_0\le \tau(x)\}$.  Let $\mathfrak{y}\in\Lambda$ and $\alpha\ge 0$.   Then the set $\Lambda^- \cap C$ with
$$C:= \{\mathfrak{x}: x_0-y_0 +\alpha\ge L' |x-y|\}$$ 
contains $\mathfrak{y}$ and is compact. Moreover
\begin{equation}\label{GIELC}
\Big\{x\in\R^3: |x-y|\le \frac{\alpha}{L'+L} \Big\}\subset \;\varpi(\Lambda\cap C)\;\subset \Big\{x\in\R^3: |x-y|\le \frac{\alpha}{L'-L} \Big\}
\end{equation}
\end{Lem}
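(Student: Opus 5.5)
The plan is to work with the Lipschitz bound directly. Write $\delta(x):=x_0-y_0$ along $\Lambda$. For $\mathfrak{x}\in\Lambda$ we have $|\tau(x)-y_0|\le L|x-y|$, since $\tau(y)=y_0$.

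First, $\mathfrak{y}\in\Lambda^-\cap C$ is immediate: $y_0=\tau(y)$, and $\alpha\ge 0=L'|y-y|$.

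Compactness of $\Lambda^-\cap C$ goes in two steps. The set is closed, because $\tau$ is continuous and both defining inequalities are non-strict. For boundedness, take $\mathfrak{x}\in\Lambda^-\cap C$. Then
$$L'|x-y|-\alpha\le x_0-y_0\le \tau(x)-y_0\le L|x-y|,$$
so $(L'-L)|x-y|\le\alpha$. Since $L'>L$, this gives $|x-y|\le \alpha/(L'-L)$. The time coordinate is then trapped between $y_0-\alpha$ and $y_0+L\alpha/(L'-L)$. Hence the set is bounded, and therefore compact.

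For (\ref{GIELC}), the right inclusion is the same computation restricted to $\Lambda\subset\Lambda^-$. If $\mathfrak{x}\in\Lambda\cap C$, then $(L'-L)|x-y|\le\alpha$, which is exactly $x$ lying in the larger ball.

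For the left inclusion, let $|x-y|\le \alpha/(L'+L)$ and set $\mathfrak{x}:=(\tau(x),x)\in\Lambda$. We must check $\tau(x)-y_0+\alpha\ge L'|x-y|$. Using $\tau(x)-y_0\ge -L|x-y|$, it suffices that $\alpha\ge (L+L')|x-y|$, which is the hypothesis. So $x\in\varpi(\Lambda\cap C)$.

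I do not expect a real obstacle here. The only point needing care is that boundedness in the time direction follows from the spatial bound together with the two inequalities $x_0\le\tau(x)$ and $x_0\ge y_0-\alpha+L'|x-y|$.
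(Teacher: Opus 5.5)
Your proposal is correct and follows essentially the paper's argument: closedness, the two-sided squeeze of $x_0-y_0$ using the Lipschitz bound and the cone inequality, and the same one-line checks for both inclusions. The one small difference is that in the boundedness step you center the Lipschitz estimate at $y$ rather than at the origin (the paper gets $|x|\le(2L'|y|+\alpha)/(L'-L)$). This gives the sharp bound $|x-y|\le\alpha/(L'-L)$ on all of $\Lambda^-\cap C$ at once, so the right inclusion follows immediately.
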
\\
{\itshape Proof.}    Obviously, $\mathfrak{y}\in C$ and
$\Lambda^- \cap C$ is closed. Let $\mathfrak{x} \in \Lambda^- \cap C$. Then $x_0\le\tau(x) \le L|x|+\tau(0)$ and $x_0\ge L'|x-y| - \alpha+y_0 \ge L'|x|-L'|y| - \alpha+y_0$. Hence 
$$  L'|x|-L'|y| - \alpha+y_0\le x_0 \le  L|x|+\tau(0)$$
 whence $|x|\le (2L'|y|+\alpha)/(L'-L)$.  Therefore $ \Lambda^- \cap C$ is bounded.
     \\
\hspace*{6mm} 
Let $x\in \varpi(\Lambda\cap C)$. Put  $x_0:=\tau(x)$ and $\mathfrak{z}:=\mathfrak{x} - \mathfrak{y}$. Then $-L|z|\le z_0\le L|z|$ and $z_0\ge L'|z|-\alpha$ hold, whence  $ L'|z|-\alpha\le L|z|$, i.e.,  $|z|\le  \frac{\alpha}{L'-L}$. Now let $x\in\R^3$.  Put  $x_0:=\tau(x)$ and $\mathfrak{z}:=\mathfrak{x} - \mathfrak{y}$, and assume $|z|\le  \frac{\alpha}{L'+L}$. Then $ L'|z|-\alpha\le -L|z| \le z_0$. Hence (\ref{GIELC}).\qed

\begin{Cor}\label{GCCIMAS} As to $\Lambda$ see $\emph{(\ref{GCIMAS})}$. Let $P$ be  the pyramid with vertex $(y_0-\alpha,y)$,  bound $q$, steepness $\gamma> qL$.  Then 
\begin{equation}
\Big\{x\in\R^3: |x-y|\le \frac{\alpha}{\gamma+L} \Big\}\subset \;\varpi(\Lambda\cap P)\;\subset \Big\{x\in\R^3: |x-y|\le \frac{\alpha}{  \gamma/q -L} \Big\}
\end{equation}
\end{Cor}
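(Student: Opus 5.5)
Since $P$ is a translate of the pyramid of (\ref{GPGL}) with steepness $\gamma$ and bound $q$, I will reduce the claim to (\ref{GCIMAS}) by sandwiching $P$ between two cones and then applying (\ref{GCIMAS}) once with each cone.

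The first step is to translate (\ref{GCIP}) to the vertex $\mathfrak{v}:=(y_0-\alpha,y)$. Writing $P=\mathfrak{v}+P_0$, with $P_0$ the pyramid at vertex $0$, gives
$$C_\gamma:=\{\mathfrak{x}: x_0-y_0+\alpha\ge \gamma|x-y|\}\subset P\subset C_{\gamma/q}:=\{\mathfrak{x}: x_0-y_0+\alpha\ge \tfrac{\gamma}{q}|x-y|\}.$$
Intersecting with $\Lambda$ and projecting yields
$$\varpi(\Lambda\cap C_\gamma)\subset\varpi(\Lambda\cap P)\subset \varpi(\Lambda\cap C_{\gamma/q}).$$

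The second step applies (\ref{GCIMAS}) to each cone.

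For the outer cone, take $L':=\gamma/q$. The hypothesis $\gamma>qL$ gives $L'>L$, so the right inclusion in (\ref{GIELC}) gives $\varpi(\Lambda\cap C_{\gamma/q})\subset\{|x-y|\le \alpha/(\gamma/q-L)\}$. This is the claimed right inclusion.

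For the inner cone, take $L':=\gamma$. Since $\gamma>qL> L$ (as $q>1$), again $L'>L$. The left inclusion in (\ref{GIELC}) then gives $\{|x-y|\le \alpha/(\gamma+L)\}\subset\varpi(\Lambda\cap C_\gamma)$, which is the claimed left inclusion.

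The only point that needs checking is that (\ref{GCIMAS}) fits this setting. It requires $\mathfrak{y}\in\Lambda$, the cone $C$ written in exactly the form above with apex $(y_0-\alpha,y)$, and $\alpha\ge0$. All of these hold here, so no further work is needed. The case $\alpha=0$ is harmless, since both balls then reduce to the point $\{y\}$.
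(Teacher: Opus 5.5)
Your proof is correct and is essentially the paper's own argument: translate the cone sandwich (\ref{GCIP}) to the vertex $(y_0-\alpha,y)$, then apply (\ref{GIELC}) with $L'=\gamma$ to the inner cone and with $L'=\gamma/q$ to the outer cone. The remaining check is that $\gamma>qL\ge L$ and $\gamma/q>L$, so that (\ref{GCIMAS}) applies in both cases, and you verify this.
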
\\
{\itshape Proof.}   Apply (\ref{GIELC}) to the cone contained in $P$ and that containing $P$ from (\ref{GCIP}).\qed

\begin{Pro}\label{GMMAS} Let $L\ge 0$. Let $\Lambda$ be the graph of an $L$-Lipschitz function  $\tau$ on $\R^3$ and $\Lambda^-:=\{\mathfrak{x}:x_0\le \tau(x)\}$.  Let $\mathfrak{y}\in\Lambda$ and $\alpha\ge 0$.  Let $P$ be  the pyramid with vertex $(y_0-\alpha,y)$, bound $q$, steepness $\gamma> qL$. 
     \\
\hspace*{6mm}
Then the sets  $\Lambda^- \cap P$ , $\Lambda_P:=\Lambda\cap P$ contain $\mathfrak{y}$ and are compact,
 $\varSigma_P:= \partial P\cap \Lambda^-$ contains  $(y_0-\alpha,y)$ and is compact, and
 \begin{equation}\label{GSML}
\Lambda':=(\Lambda\setminus \Lambda_P)\cup \varSigma_P
\end{equation}
is the graph of a $\gamma$-Lipschitz function on $\R^3$. If $\gamma\le 1$ then $\Lambda'$
 is maximal achronal.
\end{Pro}\\
{\itshape Proof.} 
 By  (\ref{GCIMAS}) for $L'=\gamma/q$ and by (\ref{GCIP}), $\Lambda^- \cap P$ is compact. $\Lambda_P$, $\varSigma_P$ are closed subsets of  $ \Lambda^- \cap P$.
    \\
\hspace*{6mm} 
As to (\ref{GSML}), one checks $\varpi(\varSigma_P)=\varpi(\Lambda_P)$. Indeed, let $\mathfrak{x}\in \varSigma_P$.  Since   $(x_0,x)\in \partial P$ and $x_0\le \tau(x)$, it follows $(\tau(x),x)\in P$. Conversely, let  $\mathfrak{x}\in \Lambda_P$. Then $x_0=\tau(x)$ and  $(\tau(x),x)\in P$, whence there is $\xi\le \tau(x)$ such that $(\xi,x)\in \partial P$.
\\
\hspace*{6mm}
It follows $\varpi(\Lambda')=\R^3$.  It remains to show that $\Lambda'$ is achronal \cite[(1)\,(f)]{C24}.
\\
\hspace*{6mm}
$\tau$ is $\gamma$-Lipschitz as $L<\gamma$.   Hence $\Lambda$  and, by (\ref{GPGL}), also $\partial P$ is the graph of an $\gamma$-Lipschitz function.  So (\ref{LOAS})  applies to the union $\Lambda'=\overline{\Lambda\setminus \Lambda_P} \cup \varSigma_P$.\qed

\section{When the localization operators are projection operators\,?}

Let $0\le L\le 1$.
A Borel set $\Delta\subset \R^4$ is called a (maximal) $ L$-achronal set if it is the graph of a $ L$-Lipschitz function $\tau$ (on $\R^3$). Obviously the class of these Borel sets 
 is contained in $\mathcal{B}^{ach}$  and it  is Poincar\'e invariant. It contains all flat spacelike Borel sets. Note also that every $L$-achronal set is subset of a maximal $L$-achronal set.
This fact follows easily from the case $L=1$ referring to achronal sets  \cite[(1)(c)]{C24}. Let $\Lambda$ be maximal acronal.
 Keep in mind that $\varpi|_\Lambda$ is an homeomorphism from $\Lambda$ on $\R^3$. Actually  $|x-y| \le |\mathfrak{x} - \mathfrak{y}| \le \sqrt{2}\, |x-y|$ for $ \mathfrak{x},\mathfrak{y} \in \Lambda$.
 
\begin{The}\label{PLFPOL}  To every   $0\le L < 1$ and  $ L$-achronal Borel set $\Delta$ a nonnegative operator  $T(\Delta)$ is assigned such that 
$T(\Delta)=0$ if $\mathcal{L}^3(\varpi(\Delta))=0$ and such  that $T$ is a positive operator valued normalized measure on 
every maximal $ L$-achronal Borel set. 
\\
\hspace*{6mm}
\emph{(a)} There is a unique extension $\tilde{T}$ of $T$ by nonnegative operators $\tilde{T}(\Delta)$ assigned to achronal Borel sets $\Delta$, whose bounded subsets  are  $L$-achronal with constants $ L<1$, such that  $\tilde{T}$  is $\sigma$-additive. One has $\tilde{T}(\Delta)=0$ if $\mathcal{L}^3(\varpi(\Delta))=0$.
\\
\hspace*{6mm}
\emph{(b)} Suppose that $T(\Delta)$ is a projection operator for every flat spacelike Borel set $\Delta$. Then $\tilde{T}$ is projection operator valued.
\end{The}

\begin{Rem}\label{RPLFPOL} In view of (\ref{PLFPOL})(a) consider a maximal achronal $\Lambda$ being locally $L$-achronal  with constants $L<1$. Then every Borel $\Delta\subset \Lambda$ satisfies the condition in (\ref{PLFPOL})(a), i.e., it is acronal and every bounded  subset is $L$-achronal for some $L<1$. Indeed, apply (\ref{LCLL}) to $D=\R^3$ for $L=1$.
\end{Rem}

The proof of  (\ref{PLFPOL}) uses the following decisive  technical result  (\ref{MTR}).

\begin{Pro}\label{MTR} Let $\Lambda$ be a maximal achronal set and let $T$ be a positive operator valued not necessarily normalized measure for the Borel sets of $\Lambda$. Assume $T(\Delta)=0$ if $\mathcal{L}^3(\varpi(\Delta))=0$.
Further, for $x\in\R^3$ and  $n\in\N$ let $C_{x,n} \subset \R^3 $ be  compact satisfying
\begin{itemize}
\item[\emph{(i)}]  $x\in C_{x,n} $
\item[\emph{(ii)}]  $0<\operatorname{diam}(C_{x,n}) \to 0$ as $n\to \infty$
\item[\emph{(iii)}]  $\operatorname{diam}(C_{x,n})^3\le C \,\mathcal{L}^3(C_{x,n})$ for all $x,n$ for some constant $C<\infty$
\end{itemize}
Then $T(\Delta)$ is a projection operator for every Borel  set $\Delta\subset \Lambda$,
if $T( \varpi^{-1}(C_{x,n})\cap\Lambda) $ is a projection operator for all $x,n$.

\end{Pro}
{\itshape Proof.} (a) For  open bounded  $E\subset \R^3$ let $\mathcal{C}_E:=\{C\subset E: C=C_{x,n} \text{ for some } x,n\}$. Then $\mathcal{L}^3(E)<\infty$, and  for every $x\in E$ and $\delta>0$ there is $C\in\mathcal{C}_E$ such that   $x\in C$  and $0<\operatorname{diam}(C)<\delta$  by (i), (ii). Hence due to (iii), the Vitali covering theorem by Lebesgue \cite{L1910} applies. Accordingly there are countably many mutually disjoint sets $C_j$ from $\mathcal{C}_E$ such that $\mathcal{L}^3(E\setminus \bigcup_jC_j)=0$.
 \\
\hspace*{6mm}
(b) Introduce the image measure  $S:=\varpi|_\Lambda(T)$  on $\R^3$ of $T$ on $\Lambda$, i.e.,  $S(E)= T(\varpi^{-1}(E)\cap \Lambda) $. Put $\mathcal{D}:=\{E\subset \R^3: E \text{ Borel, }  S(E)  \text{ projection operator} \}$. $\mathcal{D}$ is a Dynkin system. Indeed, $\R^3\in \mathcal{D}$. If $D, E \in\mathcal{D}$  with $D\subset E$ then $E\setminus D \in \mathcal{D}$ since 
 $S(E\setminus D)=  S(E) - S(D)$, where  $S(E), S(D)$ are projection operators with  $S(E) S(D) = S(D)$. And finally, if $E_j\in \mathcal{D}$, $j\in\N$ are mutually disjoint, then $\bigcup_jE_j\in\mathcal{D}$. Indeed, $S(\bigcup_jE_j)=\lim_{N\to \infty} S(\bigcup_{j=1}^N E_j)$ and  $ S(\bigcup_{j=1}^N E_j)=\sum_{j=1}^N S(E_j)$ is an projection operator, since   $S(E_j)$ are projection operators with $S(E_j)S(E_k)=0$ for $i\ne k$. Hence $  (S(\bigcup_{j=1}^N E_j))_N$  is an increasing sequence of projection operators, whence the claim by 
\cite[Theorem 4.32]{W76}.
 \\
\hspace*{6mm}
(c)  $\mathcal{O}:=\{E\subset \R^3: E \text{ open bounded}\}$ is $\cap$-stable. Therefore the Dynkin system $\mathcal{D}(\mathcal{O})$ generated by $\mathcal{O}$ is a $\sigma$-algebra. Hence  $\mathcal{D}(\mathcal{O})$ equals the set of Borel sets $\mathcal{B}(\R^3)$.
 \\
\hspace*{6mm}
(d) Now suppose that $S(C_{x,n})$ is a projection operator  for every $x, n$  and let $E\in \mathcal{O}$. Then by (a) and (b), $E=A\cup B$, where $A,B$ are disjoint Borel sets with $\mathcal{L}^3(A)=0$ and $S(B)$ a projection operator. Since $A=\varpi(\Delta)$ for $\Delta:=\varpi^{-1}(A)\cap\Lambda$, $T(\Delta)=0$ by the assumption on $T$. This means $S(A)=0$. Hence $S(E)=S(B)$ is a projection operator, whence $E\in\mathcal{D}$. So  $\mathcal{O}\subset \mathcal{D}$. It follows $\mathcal{D}(\mathcal{O})\subset \mathcal{D}$. By (c) this implies that $S$  is projection operator valued, whence the result.\qed

\begin{P8T}\label{P8T}  (a) Put $C_M:=\varpi^{-1}(\{x\in\R^3:|x|\le M\})$ for  $M\in\N$. Let $\Delta\in\mathcal{B}'$, which means that the bounded subsets of $\Delta\in\mathcal{B}^{ach}$ are $L$-achronal with constants $ L<1$. Define 
\begin{equation}
\tilde{T}(\Delta):=\lim_{M\to\infty}T(\Delta\cap C_M)\tag{*}
\end{equation}
 Note that the strong limit exists and defines a nonnegative operator $\le I$ by \cite[4.28(b)]{W76}. In case that $\Delta$ is  $L$-achronal, then evidently $\tilde{T}(\Delta)=T(\Delta)$, whence $\tilde{T}$ is an extension of $T$. Conversely, if  $\tilde{T}$ is a $\sigma$-additive  extension of $T$, then  $\tilde{T}$ satisfies (*) proving the uniqueness of a possible extension. Furthermore, since $T$ vanishes at null sets of $\mathcal{L}^3\circ \varpi$, so does $\tilde{T}$ due to (*).
 \\
\hspace*{6mm}
As to the $\sigma$-additivity of $\tilde{T}$ given by (*), let $\Delta_n\in\mathcal{B}'$, $n\in\N$, be mutually disjoint such that $\Delta\in\mathcal{B}'$ for $\Delta:=\bigcup_n\Delta_n$.
 \\
\hspace*{6mm}
Then $\sum_{n=1}^N \tilde{T}(\Delta_n)=\lim_{M\to\infty} \sum_{n=1}^N T(\Delta_n\cap C_M) = \lim_{M\to\infty}  T(\bigcup_{n=1}^N\Delta_n\cap C_M) = \tilde{T}(\bigcup_{n=1}^N\Delta_n)$, whence using \cite[4.28(b)]{W76} $\sum_{n=1}^\infty \tilde{T}(\Delta_n) =\lim_{N\to\infty} \tilde{T}(\bigcup_{n=1}^N\Delta_n)$. It remains to show that the foregoing limit equals $\tilde{T}(\Delta)$.
 \\
\hspace*{6mm}
For every state $\phi$ consider the   double sequence $(\langle \phi,T\big(\bigcup_{n=1}^N(\Delta_n) \cap C_M)\big) \phi\rangle)_{N,M}$. It is increasing. Hence the order of the limits $N\to \infty$ and $M\to \infty$ can be interchanged. One finds $\langle \phi, \tilde{T}(\Delta)\phi\rangle=
 \langle \phi,\lim_{N\to \infty} \tilde{T}(\bigcup_{n=1}^N\Delta_n)\phi\rangle$, whence the claim.
\\
\hspace*{6mm}
(b)  It suffices to show that $T$ is projection operator valued. Then the claim follows from  (*) by \cite[4.32]{W76}.
\\
\hspace*{6mm}
Let $\Lambda$ be maximal $L$-achronal with $\Delta\subset \Lambda$.  Let $\mathfrak{y}\in\Lambda$, $n\in\N$. For the definition of a pyramid see (\ref{GGIP}). Let $P_n$ be  the pyramid with vertex $(y_0-\frac{1}{n},y)$, bound $1< q< 1/L$ and steepness $1>\gamma> qL$.  Put $C_{y,n}:=\varpi(\Lambda \cap P_n)$. Clearly  $C_{y,n}$ is closed and $y\in C_{y,n}$. Obviously (\ref{MTR})(ii),(iii)  hold by (\ref{GCCIMAS}) with $C=\frac{6}{\pi}\big(\frac{1+L/\gamma}{1/q-L/\gamma}\big)^3$.
\\
\hspace*{6mm} 
According to  (\ref{MTR}) it suffices to show that $T(\Lambda_n) $    for $\Lambda_n:= \varpi^{-1}(C_{x,n})\cap\Lambda=\Lambda\cap P_n$ is a projection operator. By (\ref{GMMAS}), $\Lambda':=(\Lambda\setminus \Lambda_n) \cup \varSigma_n$ for $\varSigma_n:=\partial P_n \cap \Lambda^-$ is a maximal $\gamma$-achronal set. Evidently $\Lambda'\setminus \varSigma_n=\Lambda\setminus\Lambda_n$. So $I=T(\Lambda_n)+T(\Lambda\setminus\Lambda_n)$ and $I=T(\varSigma_n)+T(\Lambda'\setminus \varSigma_n)$ imply $T(\Lambda_n)=T(\varSigma_n)$. It remains to mention that $\partial P_n$ is the union of finitely many disjoint flat Borel sets.\qed

\end{P8T}

\begin{P7T}\label{P7T}
Every achronal set  is contained in a maximal achronal set $A$. So one is going to show that $T$ is projection operator valued on $A=\operatorname{graph}(\tau)$ with $\tau$ a $1$-Lipschitz function on $\R^3$. Again (\ref{MTR}) is decisive.
\\
\hspace*{6mm}
Let $\mathfrak{y} \in A$, $n \in \N$, $0<\gamma <1$. Put $\Lambda_n:= A_n\cup L_n$, where 
$A_n:=\{\mathfrak{x} \in A:\tau(x)\le y_0+\frac{1}{n}\}$ and $L_n:=\{\mathfrak{x}\in \R^4: x_0=y_0+\frac{1}{n}, \tau(x)>y_0+\frac{1}{n}\}$ are disjoint.
\\
\hspace*{6mm}
(a) $\Lambda_n$ is maximal achronal. Indeed, $\Lambda_n$ is the graph of the infimum of $\tau$ and the constant $y_0+\frac{1}{n}$, both $1$-Lipschitz functions on $\R^3$. Hence the claim holds by (\ref{INFLF}). So $T(\Lambda_n)=I$.
\\
\hspace*{6mm}
(b) The spacelike cone $\Gamma_n:=\{\mathfrak{x}:  x_0-y_0+\frac{1}{n} =\gamma|x-y|\}$ with vertex $(y_0-\frac{1}{n},y)$ and steepness $\gamma$ is maximal $\gamma$-achronal  
as it is the graph of the $\gamma$-Lipschitz function $x\mapsto \gamma|x-y|+y_0+\frac{1}{n}$ on $\R^3$. Hence, by (\ref{PLFPOL}), $T$ is projection operator valued on $\Gamma_n$.
\\
\hspace*{6mm} 
(c)  Let $\Lambda_n^-$ denote the closed lower half-space with boundary $\Lambda_n$ and $\Gamma_n^+$ the closed upper half-space with boundary $\Gamma_n$. The claim is that
\begin{equation}
 \Lambda'_n:=(\Lambda_n\setminus \Gamma_n^+) \cup (\Gamma_n\cap\Lambda_n^-)  \tag{*}
 \end{equation}
is maximal acronal.   First check $\varpi(\Lambda_n\cap \Gamma_n^+) =\varpi(\Gamma_n\cap\Lambda_n^-)$. Let $\mathfrak{x}\in \Lambda_n \cap \Gamma_n^+$. Then 
 there is $x_0'\le x_0$ such that $(x_0',x)\in\Gamma_n$. Hence $(x_0',x) \in \Gamma_n\cap\Lambda_n^-$. Conversely let  $\mathfrak{x}\in \Gamma_n\cap \Lambda_n^-$. Then there is $x_0'\ge x_0$ such that $(x_0',x)\in \Lambda_n \cap \Gamma_n^+$.
\\
\hspace*{6mm}
It follows $\varpi(\Lambda_n')=\R^3$.  So it remains to show that $\Lambda_n'$ is achronal \cite[(1)\,(f)]{C24}. $\Lambda_n$ and $\Gamma_n$ are $1$-Lipschitz.
So (\ref{PLLF})  applies to the union $\Lambda_n'=\overline{\Lambda_n\setminus \Gamma_n^+} \,\cup (\Gamma_n\cap\Lambda_n^-)$.
\\
\hspace*{6mm}
(d) Recall that $\Lambda_n'$ is maximal achronal. As the sets of the union  (*) are disjoint one has $I = T(\Lambda_n')= T(\Lambda_n\setminus \Gamma_n^+) + T(\Gamma_n\cap\Lambda_n^-)$. Since $I = T(\Lambda_n) = T(\Lambda_n\setminus \Gamma_n^+) +  T(\Lambda_n \cap \Gamma_n^+)$ it follows $T(\Gamma_n\cap\Lambda_n^-) = 
 T(\Lambda_n \cap \Gamma_n^+)$. Therefore $ T(\Lambda_n \cap \Gamma_n^+)$ is a projection operator by (b). Further  $ T(\Lambda_n \cap \Gamma_n^+)= T(A_n \cap \Gamma_n^+) + T(L_n \cap \Gamma_n^+)$ by (a). $L_n$ being flat, $T(L_n \cap \Gamma_n^+)$ a projection operator. One concludes that $T(A_n \cap \Gamma_n^+) $ is a projection operator.
\\
\hspace*{6mm}
(e)  $A_n \cap \Gamma_n^+$ is bounded and hence compact as closed. Indeed, put $\mathfrak{z}:=  \mathfrak{x}- \mathfrak{y}$.  Then  $\mathfrak{x}\in A_n \cap \Gamma_n^+$  means $ \mathfrak{x}=(\tau(z+y),z+y)$  and  $z_0\le \frac{1}{n}$, $z_0+\frac{1}{n} \ge \gamma|z|$. It follows $|z|\le\frac{2}{\gamma \,n}$ and $|z_0|\le \frac{1}{n}$.
\\
\hspace*{6mm}
(f) Therefore $C_{y,n}:=\varpi(A_n \cap \Gamma_n^+)$ is compact and according to (\ref{MTR}) it remains to show that  the sets  $C_{y,n}$ satisfy (\ref{MTR}) (i),(ii),(iii). Clearly $y\in C_{y,n}$ and  $\operatorname{diam}(C_{y,n})\le  \frac{4}{\gamma \,n}$. 
\\
\hspace*{6mm}
Turn to  (\ref{MTR})\,(iii). Let $z\in\R^3$ satisfy $|z|\le \frac{1/n}{1+\gamma}$. Put  $z_0 :=\tau(z+y) -\tau(y)$. Then $|z_0|\le|z|$ as $\mathfrak{y}\in A$ and  hence  $z_0\le \frac{1}{n}$,  $\gamma \,|z| \le z_0 +\frac{1}{n}$. Moreover $z_0+y_0=\tau(z+y)$.
Therefore, according to (e), $\mathfrak{x}:= (\tau(z+y),z+y)$ lies in $A_n \cap \Gamma_n^+$, whence $x\in C_{y,n}$. So  $C_{y,n}$ contains  the ball with center $y$ and radius 
$\frac{1/n}{1+\gamma}$. This implies  $\operatorname{diam}(C_{y,n})\ge \frac{2/n}{1+\gamma}>0$ and  (\ref{MTR})\,(iii) for $C=\frac{48}{\pi}\big(1+\frac{1}{\gamma}\big)^3$.\qed
\end{P7T}

 \section{Eigenvalues of $\i\Phi$}\label{EVP}

Here the eigenvalues of $\i\Phi_j(p)$ (\ref{RNWD}) are computed and compared with  the eigenvalues of $\i\Phi^e_j(p)$ (\ref{ECEE}).

\begin{Lem}\label{CEE} Let $p\in\R^3$, $j=1,2,3$, and write $\epsilon$ for $\epsilon(p)$. Put 
\begin{center}
$\lambda_j(p):=\sqrt{\frac{1}{2\epsilon(\epsilon+m)}-\frac{2\epsilon(\epsilon+m)+m^2}{4\epsilon^4(\epsilon+m)^2}p_j^2}$.
\end{center}
Then the eigenvalues of $\i\Phi_j(p)$ are $\pm\lambda_j(p)$.
Both eigenvalues are twofold degenerate. One estimates
\begin{center}
$\frac{m}{2\epsilon^2}\le \lambda_j(p)\le \sqrt{\frac{1}{2\epsilon(\epsilon+m)}} \le \min{ \big\{\frac{1}{2|p|}, \frac{1}{2m}\big\}}$
\end{center}
for all $p\in\R^3$, where $\lambda_j(p)=\frac{m}{2\epsilon^2}$ if $|p_j|=|p|$ and $\lambda_j(0)=\frac{1}{2m}$.
\end{Lem}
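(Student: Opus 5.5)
The plan is to compute the spectrum of $\i\Phi_j(p)$ through the square of the matrix. First I will check that $\i\Phi_j(p)$ is self-adjoint, which is consistent with $F$ being skew-adjoint in (\ref{RNWD}). Next I will show that it is traceless and, better still, that its square is a scalar multiple of $I_4$. To organise this, write $\Phi_j=c\left(\begin{smallmatrix} A & -B\\ B & A\end{smallmatrix}\right)$ with
\[
c=\frac{1}{2\epsilon(\epsilon+m)},\qquad A=-p_j+\slashed{p}\,\sigma_j,\qquad B=(\epsilon+m)\sigma_j-\frac{1}{\epsilon}p_j\,\slashed{p}.
\]
Then $\Phi_j^2=c^2\left(\begin{smallmatrix} A^2-B^2 & -(AB+BA)\\ AB+BA & A^2-B^2\end{smallmatrix}\right)$.

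The Pauli algebra gives $\slashed{p}\sigma_j+\sigma_j\slashed{p}=2p_j$, $\slashed{p}^2=p^2$ and $\sigma_j^2=I_2$. From these I expect:
\begin{itemize}
\item $A^2=-(p^2-p_j^2)I_2$, using $\operatorname{Tr}A=0$ and $\det A=p^2-p_j^2$ from the proof of (\ref{ECEE});
\item $B^2=\big((\epsilon+m)^2-2(\epsilon+m)p_j^2/\epsilon+p_j^2p^2/\epsilon^2\big)I_2$;
\item $AB+BA$ is a combination of $\sigma_j$ and $\slashed{p}$ whose coefficients cancel. Here $\slashed{p}\sigma_j\slashed{p}=2p_j\slashed{p}-p^2\sigma_j$ is the key identity.
\end{itemize}
If the cancellation holds, then $(\i\Phi_j)^2=-c^2(A^2-B^2)=c^2\big((p^2-p_j^2)+B^2\big)I_4$. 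Together with tracelessness of the diagonal blocks, this forces the eigenvalues $\pm\lambda$ with equal multiplicity two, and $\lambda^2$ equal to the scalar above. The remaining algebra uses $p^2=\epsilon^2-m^2=(\epsilon-m)(\epsilon+m)$ to simplify, aiming at
\[
\lambda_j^2=\frac{1}{2\epsilon(\epsilon+m)}-\frac{2\epsilon(\epsilon+m)+m^2}{4\epsilon^4(\epsilon+m)^2}\,p_j^2.
\]
The main obstacle is this step: verifying that $AB+BA$ vanishes and matching the constant exactly. I would first check the cases $p_j=0$ and $p=p_je_j$ as sanity tests.

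For the estimates, $\lambda_j(p)$ is decreasing in $p_j^2$ for fixed $\epsilon$. The maximum therefore occurs at $p_j=0$, giving $\lambda_j\le\sqrt{1/(2\epsilon(\epsilon+m))}$. The minimum occurs at $p_j^2=p^2=\epsilon^2-m^2$; substituting and simplifying should give $m^2/(4\epsilon^4)$, i.e. $\lambda_j=m/(2\epsilon^2)$. Finally, $2\epsilon(\epsilon+m)\ge 4p^2$ fails in general, so I will instead use $2\epsilon(\epsilon+m)\ge 2\epsilon^2\ge 2p^2$ and $2\epsilon(\epsilon+m)\ge 4m^2$. Since $\sqrt{1/(2p^2)}=1/(\sqrt2|p|)$, which exceeds $1/(2|p|)$, this route is too weak for the $1/(2|p|)$ half of the bound, and I would have to recheck that inequality. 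The value at $p=0$ is $1/(2m)$, as claimed.
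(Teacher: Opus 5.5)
Your approach is the paper's: show that $\i\Phi_j(p)$ is traceless and that $(\i\Phi_j(p))^2$ is a scalar matrix, then substitute $p^2=\epsilon^2-m^2$.

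The step you single out as the main obstacle does go through. Using $\slashed{p}\sigma_j\slashed{p}=2p_j\slashed{p}-p^2\sigma_j$ and $\sigma_j\slashed{p}\sigma_j=2p_j\sigma_j-\slashed{p}$ one finds
\[
AB=\Big(\frac{p_jp^2}{\epsilon}-p_j(\epsilon+m)\Big)\sigma_j+\Big(\epsilon+m-\frac{p_j^2}{\epsilon}\Big)\slashed{p}=-BA .
\]
Moreover $B^2=(\epsilon+m)^2\big(1-p_j^2/\epsilon^2\big)I_2$, so that
\[
\lambda_j(p)^2=\frac{p^2-p_j^2}{4\epsilon^2(\epsilon+m)^2}+\frac{\epsilon^2-p_j^2}{4\epsilon^4}.
\]
This is the paper's formula (*), and it simplifies to the stated expression. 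For the twofold degeneracy you need $\lambda_j(p)>0$; your lower bound $m/(2\epsilon^2)$ supplies this. Your arguments for $m/(2\epsilon^2)\le\lambda_j(p)\le\sqrt{1/(2\epsilon(\epsilon+m))}\le 1/(2m)$ are fine, as are those for the values at $|p_j|=|p|$ and at $p=0$.

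The one inequality you left open cannot be proved, because it is false. The bound $\sqrt{1/(2\epsilon(\epsilon+m))}\le 1/(2|p|)$ is equivalent to $2p^2\le\epsilon(\epsilon+m)$. That is, $(\epsilon-2m)(\epsilon+m)\le 0$, i.e.\ $|p|\le\sqrt{3}\,m$.

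Whenever $p_j=0$ one has $\lambda_j(p)=\sqrt{1/(2\epsilon(\epsilon+m))}$. So even $\lambda_j(p)\le 1/(2|p|)$ fails at every such $p$ with $|p|>\sqrt{3}\,m$. Along $p_j=0$ one has $|p|\,\lambda_j(p)\to 1/\sqrt{2}$ as $|p|\to\infty$.

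The correct bound is the one your estimate $2\epsilon(\epsilon+m)\ge 2\epsilon^2\ge 2p^2$ gives: $\lambda_j(p)\le\min\{1/(\sqrt{2}\,|p|),\,1/(2m)\}$. The bound $1/(2|p|)$ holds only for $|p|\le\sqrt{3}\,m$.

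The paper's proof only says to use $0\le p_j^2\le\epsilon^2-m^2$, which does not justify the $1/(2|p|)$ bound either. The analogous bound $\lambda_j^e(p)\le 1/(2|p|)$ for the electron block is correct. The subsequent corollary comparing $\lambda_j$ and $\lambda_j^e$ uses only $\lambda_j\le 1/(2m)$, so nothing else in the paper depends on the faulty inequality.
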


{\it Proof.} First one checks that the trace of $\i\Phi_j(p)$ is zero. Hence the sum (with multiplicities) of the eigenvalues is zero. Next one finds
\begin{itemize}
\item[] $(\i\Phi_j(p))^2 =\frac{1}{4\epsilon^2(\epsilon+m)^2}\big( p^2-p_j^2+(\epsilon+m)^2+\frac{1}{\epsilon^2}p_j^2p^2- 2\frac{\epsilon+m}{\epsilon}p_j^2\big)\,I_4$\hfill{(*)}
\end{itemize}
Substitute $p^2$ by $\epsilon^2-m^2$.  The result  on the eigenvalues follows. Regarding the estimations use  $p_j^2\ge 0$ and  $p_j^2\le \epsilon^2-m^2$.
\qed

\begin{Cor} One has
\begin{itemize}
\item $ \lambda_j(p)^2- \lambda_j^e(p)^2 = \frac{1-p_j^2/\epsilon^2}{4\epsilon^2} $
\item $(1-p_j^2/\epsilon^2)\frac{m}{4\epsilon^2} \le  \lambda_j(p)- \lambda_j^e(p) \le (1-p_j^2/\epsilon^2)\frac{1}{2m}$
\end{itemize}
In particular, $\lambda_j^e(p)$ is smaller than  $\lambda_j(p)$ and their difference is smaller than the Compton wavelength. 
\end{Cor}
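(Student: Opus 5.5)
The plan is a direct algebraic verification starting from the explicit formula for $\lambda_j(p)$ in (\ref{CEE}) and the formula for $\lambda_j^e(p)$ in (\ref{ECEE}). Throughout, I will use the relation $p^2=\epsilon^2-m^2=(\epsilon-m)(\epsilon+m)$.

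For the first identity, I will write both squares over the common denominator $4\epsilon^4(\epsilon+m)^2$. Then $\lambda_j(p)^2$ contributes the numerator $2\epsilon^3(\epsilon+m)-(2\epsilon(\epsilon+m)+m^2)p_j^2$, and $\lambda_j^e(p)^2=\frac{p^2-p_j^2}{4\epsilon^2(\epsilon+m)^2}$ contributes $\epsilon^2(p^2-p_j^2)$. I will collect the terms free of $p_j$ and substitute for $p^2$:
\[
2\epsilon^3(\epsilon+m)-\epsilon^2(\epsilon-m)(\epsilon+m)=\epsilon^2(\epsilon+m)^2 .
\]
The coefficient of $p_j^2$ collapses in the same way:
\[
-(2\epsilon^2+2\epsilon m+m^2)+\epsilon^2=-(\epsilon+m)^2 .
\]
The numerator is therefore $(\epsilon+m)^2(\epsilon^2-p_j^2)$. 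After cancelling $(\epsilon+m)^2$ this gives $\frac{\epsilon^2-p_j^2}{4\epsilon^4}=\frac{1-p_j^2/\epsilon^2}{4\epsilon^2}$, which is the first claim.

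For the two-sided bound, I will write
\[
\lambda_j-\lambda_j^e=\frac{\lambda_j^2-(\lambda_j^e)^2}{\lambda_j+\lambda_j^e}.
\]
This is legitimate because $\lambda_j\ge \frac{m}{2\epsilon^2}>0$ by (\ref{CEE}). The denominator is then bounded in both directions.

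For the lower bound on the difference, I need an upper bound on the denominator. Since $\lambda_j^e\le\lambda_j$ and $\lambda_j\le\frac{1}{2m}$ by (\ref{CEE}), I get $\lambda_j+\lambda_j^e\le\frac1m$. This yields $\lambda_j-\lambda_j^e\ge(1-p_j^2/\epsilon^2)\frac{m}{4\epsilon^2}$.

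The inequality $\lambda_j^e\le\lambda_j$ used there follows from the first identity itself. Indeed, $1-p_j^2/\epsilon^2\ge m^2/\epsilon^2>0$, so $\lambda_j^2>(\lambda_j^e)^2$.

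For the upper bound on the difference, I need a lower bound on the denominator: $\lambda_j+\lambda_j^e\ge\lambda_j\ge\frac{m}{2\epsilon^2}$. This gives $\lambda_j-\lambda_j^e\le\frac{1-p_j^2/\epsilon^2}{4\epsilon^2}\cdot\frac{2\epsilon^2}{m}=(1-p_j^2/\epsilon^2)\frac{1}{2m}$.

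The final remark then follows immediately. The strict inequality $\lambda_j^e<\lambda_j$ has just been shown. The difference is at most $\frac{1}{2m}$, which is less than the Compton wavelength $\frac1m$ (in units $\hbar=c=1$).

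The only step requiring care is the numerator simplification in the first identity, which relies on substituting $p^2=(\epsilon-m)(\epsilon+m)$. Everything else is elementary.
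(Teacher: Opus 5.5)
Your proof is correct and follows the paper's argument: the first identity comes from the explicit formula for $\lambda_j(p)^2$ in (\ref{CEE}), and your reduction of the numerator via $p^2=(\epsilon-m)(\epsilon+m)$ checks out; the two-sided bound comes from writing $\lambda_j-\lambda_j^e=(\lambda_j^2-(\lambda_j^e)^2)/(\lambda_j+\lambda_j^e)$ and using $\frac{m}{2\epsilon^2}\le\lambda_j+\lambda_j^e\le\frac{1}{m}$, exactly as in the paper. You also rely only on the bounds $\frac{m}{2\epsilon^2}\le\lambda_j(p)\le\frac{1}{2m}$ from (\ref{CEE}), not on the bound by $\frac{1}{2|p|}$ also stated there, and that is just as well, since that bound fails when $\epsilon>2m$ (for instance with $p_j=0$).
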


{\it Proof.} The first item follows from (*) in (\ref{CEE}). The second one follows from the first using the estimations in (\ref{CEE}) and 
$ \frac{m}{2\epsilon^2}\le \lambda_j(p) +\lambda_j^e(p) \le 2\frac{1}{2 m}$.\qed


\begin{thebibliography}{9}


\bibitem{A69} Amrein, W.O.: {\it Localizability for Particles of Mass Zero}, Helv. Phys. Acta \textbf{42}, 149-190 (1969)

\bibitem{A09} Araki, H.: {\it Mathematical Theory of Quantum Fields}, Oxford University Press, (2009)

\bibitem{BK03} Barat, N.,  Kimball J.C.: {\it Localization and Causality for a Free Particle}, Physics Letters A \textbf{308}, 110-115  (2003)

\bibitem{BFM05} Bracken, A.J.,  Flohr, J.A., Melloy, G.F.:  {\it Time-evolution of highly localized positive-energy states of the free Dirac electron}. Proc. R. Soc. A  \textbf{461}, 3633-3645  (2005) 

\bibitem{BM99}  Bracken, A.J., Melloy, G.F.:   {\it Localizing the Relativistic Electron}, J. Phys. A: Math. Gen. \textbf{32},  6127-6139 (1999)

\bibitem{BY94} Buchholz, D., Yngvason, J.: {\it There are no causality problems for Fermi's two atom system}, Phys. Rev. Lett. {\bf 73}, 613-616 (1994)


 \bibitem{BT26} 	B\"{u}rck, I., Tumulka, R.: {\it Dirac Wave Functions of Positive Energy with Arbitrarily Small Position Uncertainty},
 https://doi.org/10.48550/arXiv.2603.04569

 \bibitem{Buschloc1}  Busch, P. and Singh, J. , {\em L\"uders theorem for unsharp quantum measurements.
Physics Letters A 249, 10--12  (1998)

\bibitem{Buschloc2}  Busch, P., {\em Unsharp localization and causality in relativistic quantum theory.} J.
Phys. A: Math. Gen. \textbf{32}, 6535-6546 (1999)}



 \bibitem{BGL97}  Busch, P.,  Grabowski, M., Lahti,  P. J.: {\it Operational Quantum Theory}, Springer-Verlag, Berlin 1997
 


\bibitem{C81} Castrigiano,  D.P.L.:   {\it On Euclidean Systems of Covariance for Massless Particles}, Lett. Math. Phys. \textbf{5}, 303-309 (1981) 



\bibitem{CM82} Castrigiano, D.P.L., Mutze, U.: {\it Covariant description of particle position}, Phys. Rev. D \textbf{26}, 3499-3505 (1982)

 

\bibitem{CL15} Castrigiano, D.P.L.,  Leiseifer, A.D.: {\it Causal Localizations in Relativistic Quantum Mechanics}.  J. Math. Phys. \textbf{56}, 072301 (2015)

\bibitem{C17} Castrigiano, D.P.L.: {\it Dirac and Weyl Fermions - the Only Causal Systems}. arXiv: 1711.06556 (2021)


\bibitem{C23} Castrigiano, D.P.L.: {\it  Causal localizations of the massive scalar boson.} Lett. Math. Phys. \textbf{114}, 2 (2024). https://doi.org/10.1007/s11005-023-01746-z


\bibitem{C24} Castrigiano, D.P.L.: {\it Achronal localization, representations of the causal logic for massive systems.} Letters in Mathematical Physics (2025) 115:25, https://doi.org/10.1007/s11005-025-01911-6


\bibitem{C25} Castrigiano, D.P.L.: {\it Localization of the massive scalar boson on achronal hyperplanes, derivation of Lorentz contraction.} Journal of Mathematical Physics \textbf{66}, 7 (2025). https://doi.org/10.1063/5.0270153



\bibitem{CDM25} Castrigiano, D.P.L., De Rosa, C., Moretti, V.: {\itshape Achronal Localization and Representation of the Causal Logic from a Conserved Current with an Application to Massive Scalar Boson.} Ann. Henri Poincar\'e (2026) https://doi.org/10.1007/s00023-026-01668-1








\bibitem{CJ77} Cegla, W.,  Jadczyk, A.Z.: {\it Causal Logic of Minkowski Space}, Commun. math. Phys. \textbf{57}, 213-217 (1977)








\bibitem{DM24} De Rosa, C., Moretti, V.: {\it Quantum particle localization observables on Cauchy surfaces of Minkowski spacetime and their causal properties.} Letters in Mathematical Physics (2024) 114:72 https://doi.org/10.1007/s11005-024-01817-9

\bibitem{DP00} Dvurečenskij, A.   and  Pulmannová S.:
{\em New Trends in Quantum Structures}, Kluwer 2000.




\bibitem{GGP67} Gerlach, B., Gromes D.,  Petzold J.: {\it  Konstruktion definiter Ausdr\"ucke
f\"ur die Teilchendichte des Klein-Gordon-Feldes}.  Z. Phys. \textbf{204}, 1-11 (1967)

\bibitem{G90} Greiner, W.: {\it Relativistic Quantum Mechanics: Wave Equations}. Springer, Berlin Heidelberg 1990, ISBN 978-3-642-88082-7.



\bibitem{H96} Haag, R.: {\it Local Quantum Physics}, 2nd Ed. Springer-Verlag (1996)



\bibitem{HC01} Halvorson, H., Clifton, B.: {\it No place for particles in relativistic quantum theories?},  arXiv:quant-ph/0103041v1 (2001)

\bibitem{HR80} Hegerfeldt, G.C., Ruijsenaars, N.M.: {\it Remarks on causality, localization, and spreading of wave packets}, Phys. Rev. D \textbf{22}, 377-384 (1980)


\bibitem{H85} Hegerfeldt, G.C.: {\it Violation of Causality in Relativistic Quantum Theory?}, Phys. Rev. Lett. \textbf{54}, 2395-2398 (1985)


\bibitem{H94} Hegerfeldt G.C.: {\it Causality Problems for Fermi's Two-Atom System}, Phys. Rev. Lett. \textbf{72}, 596-599 (1994)

\bibitem{H01}  Hegerfeldt G.C.: {\it Particle Localization and the Notion of Einstein Causality} in {\it Extensions of Quantum Theory}, Eds. A. Horzela, E. Kapuscik, published by Aperion, Montreal 9-16 (2001)














\bibitem{L1910} Lebesgue, H.: {\itshape Sur l'int\'egration des fonctions discontinues}, Annales Scientifiques de l'\'Ecole
Normale Sup\'erieure \textbf{27}, 361-450, (1910) doi:10.24033/asens.624


\bibitem{LO26} Lechner, G., de Oliveira,  I.R.: {\it Causal quantum-mechanical localization observables in lattices of real projections}, 	
https://doi.org/10.48550/arXiv.2602.11392 (2026)

\bibitem{M12} Maggi, F.: {\it Set of finite perimeter and geometric variational problems.} Cambridge University  Press, New York 2012



\bibitem{M02} Melloy, G.F.: {\it The Generalized Representation of Particle Localization in Quantum Mechanics},
Foundations of Physics \textbf{32}, 503-530  (2002)



\bibitem{M23} Moretti, V.: {\it On the relativistic spatial localization for massive real scalar Klein-Gordon quantum particles}. Lett. Math. Phys. \textbf{113}, 66 (2023). https://doi.org/10.1007/s11005-023-01689-5



\bibitem{M26a} Moretti, V.: Spatial Localization of Relativistic Quantum Systems: The Commutativity Requirement and the Locality Principle. Part I: A General Analysis,  	
https://doi.org/10.48550/arXiv.2604.03729 (2026)

 \bibitem{M26b} Moretti, V.: {\em  Spatial Localization of Relativistic Quantum Systems: The Commutativity Requirement and the Locality Principle. Part II: A Model from Local QFT} 	Lett. Math. Phys. (2026) in press\\
https://doi.org/10.48550/arXiv.2604.04173 (2026)









\bibitem{NW49}  Newton,  T.D., Wigner E.P.: {\it Localized States for Elementary Systems}, Rev. Mod. Phys. \textbf{21}, 400-406 (1949)



\bibitem{P77} Pourciau, B.H.: {\itshape   Analysis and Optimization of Lipschitz Continuous Mappings.} JOTA \textbf{22}, 3, 311-351 (1977)


\bibitem{RS79} Reed, M., Simon, B.: {\it Methods of Modern Mathematical Physics}. Academic Press, New York (1979)

\bibitem{S11} Schroer, B.: {\it Causality and Dispersion Relations and the Role of the S-matrix in the ongoing Research}, CBPF-NF-004/11 (2011)





  


\bibitem{T14} Terno, D.R.: {\it Localization of relativistic particles and uncertainty relations}. Phys.
Rev. A \textbf{89}, 042111 (2014)



 \bibitem{T92} Thaller, B.: {\it The Dirac Equation}, Springer-Verlag, Berlin 1992

 \bibitem{WM63} Weidlich, W., Mitra,  A.K.: {\it Some Remarks on the Position Operator in Irreducible Representations of the Lorentz-Group}. Nuovo Cim. \textbf{30}, 385-398 (1963)

\bibitem{W76} Weidmann, J.: {\it Linear Operators in Hilbert spaces}, Springer 1980




\bibitem{W62} Wightman,  A.S.: {\it On the Localizability of Quantum Mechanical Systems}, Rev. Mod. Phys. \textbf{34}, 845-872 (1962)


\bibitem{WS55} Wightman,A.S.,  Schweber, S.S.: {\it Configuration Space Methods in Relativistic Quantum Field Theory I}, Phys. Rev. \textbf{98},  812-837 (1955), and references therein





\end{thebibliography}
\end{document}